\newcolumntype{L}[1]{>{\raggedright\arraybackslash}p{#1}}
\newcolumntype{C}[1]{>{\centering\arraybackslash}m{#1}}
\newcolumntype{R}[1]{>{\raggedleft\arraybackslash}p{#1}}
\newcommand{\TODO}[1]{\typeout{TODO: \the\inputlineno: #1}\textbf{{\color{red}[[[ #1 ]]]}}}
\newcommand{\Couple}{\mathsf{Couple}}
\newcommand{\unpin}[1]{{#1}^{\+O}}
\newcommand{\cp}{\textnormal{cp}}
\newcommand{\mg}{\textnormal{mg}}
\newcommand{\samp}{\textnormal{samp}}
\newcommand{\lp}{\textnormal{lp}}
\newcommand{\lpone}{\textnormal{lp1}}
\newcommand{\lptwo}{\textnormal{lp2}}
\newcommand{\coup}{\textnormal{coup}}
\newcommand{\trun}{\textnormal{trun}}
\newcommand{\vio}[1]{{\textnormal{\texttt{False}}(#1)}}
\DeclareMathOperator\supp{supp}
\newcommand{\vbl}{\textnormal{\textsf{vbl}}}
\newcommand{\dtv}{d_{\rm TV}}
\def\P{\mathop{\mathbf{Pr}}\nolimits}
\newcommand{\True}{\mathtt{True}}
\newcommand{\False}{\mathtt{False}}
 \renewcommand{\P}{\mathbf{P}}
\newcommand{\abs}[1]{\left\vert#1\right\vert}
\newtheorem{theorem}{Theorem}[section]
\newtheorem{claim}[theorem]{Claim}
\newtheorem*{claim*}{Claim}
\newtheorem{condition}[theorem]{Condition}
\newtheorem{lemma}[theorem]{Lemma}
\newtheorem{proposition}[theorem]{Proposition}
\newtheorem{corollary}[theorem]{Corollary}
\theoremstyle{definition}
\newtheorem{Condition}{Condition}
\newtheorem{definition}[theorem]{Definition}
\newtheorem{remark}[theorem]{Remark}
\newtheorem*{remark*}{Remark}
\renewcommand{\Pr}[2][]{ \ifthenelse{\isempty{#1}}
  {\mathop{\mathbf{Pr}}\left[#2\right]} {\mathop{\mathbf{Pr}}\limits_{#1}\left[#2\right]} }
  \renewcommand{\P}[2][]{ \ifthenelse{\isempty{#1}}
  {\mathop{\+P}\left[#2\right]} {\mathop{\+P}\limits_{#1}\left[#2\right]} }
\newcommand{\E}[2][]{ \ifthenelse{\isempty{#1}}
  {\mathop{\mathbf{E}}\left[#2\right]}
  {\mathop{\mathbf{E}}\limits_{#1}\left[#2\right]} }
\newcommand{\poly}{{\rm poly}}  
\def\^#1{\mathbb{#1}} 
\def\*#1{\mathfrak{#1}} 
\def\+#1{\mathcal{#1}} 
\def\-#1{\mathrm{#1}} 
\def\=#1{\boldsymbol{#1}} 
\newcommand{\defeq}{\triangleq}
\title{A Sampling Lov\'{a}sz Local Lemma for Large Domain Sizes}
  \author{Author(s)}
\author{Chunyang Wang, Yitong Yin}
\address[Chunyang Wang, Yitong Yin]{State Key Laboratory for Novel Software Technology, New Cornerstone Science Laboratory, Nanjing University, 163 Xianlin Avenue, Nanjing, Jiangsu Province, 210023, China. \textnormal{E-mail: \url{wcysai@smail.nju.edu.cn}, \url{yinyt@nju.edu.cn}}}
\begin{document}


\allowdisplaybreaks

\begin{abstract}
We present polynomial-time algorithms for approximate counting and sampling solutions to constraint satisfaction problems (CSPs) with atomic constraints within the local lemma regime: 
\[
pD^{2+o_q(1)}\lesssim 1.
\]
When the domain size $q$ of each variable becomes sufficiently large, this almost matches the known lower bound  $pD^2\gtrsim 1$ for approximate counting and sampling solutions to atomic CSPs~\cite{BGG19,galanis2021inapproximability}, 
thus establishing an almost tight sampling Lov\'{a}sz local lemma for large domain sizes.

\end{abstract}

\maketitle
\setcounter{tocdepth}{1}
\tableofcontents

\newpage
\section{Introduction}\label{sec:intro}

Constraint satisfaction problems (CSPs) are ubiquitous in Computer Science,
and their solution spaces have been a subject of great research interest. 
A CSP is represented by a collection of constraints defined on a set of variables, where a solution is an assignment of variables such that all constraints are satisfied. 
A cornerstone tool in studying CSP solution spaces is the  \emph{Lov\'{a}sz local lemma}~\cite{LocalLemma}, which establishes the following sufficient condition for the existence of a CSP solution by interpreting the space of assignments as a product probability space and the violation of each constraint as a bad event:
\begin{equation}\label{eq:LLL-condition}
\mathrm{e}p(D+1)\leq 1,
\end{equation}
where $p$ stands for the maximum \emph{violation probability} of each constraint, 
and $D$ stands for the \emph{dependency degree}, given by the maximum number of other constraints that a constraint can share variables with. 
This condition~\eqref{eq:LLL-condition} was later shown to be essentially tight~\cite{She85}. Subsequent work on the \emph{algorithmic Lov\'{a}sz local lemma} seeks to constructively find a CSP solution by efficient algorithms. 
This has led to a long line of research~\cite{beck1991algorithmic,Alon91,MR98,CS00,Sri08,moser2009constructive,moser2010constructive}, culminating in algorithms for efficiently finding a CSP solution up to the condition in \eqref{eq:LLL-condition}. 
Together, these contributions establish a sharp threshold for the existence/construction of CSP solutions.

On the other hand, a considerable amount of work has been focused on the \emph{counting/sampling Lov\'{a}sz local lemma} ~\cite{BGG19,HSZ19,Moi19,guo2019counting,FGYZ20,feng2021sampling,Vishesh21sampling,Vishesh21towards,HSW21,galanis2021inapproximability,qiu2022perfect,feng2022improved,he2022sampling,he2022counting,qiu2024inapproximability}, 
which aims to characterize a local lemma type regime under which the problem of (approximately) counting or (almost uniformly) sampling CSP solutions is tractable. 
Hardness results in ~\cite{BGG19,galanis2021inapproximability} have shown that this counting/sampling variant of LLL requires a strictly stronger condition  $pD^2\lesssim 1$, where $\lesssim$ hides lower-order factors and constants.
This holds true even when restricted to some canonical sub-classes of CSPs, such as $k$-CNFs or hypergraph colorings.
Regarding upper bounds, the current state-of-the-art~\cite{he2022counting} shows that counting/sampling CSP solutions is efficiently solvable under the condition $pD^5\lesssim 1$. 
However, the correct threshold for the counting/sampling LLL is not yet clear.
The following question is fundamental to our understanding of the critical phenomenon for counting and sampling CSP solutions:

\begin{center}
     \emph{   Is $pD^2\lesssim 1$ the correct threshold for the counting/sampling Lov\'{a}sz local lemma?}
\end{center}


Despite numerous works on the topic and successive improvements on the algorithmic threshold, current techniques have encountered barriers towards closing such gaps due to their reliance on a ``freezing" paradigm.
Originally introduced by Beck~\cite{beck1991algorithmic} to handle the non-self-reducibility of the local lemma regime, the ``freezing" paradigm ensures that a local lemma type condition is invariantly satisfied under arbitrary pinnings produced in the process of incrementally constructing a satisfying CSP solution, which imposes slackness in the local lemma regime.
As a result, the best upper bound obtained using this paradigm was $pD^4\lesssim 1$~\cite{Sri08,Alon91,MR98}, 
where the extra $D^3$ factor came from the use of a structure called $\{2,3\}$-tree.
%

In the realm of the counting and sampling Lov\'{a}sz local lemma, Beck's technique is highly prevalent. 
Moitra's seminal work~\cite{Moi19} on counting $k$-SAT solutions introduced the method of ``mark/unmark'',
where a local lemma type condition is preserved under arbitrary pinning of marked variables.
%
This ``mark/unmark'' method also introduced slackness in the local lemma regime, and can be viewed as a static version of the ``freezing'' paradigm.
Subsequently, this static version of ``freezing'' was generalized to non-static settings~\cite{guo2019counting,Vishesh21towards,he2022sampling,he2022counting,he2023improved},
and also enabled the development of several Markov chain Monte Carlo approaches~\cite{FGYZ20,feng2021sampling,Vishesh21sampling,HSW21,galanis2022fast,chen2023from} for fast sampling.
To this day, the current best regime $pD^5\lesssim 1$ for sampling general CSPs still relies on the idea of ``freezing''. 
The slackness in this case arises from the use of a structure called generalized $\{2,3\}$-trees~\cite{he2022sampling,he2022counting}, 
which is similar to the source of slackness in~\cite{Sri08}. 

These advancements suggest a barrier in the current approach to counting and sampling LLL. 
Novel techniques are required to establish a tight counting and sampling Lov\'asz local lemma.


\subsection{Our results}

In this work, we establish the tractability of approximate counting and sampling solutions to atomic constraint satisfaction problems within an improved local lemma regime. 
As the domain size of each variable increases, this condition approaches $pD^2\lesssim 1$, 
providing a positive answer the major open question on counting and sampling LLL, 
particularly for large domain sizes.

A CSP instance (or a CSP formula) is denoted by $\Phi=(V,\+Q,\+C)$. Here, $V$ is a set of $n=|V|$ variables; each variable $v\in V$ is endowed with 
a  finite domain $Q_v$, altogether $\+Q\triangleq\bigotimes_{v\in V}Q_v$; and $\+C$ is a set of constraints, 
where each constraint $c\in \+C$ is a function $c:\bigotimes_{v\in \vbl(c)}Q_v\to\{\True,\False\}$ defined on a subset of variables, denoted by $\vbl(c)\subseteq V$.
A CSP formula $\Phi=(V,\+Q,\+C)$ is called \emph{atomic} if each constraint $c\in C$ is violated by exactly one assignment in the domain of $c$, i.e., $|c^{-1}(\False)|=1$. 
The atomicity of constraints is typical for many classical constraint satisfaction problems, including $k$-CNF and hypergraph colorings\footnote{For hypergraph $q$-colorings, each hyperedge may correspond to $q$ atomic constraints, each of which forbids the edge to be colored monochromatically with one particular color.}, and is a natural assumption in previous studies of LLL~\cite{achlioptas2014random,HV15,kolmogorov2016commutativity,harris2017algorithmic,HS17,HS19,AIS19,harris2021oblivious,feng2021sampling,Vishesh21sampling,HSW21}.

Given a CSP formula $\Phi=(V,\+Q,\+C)$, we use $\+P$ to denote the uniform (product) distribution over all possible assignments in $\+Q$.
The following parameters $\Phi$ can be defined:
\begin{itemize}
    \item \emph{width} $k\triangleq \max\limits_{c\in \+C}\abs{ {\vbl}(c)}$;
    \item \emph{minimum domain size} $q_{\min}\triangleq\min\limits_{v\in V}\abs{Q_v}$;
     \item \emph{maximum domain size} $q_{\max}\triangleq\max\limits_{v\in V}\abs{Q_v}$;
    \item \emph{dependency degree} $D\triangleq\max_{c\in C}\left|\{c'\in \+C\setminus \{c\}\mid \vbl(c)\cap \vbl(c')\neq \emptyset\}\right|$;
    \item \emph{maximum violation probability} $p\triangleq\max\limits_{c\in \+C}\+P[\neg c]$.
\end{itemize}

The following condition characterizes the local lemma regime achieved in this paper.

\begin{Condition}\label{condition:main-condition}
 $\Phi=(V,\+Q,\+C)$ is an atomic CSP formula satisfying
    \begin{equation}\label{eq:local-lemma-condition}
    (8\mathrm{e})^{3}\cdot p\cdot (D+1)^{2+\zeta}\leq 1,
\end{equation}
where
\[
\zeta=\frac{2\ln(2-1/q_{\min})}{\ln q_{\min}-\ln(2-1/q_{\min})}.
\]
\end{Condition}


Our main results are the following algorithms for counting/sampling LLL under \Cref{condition:main-condition}.

\begin{theorem}[counting LLL]\label{theorem:main-counting}
Assume \Cref{condition:main-condition} for $\Phi=(V,\+Q,\+C)$.
There exists a deterministic algorithm that, 
given any $\Phi$ and any $\varepsilon\in (0,1)$, 
outputs an estimate $\hat{Z}$ such that
\[
(1-\varepsilon)Z_{\Phi}\leq \hat{Z}\leq (1+\varepsilon)Z_{\Phi}
\]
within time $O\left(\left(\frac{n}{\varepsilon}\right)^{\poly(k,D,\log q_{\max})}\right)$, where $Z_{\Phi}$ represents the number of solutions to $\Phi$.
\end{theorem}

\begin{theorem}[sampling LLL]\label{theorem:main-sampling}
Assume \Cref{condition:main-condition} for $\Phi=(V,\+Q,\+C)$.
There exists an algorithm that, given any $\Phi$ and any $\varepsilon\in (0,1)$, 
outputs a random assignment $X\in \+Q$ distributed as $\hat{\mu}$ such that
    \[
    \dtv(\hat{\mu},\mu_{\Phi})\leq \varepsilon
    \]
within time $O\left(\left(\frac{n}{\varepsilon}\right)^{\poly(k,D,\log q_{\max})}\right)$, where $\mu_{\Phi}$ represents the uniform distribution over all solutions to $\Phi$.
\end{theorem}

\begin{remark}[local lemma regime]
The $\zeta=\zeta(q)$ in \Cref{condition:main-condition} decreases monotonically as $q$ increases.
In particular, as $q$ tends to infinity, $\zeta$ approaches 0, leading \Cref{condition:main-condition} to approach $pD^{2}\lesssim 1$, which matches the $pD^{2}\gtrsim 1$ lower bound~\cite{BGG19,galanis2021inapproximability}.
Conversely, with $q=2$ (the Boolean domain), $\zeta$~achieves its maximum value of $\log_{4/3}(9/4)\approx 2.82$.
In this case, \Cref{condition:main-condition} becomes $pD^{4.82}\lesssim 1$, improving upon the previous best upper bound of $pD^5\lesssim 1$ for counting/sampling LLL~\cite{he2022counting}.
%
%
%
\end{remark}

\begin{remark}[non-uniform width]
\Cref{condition:main-condition} does not involve the width $k$, 
making it applicable to CSP formulas with non-uniform widths.
This aspect is particularly desirable from the LLL perspective.
Previously, such generality was only attained in  \cite{feng2021sampling} under the local lemma condition $pD^{350}\lesssim 1$.
\end{remark}

\begin{remark}[time complexity]
    The $O\left(\left(\frac{n}{\varepsilon}\right)^{\poly(k,D,\log q)}\right)$ time complexities of \Cref{theorem:main-counting,theorem:main-sampling} align with  previous results for deterministic counting LLL~\cite{Moi19,guo2019counting,Vishesh21towards,he2022counting,feng2023towards}.
In fact, such complexity bounds with $k,D$ in the exponents appear to be intrinsic to deterministic approximate counting to this day, which relies on exhaustive enumerations of local structures.
\end{remark}

We then apply our results to two typical classes of atomic CSPs: hypergraph $q$-colorings and $k$-CNF.
\subsubsection{Application to hypergraph colorings}

Let $H=(V,\+E)$ be a $k$-uniform hypergraph, where $|e|=k$ for all $e\in \+E$. A proper hypergraph $q$-coloring $X\in [q]^V$ assigns one of the $q$ colors to each $v\in V$, ensuring no edge is monochromatic. We denote $\Delta$ as the maximum degree of the hypergraph, i.e., each vertex belongs to at most $\Delta$ hyperedges. Hypergraph colorings are foundational combinatorial objects that have been a key to the discovery of the Lov\'{a}sz Local lemma~\cite{LocalLemma}. The local lemma was initially devised to show that a proper $q$-coloring exists if $q\geq C\Delta^{\frac{1}{k}}$, where $C$ is a sufficiently large constant.

For the problem of approximate counting/sampling proper hypergraph $q$-colorings, the standard local Markov chains are efficient for sampling when $q\geq C\Delta$ for some constant $C$~\cite{Bordewich06stoppingtimes, Bordewich08pathcoupling}, which is also necessary for the irreducibility of local Markov chains~\cite{frieze2011randomly}. 
For \emph{simple} hypergraphs, where any two distinct hyperedges share at most one vertex, it has been shown that local Markov chains rapidly mix when $q\geq \max\{C_k\log n,500k^3\Delta^{\frac{1}{k-3}}\}$~\cite{frieze2011randomly,frieze2017randomly}, for hypergraphs with $n$ vertices. 
The first result for efficiently approximate counting/sampling hypergraph colorings under LLL condition is~\cite{guo2019counting}. By adapting Moitra's technique to a non-static ``freezing'', they obtained polynomial-time algorithms in the local lemma regime $q\geq 357\Delta^\frac{14}{k-14}$ for $k\ge 28$. 
This regime has been improved in several subsequent works~\cite{feng2021sampling,Vishesh21towards,Vishesh21sampling,HSW21}, resulting in the current best  upper bound $q\geq 310\Delta^{\frac{3}{k-3}}$ for $k\ge 24$ in ~\cite{Vishesh21sampling,HSW21}. 
On the lower bound side, it was proved in \cite{galanis2021inapproximability} that no efficient algorithm exists for the problem when  $q\gtrsim \Delta^{\frac{2}{k}}$ unless \textbf{NP}$=$\textbf{RP}.

Our results produce the following corollary on counting/sampling proper hypergraph $q$-colorings, 
closing the gap between the current upper and lower bounds for the problem.

\begin{corollary}[counting/sampling hypergraph $q$-colorings]\label{corollary:hypergraph-coloring}
For any $\zeta>0$, there exists a finite $q_0$ such that 
given any $k$-uniform hypergraph $H=(V,\+E)$ on $n$ vertices with maximum degree $\Delta$ and $q\ge q_0$ colors, if $k\geq 8$ and
    \[
     q\geq 70\Delta^{\frac{2+\zeta}{k-2-\zeta}},
    \]
    then for any $\varepsilon\in (0,1)$:
    \begin{itemize}
\item (approximate counting) 
the total number of proper $q$-colorings of $H$ can be estimated within relative error $1\pm \varepsilon$ deterministically within time $O\left(\left(\frac{n}{\varepsilon}\right)^{\poly(k,D,\log q)}\right)$;
\item (almost uniform sampling) 
an almost uniform random $q$-coloring $X\in [q]^V$ can be generated within time $O\left(\left(\frac{n}{\varepsilon}\right)^{\poly(k,D,\log q)}\right)$, such that the distribution of $X$ is $\varepsilon$-close (in total variation distance) to the uniform distribution over all proper $q$-colorings of $H$.
    \end{itemize}
\end{corollary}

%
The $q_0=q_0(\zeta)$ in \Cref{corollary:hypergraph-coloring} grows as $q_0=\exp(O(1/\zeta))$ in $\zeta$.

\subsubsection{Application to $k$-CNF}
A standard setting for LLL is the $k$-CNF (conjunctive normal form) with uniform width $k$ and maximum variable-degree $d$. 
A CNF is called a $(k,d)$-formula if each clause has exactly $k$ distinct variables and each variable is contained in at most $d$ distinct clauses.

For approximate counting/sampling $k$-CNF solutions,
the breakthrough of Moitra~\cite{Moi19} was the first to give an efficient algorithm in the LLL regime $k\geq 60\log k+60\log d+300$.
This bound was then improved in a series of works~\cite{FGYZ20,Vishesh21towards,Vishesh21sampling,HSW21,he2022counting}, 
with the current best bound given in ~\cite{he2022counting} as $k\geq 5\log d+5\log k+O(1)$. 
The lower bound proved in ~\cite{BGG19} says that unless \textbf{NP}$=$\textbf{RP}, $k\gtrsim 2\log d+2\log k$ is necessary for efficient algorithms to exist.

Our results improve the current best bound for approximate counting/sampling $k$-CNF solutions.

\begin{corollary}[counting/sampling $k$-CNF solutions]\label{corollary:k-CNF}
    Given any $(k,d)$-formula $\Phi$ on $n$ variables, if
    \[
      k\geq 4.82 \log k+4.82 \log d+5, 
    \]
    then for any $\varepsilon\in (0,1)$:
    \begin{itemize}
\item (approximate counting) 
the total number of satisfying assignments for $\Phi$ can be estimated within relative error $1\pm \varepsilon$ deterministically within time $O\left(\left(\frac{n}{\varepsilon}\right)^{\poly(k,D,\log q)}\right)$;
%
\item (almost uniform sampling) 
an almost uniform random satisfying assignment $X\in \{\textnormal{\texttt{True}, \texttt{False}}\}^V$ can be generated within time $O\left(\left(\frac{n}{\varepsilon}\right)^{\poly(k,D,\log q)}\right)$, such that the distribution of $X$ is $\varepsilon$-close (in total variation distance) to the uniform distribution over all satisfying assignments for $\Phi$.
%
    \end{itemize}
\end{corollary}

\subsection{Technique overview}\label{sec:technique-overview}
We then outline our general approach for the main results (\Cref{theorem:main-counting,theorem:main-sampling}) and provide a brief overview of the techniques.

Our high-level approach falls into the following two-step framework:
\begin{enumerate}
    \item 
    In the first step, we construct a coupling between the joint distributions defined by two CSP instances that differ locally.
    We manage to show that the discrepancy in this coupling decays at an exponential rate in the local lemma regime characterized by \Cref{condition:main-condition}.
    \item 
    Next, 
    we leverage the coupling constructed in the first step to devise a linear program. 
    By mimicking the transitions of the coupling procedure, this linear program can efficiently bootstrap the marginal ratios 
    as long as the original coupling procedure converges efficiently.
\end{enumerate}
This LP-based framework for counting LLL was initiated in the seminal work of Moitra~\cite{Moi19} and was subsequently generalized and refined in~\cite{guo2019counting,galanis2019counting,Vishesh21towards}. 
%
%

Despite this high-level framework,
our method differs fundamentally from all previous works and circumvents the barrier encountered by the previous approaches.

Our method completely dispenses with the ``freezing'' paradigm relied upon in previous works to establish correlation decay properties in the local lemma regime.
In addition, it also eliminates the need for ensuring a ``factorization property'' that after properly pinning the marked variables, the residual formula breaks down into logarithmic-sized connected components, 
which was required in previous works to achieve counting/sampling using marginal estimators through an exhaustive enumeration.\footnote{See the last paragraph of Section 1.2 in ~\cite{Vishesh21towards} for a discussion on the specific barriers of this paradigm.}



Specifically, we introduce a novel constraint-wise coupling,
whose exponential decay of correlation is due to an average-case percolation-style analysis,
thus avoiding to preserve a local-lemma-type condition with respect to a worst-case pinning.
Also, we propose a ``constraint-wise self-reducibility'' for local lemma regime to replace the ``factorization property''  to obtain counting/sampling algorithms, providing a new perspective for sampling/counting LLL.
Next, we provide a more in-depth discussion.

\subsubsection{Constraint-wise coupling with exponential decay of correlation}

The correlation decay properties have been a key to approximate counting and sampling in many classic and contemporary works \cite{dobrushin70prescribing,goldberg2005strong,weitz06counting,bayati2007simple,sinclair12approximation,li2013correlation, lu2013improved,yin2013approximate,gamarnik2015strong,patel2017deterministic,liu2019correlation,anari2020spectral,chen2020rapid,chen2021rapid,feng2021rapid}.
For sampling CSP solutions,
the correlation decay properties have also been used explicitly or implicitly in various previous algorithmic results~\cite{HSZ19, Moi19,BGG19,guo2019counting,galanis2019counting, FGYZ20,feng2021sampling, Vishesh21sampling, Vishesh21towards, HSW21,qiu2022perfect,feng2022improved,he2022sampling,galanis2022fast,he2022counting,feng2023towards,chen2023from,he2023improved}.

 Fix the variable set $V$ with domain $\+Q$, and for any set of constraints $\+C$ use $\mu_{\+C}$ to denote the uniform distribution over solutions to $\Phi=(V,\+Q,\+C)$. Our work establishes a correlation decay property within the local lemma regime of \Cref{condition:main-condition}. Such a correlation decay property is captured by the decay of discrepancy in a novel constraint-wise coupling between $\mu_{\+C}$ and the distribution with one fewer constraint $\mu_{\+C\setminus \{c_0\}}$ (formally stated in \Cref{lemma:correlation-decay}).

We expose the idea of this coupling. Decompose $\mu_{\+C\setminus \{c_0\}}$ using the law of total probability:
\[
\mu_{\+C\setminus \{c_0\}}=\mu_{\+C\setminus \{c_0\}}(c_0)\cdot \mu_{\+C}+\mu_{\+C\setminus \{c_0\}}(\neg c_0)\cdot \mu_{\+C\setminus \{c_0\}}(\cdot \mid  \neg c_0).
\]
Inspired from this, we can establish a coupling between $\mu_{\+C}$ and $\mu_{\+C\setminus \{c_0\}}$ as follows:
\begin{itemize}
    \item with probability $\mu_{\+C\setminus \{c_0\}}(c_0)$, establish a coupling between between identical distributions $\mu_{\+C}$ and $\mu_{\+C}$, which can already be perfectly coupled;
    \item with probability $\mu_{\+C\setminus \{c_0\}}(\neg c_0)$, establish a coupling between between $\mu_{\+C\setminus \{c_0\}}(\cdot \mid  \neg c_0)$ and $\mu_{\+C}$.
\end{itemize}
Here, in the latter case, a negated constraint $\neg c_0$ was introduced into the distribution. 
For atomic CSP, this will force the assignment on $\vbl(c_0)$ to be the unique violating assignment $\pi=\vio{c_0}$. 

To address this issue, we further decompose $\mu_{\+C}$ as follows:
\[
\mu_{\+C}=\sum\limits_{\rho\in \+Q_{\vbl(c_0)}}\mu_{\+C}(\rho)\cdot \mu_{\+C}(\cdot \mid \rho),
\]
which inspire us to achieve the coupling in this case as: first, sample $\rho\sim \mu_{\+C,\vbl(c_0)}$ according to the marginal distribution $\mu_{\+C,\vbl(c_0)}$ induced by $\mu_{\+C}$ on $\vbl(c_0)$, and then, establish the coupling between the conditional distributions $\mu_{\+C\setminus \{c_0\}}(\cdot \mid \pi)$ and $\mu_{\+C}(\cdot \mid \rho)$. 
In this way, we eliminated the negated constraint $\neg c_0$ from the distribution, albeit potentially introducing a discrepancy in the assignment over  $\vbl(c_0)$. It is important to note that the coupling $\mu_{\+C\setminus \{c_0\}}(\cdot \mid \pi)$ and $\mu_{\+C}(\cdot \mid \rho)$, after some simplification, is equivalent to the coupling of two distributions specified by two sets of constraints on the same variable set, thereby can be resolved recursively. 

The recursively constructed coupling in our approach may remind us of the recursive coupling technique introduced by Goldberg, Martin, and Mike to prove strong spatial mixing~\cite{goldberg2005strong}. 
However, there are significant differences between our approach and theirs. 
Firstly, our coupling proceeds in a constraint-wise manner. 
Secondly, the main distinction arises from the fact that the LLL regime is not self-reducible.
In their work~\cite{goldberg2005strong}, self-reducibility plays a crucial role in the recursive coupling for establishing strong spatial mixing. 
The one-step worst-case contraction bounded there allows for the application of path coupling~\cite{Bubley97pathcoupling:}. 
However, in local lemma regimes, self-reducibility is not guaranteed, and the LLL condition degrades after pinning too many variables. 
Consequently, we must keep track of the entire evolution of the coupling process and apply an average-case percolation-style analysis.
An intriguing finding is that the random choices involved in our constraint-wise coupling procedure can be explicitly identified using the principle of deferred decision. 
This simplifies the analysis considerably and may be of independent interest.

\subsubsection{Linear programs from constraint-wise coupling}
We will transform the exponentially decayed coupling process constructed above into a linear program that mimics the transition probabilities of this coupling procedure. This linear program will be used to bootstrap the marginal ratios between the total numbers of CSP solutions to $\Phi=(V,\+Q,\+C)$ and $(V,\+Q,\+C\setminus\{c_0\})$.

Since our coupling proceeds in a constraint-wise manner and its analysis is considerably non-local, 
it is significantly different from previous works within the same general framework. 
Therefore, several new ideas are required for designing the linear program as well as the final counting/sampling algorithms:
\begin{itemize}
\item 
Unlike in all previous LP-based methods for counting LLL~\cite{Moi19,guo2019counting, galanis2019counting,Vishesh21towards},
a key property known as ``local uniformity,'' which used to be enforced by freezing, is no longer guaranteed in our setting.
Instead of relying on ``local uniformity,'' which introduces slackness in the local lemma regimes,
we utilize a 2-tree structure that provides certificates for overflow in the coupling procedure, as well as a new class of constraints in the linear program.
This 2-tree structure, critical to our algorithm, closely aligns with similar structures that have appeared in the lower bounds~\cite{BGG19,galanis2021inapproximability}, shedding light on why our algorithms approach the critical threshold.

\item 
In the process of converting a marginal estimator to an approximate counting or sampling algorithm within local lemma regimes 
where variable-wise self-reducibility does not hold, 
previous works relied on the ``factorization'' to complete partial assignments. 
However, this approach introduces slackness in local lemma regimes.
Our finding shows that such additional slackness is unnecessary for efficient counting/sampling LLL.
%
%
Instead, we utilize ``constraint-wise self-reducibility'' for local lemma regimes and show that the followings are efficient:
\begin{itemize}
\item estimate the constraint-wise marginal probability $\mu_{\+C}(c)$; 
\item incrementally update some sample $X\sim \mu_{\+C\setminus \{c\}}$ to $Y\sim \mu_{\+C}$.
\end{itemize}
\end{itemize}

\section{Preliminaries and notations}\label{sec:prelim}

\subsection{CSP defined by atomic constraints}

\subsubsection{Basic concepts for CSPs}
A CSP is described by a collection of constraints defined on a set of variables. 
Formally, an instance of a constraint satisfaction problem, 
called a \emph{CSP formula}, 
is denoted by $\Phi=(V,\+Q(=\bigotimes_{v\in V}Q_v),\+C)$.
Here, $V$ is a set of $n=|V|$ random variables, where each random variable $v\in V$ is endowed with 
 a  finite domain $Q_v$ of size $q_v\triangleq\abs{Q_v}\ge 2$;
and $\+C$ gives a collection of local constraints, 
such that each  $c\in \+C$ is a constraint function $c:\bigotimes_{v\in \vbl(c)}Q_v\to\{\True,\False\}$ defined on a subset of variables, denoted by $\vbl(c)\subseteq V$.
An assignment $\sigma\in \+Q$ is called \emph{satisfying} for $\Phi$ if 
\[
\Phi(\sigma)\triangleq\bigwedge\limits_{c\in \+C} c\left(\sigma_{\vbl(c)}\right)=\True.
\]
%
%
%
%

In the context of LLL, each constraint $c$ can be interpreted as a bad event $A_c$, which happens when the assignment on $\vbl(c)$ violates $c$. For any subset of of constraints $\+E\subseteq \+C$, denote $\vbl(\+E)\triangleq\bigcup_{c\in \+E}\vbl(c)$. For any subset of variables $\Lambda\subseteq V$, denote $\+Q_{\Lambda}\defeq \bigotimes_{v\in \Lambda}Q_v$.  Also, we say $\Phi$ is \emph{satisfiable} is at least one satisfying assignment to $\Phi$ exists.

We say some constraint $c\in \+C$ is defined by \emph{atomic bad events}, or simply, \emph{atomic}, if it is violated by exactly one configuration in $\+Q_{\vbl(c)}$. For atomic constraints $c$, we denote $\vio{c}$ as the only violating configuration of $c$ in $\+Q_{\vbl(c)}$. Moreover, when all constraints in $\+C$ are atomic, we say $\Phi$ is atomic. 

\subsubsection{Notations for (partial) assignments}

For a partial assignment $\sigma\in \+Q_{\Lambda}$ specified over a subset of variables $\Lambda\subseteq V$, we use $\Lambda(\sigma)=\Lambda$ to denote the set of assigned variables in $\sigma$. For any partial assignment $\sigma$ and any $S\subseteq \Lambda(\sigma)$, we use $\sigma_S$ to denote $\bigotimes_{v\in S}\sigma(v)$. We further write $\sigma_v=\sigma_{\{v\}}$ for $v\in V$.

 For any two assignments $\sigma,\tau$ such that $\Lambda(\sigma)\cap \Lambda(\tau)=\emptyset$, we define $\sigma\land \tau\in \+Q_{\Lambda(\sigma)\cup \Lambda(\tau)}$ as the concatenation of $\sigma$ and $\tau$ such that for any $v\in \Lambda(\sigma)\cup \Lambda(\tau)$,
\[
(\sigma\land \tau)(v)=\begin{cases}\sigma(v) & v\in \Lambda(\sigma),\\ \tau(v)& v\in \Lambda(\tau).\end{cases}
\]

We will use $\varnothing$ to specifically denote an empty assignment,  distinguishing from the empty set $\emptyset$.

\subsubsection{Notations for events and probability measures }

We then specify some notations for events and probability measures related to the CSP.

\begin{definition}[simple notations for events]\label{definition:notation-events}
For the simplicity of notations, we will use:
\begin{itemize}
    \item any constraint $c\in \+C$ to denote the event that this constraint is satisfied;
    \item any subset of constraints $\+E\subseteq \+C$ to denote the event that all constraints in $\+E$ are satisfied;
    \item any partial assignment $\sigma$  to denote the event that the assignment on $\Lambda(\sigma)$ is preciesely $\sigma$. 
\end{itemize}

\end{definition}

Note that under this definition, the notation $\sigma\land \tau$ as a concatention of assignment is consistent with the same notation where $\sigma$ and $\tau$ are
considered as events, and $\sigma\land \tau$ is the logical and of the two events. 

 We use $\+P$ to denote the uniform product distribution over the space $\+Q$. For any subset of variables $\Lambda\subseteq V$.
We use $\mu=\mu_{\Phi}$ denote the distribution over all satisfying assignments of $\Phi$ induced  by $\+P$, i.e.
\[
    \mu_{\Phi}\defeq \+P\left(\cdot \mid \+C\right).
\]
$\mu_{\Phi}$ is well-defined only when $\Phi$ is satisfiable.

When the variable set $V$ and the domain $\+Q$ is clear, for some set of constraints $\+E$ defined over $V$, and some assignment $\sigma$ defined over $\Lambda(\sigma)$, we stipulate the following notations for (conditional) distributions:
\[
\mu_{\+E}\defeq \+P(\cdot \mid \+E),\quad \mu^{\sigma}_{\+E}\defeq \+P(\cdot \mid \+E\land \sigma).
\]

For some probability distribution $\mu$ and some subset of variables $\Lambda\subseteq V$, we use $\mu_{\Lambda}$ to denote the marginal distribution induced by $\mu$ on $\Lambda$. We use commas to separate multiple subscripts, for example, we use $\mu^{\sigma}_{\+E,\Lambda}$ to denote the marginal distribution induced by $\mu^{\sigma}_{\+E}$ on $\Lambda$.

\subsubsection{Pinned formula and pinned constraints}

For a subset of variables $\Lambda\subseteq V$ and a partial assignment $\sigma\in \+Q_{\Lambda}$ specified on $\Lambda$, the pinned formula $\Phi=(V,\+Q,\+C)$ under $\sigma$, denoted by $\Phi^\sigma=(V^\sigma,\+Q^{\sigma},\+C^\sigma)$, 
is a new CSP formula such that $V^\sigma=V\setminus \Lambda(\sigma)$, $\+Q^{\sigma}=\+Q_{V\setminus \Lambda(\sigma)}$
and the $\+C^\sigma$ is obtained from $\+C$ by: 
\begin{enumerate}
\item
replacing each original constraint $c\in\+C$ with the corresponding pinned constraint $c^\sigma$, where $\vbl(c^{\sigma})=\vbl(c)\setminus\Lambda(\sigma)$ and $c^{\sigma}(\tau)=c(\tau\land \sigma_{\Lambda(\sigma)\cap\vbl(c)})$ for any $\tau\in\+Q_{\vbl(c^{\sigma})}$;
\item
removing all the resulting constraints that have already been satisfied.
\end{enumerate}

%
Whenever a pinning $\sigma$  is applied to a CSP formula $\Phi=(V,\+Q,\+C)$, we always assume that $\sigma$ does not violate any constraint in $\+C$.
Under such assumption,  $\Phi^\sigma$ is always well-defined
and $\mu_{\Phi^\sigma}=\mu^{\sigma}_{V\setminus\Lambda(\sigma)}$. 
Moreover, if $\Phi$ is atomic, then so is the pinned formula $\Phi^\sigma$. 
We use $\+C^*$ to denote the set of all possible constraints obtained from pinning some constraint in $\+C$ with some non-violating $\sigma$, including the unpinned constraints in $\+C$.
Finally, for each (possibly) pinned constraint $c\in \+C^{*}$, we use $\unpin{c}$ to denote its original unpinned constraint in $\+C$.

\subsection{Lov\'{a}sz Local lemma}
  The celebrated Lov\'{a}sz local lemma gives a sufficient criterion for a CSP solution to exist:

\begin{theorem}[\cite{LocalLemma}]\label{locallemma}
    Given a CSP formula $\Phi=(V,\+Q,\+C)$, if the following holds
    \begin{align}\label{llleq}
    \exists x\in (0, 1)^{\+C}\quad \text{ s.t.}\quad \forall c \in \+C:\quad
        {\+P[\neg c]\leq x(c)\prod_{\substack{c'\in \+C\setminus \{c\}\\ \vbl(c)\cap\vbl(c')\neq \emptyset}}(1-x(c'))},
    \end{align}
    then  
    $$
        {\+P[\+C]\geq \prod\limits_{c\in C}(1-x(c))>0}.
    $$
\end{theorem}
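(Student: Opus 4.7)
The plan is to carry out the classical induction proof of the general Lov\'{a}sz local lemma. The technical core is the auxiliary claim that for every constraint $c \in C$ and every satisfiable subset $S \subseteq C \setminus \{c\}$,
\[
\Pr{\neg c \,\big|\, \bigwedge_{c' \in S} c'} \leq x(c).
\]
Granting this claim, the theorem follows by the chain rule: enumerating $C = \{c_1, \ldots, c_m\}$ in an arbitrary order,
\[
\Pr{\bigwedge_{i=1}^{m} c_i} = \prod_{i=1}^{m} \Pr{c_i \,\big|\, \bigwedge_{j<i} c_j} \geq \prod_{i=1}^{m}(1 - x(c_i)) > 0,
\]
where each conditioning event has positive probability by the claim itself, since $1 - x(c_i) > 0$.

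I would prove the claim by induction on $|S|$. The base case $S = \emptyset$ is immediate from the hypothesis \eqref{llleq}, as $\prod_{c'}(1-x(c')) \leq 1$. For the inductive step, partition $S = S_1 \cup S_2$, where $S_1 \defeq \{c' \in S : \var(c) \cap \var(c') \neq \emptyset\}$ collects the $S$-neighbors of $c$ and $S_2 \defeq S \setminus S_1$. If $S_1 = \emptyset$, then $\neg c$ and $\bigwedge_{c' \in S_2} c'$ depend on disjoint variable sets and are therefore mutually independent under the product measure $\+P$, so the conditional probability collapses to $\Pr{\neg c} \leq x(c)$. Otherwise, apply Bayes' rule to get
\[
\Pr{\neg c \,\big|\, \bigwedge_{c' \in S} c'} = \frac{\Pr{\neg c \wedge \bigwedge_{c' \in S_1} c' \,\big|\, \bigwedge_{c' \in S_2} c'}}{\Pr{\bigwedge_{c' \in S_1} c' \,\big|\, \bigwedge_{c' \in S_2} c'}},
\]
upper-bound the numerator by $\Pr{\neg c \,\big|\, \bigwedge_{c' \in S_2} c'} = \Pr{\neg c} \leq x(c)\prod_{c' : \var(c) \cap \var(c') \neq \emptyset}(1 - x(c'))$ using independence with $S_2$ together with \eqref{llleq}, and lower-bound the denominator by $\prod_{c' \in S_1}(1 - x(c'))$ via a chain-rule expansion within $S_1$ combined with the inductive hypothesis applied to each smaller conditioning set. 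Since $S_1$ is contained in the neighborhood of $c$, the ratio telescopes to at most $x(c)$, completing the induction.

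The only delicate bookkeeping is to verify that every event we condition on during the induction remains satisfiable, so that all conditional probabilities are well-defined. This is bootstrapped by the very claim being proved: the bound $\Pr{\neg c_i \mid \ldots} \leq x(c_i) < 1$ inductively guarantees positivity of $\Pr{c_i \mid \ldots}$, which in turn justifies subsequent conditionings. No genuine obstacle arises; the task is essentially a careful translation of the standard inductive argument into the product-probability-space notation of Section~\ref{sec:prelim}.
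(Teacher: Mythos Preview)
The paper does not give its own proof of this theorem: it is stated in the preliminaries as a cited result from \cite{LocalLemma} (and the companion \Cref{HSS} is likewise quoted from \cite{haeupler2011new}) and used as a black box. Your proposal reproduces the standard inductive proof of the general Lov\'asz local lemma, and it is correct as written; there is nothing to compare against in the paper itself.
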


When the condition in \eqref{llleq} is satisfied, 
the probability of any event in the uniform distribution $\mu$ over all satisfying assignments can be well approximated by the probability of the event in the product distribution.

\begin{theorem}[\text{\cite[Theorem 2.1]{haeupler2011new}}]\label{HSS}
Given a CSP formula $\Phi=(V,\+Q,\+C)$, if $\eqref{llleq}$ holds, 
then for any event $\+A$ that is determined by the assignment on a subset of variables $\vbl(\+A)\subseteq V$,
\[
   \+P[\+A\mid \+C]\leq \+P[\+A]\prod_{\substack{c\in \+C\\ \vbl(c)\cap\vbl(\+A)\neq \emptyset}}(1-x(c))^{-1}.
\]
\end{theorem}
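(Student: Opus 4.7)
The plan is to partition the constraints into those that share variables with $A$ and those that do not, handle the latter block by independence, and handle the former block by the classical conditional bound at the heart of the local lemma induction. Concretely, I set $N=\{c\in C:\var(c)\cap \var(A)\neq \emptyset\}$ and $M=C\setminus N$. Because each $c\in M$ depends only on variables disjoint from $\var(A)$, the events $A$ and $\bigwedge_{c\in M}c$ are independent under the product distribution $\+P$.

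Next I would write
\[
    \Pr{A\mid \bigwedge_{c\in C} c}=\frac{\Pr{A\land \bigwedge_{c\in C} c}}{\Pr{\bigwedge_{c\in C} c}},
\]
upper-bound the numerator by dropping the constraints in $N$,
\[
    \Pr{A\land \bigwedge_{c\in C} c}\leq \Pr{A\land \bigwedge_{c\in M} c}=\Pr{A}\cdot \Pr{\bigwedge_{c\in M} c},
\]
where the equality uses the independence just noted, and lower-bound the denominator by enumerating $N=\{c_1,\ldots,c_{|N|}\}$ arbitrarily and invoking the chain rule,
\[
    \Pr{\bigwedge_{c\in C} c}=\Pr{\bigwedge_{c\in M} c}\cdot \prod_{i=1}^{|N|}\Pr{c_i\mid \bigwedge_{j<i} c_j\land \bigwedge_{c\in M} c}.
\]
Taking the ratio cancels the common factor $\Pr{\bigwedge_{c\in M} c}$ and leaves
\[
    \Pr{A\mid \bigwedge_{c\in C} c}\leq \Pr{A}\cdot \prod_{i=1}^{|N|}\Pr{c_i\mid \bigwedge_{j<i} c_j\land \bigwedge_{c\in M} c}^{-1}.
\]

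The only nontrivial input is the standard conditional local-lemma estimate $\Pr{\neg c_i\mid \bigwedge_{c\in S} c}\leq x(c_i)$ for every $S\subseteq C\setminus \{c_i\}$, which is precisely the inductive statement used to prove \Cref{locallemma} from \eqref{llleq}. Applying it with $S=\{c_j:j<i\}\cup M\subseteq C\setminus \{c_i\}$ shows each factor in the above product is at least $1-x(c_i)$, and taking reciprocals gives the claimed bound $\Pr{A}\prod_{c\in N}(1-x(c))^{-1}$. The only subtlety worth flagging is that this conditional bound must hold for \emph{arbitrary} $S\subseteq C\setminus \{c_i\}$, not merely for $S$ avoiding the dependency neighborhood of $c_i$; this is automatic from the usual induction on $|S|$ in the proof of \Cref{locallemma}, so beyond the careful independence observation no real obstacle arises.
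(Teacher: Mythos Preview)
The paper does not give its own proof of this statement; it is simply quoted from \cite{haeupler2011new} as a known result. Your argument is correct and is essentially the standard proof: split $C$ into constraints $N$ touching $\var(A)$ and the rest $M$, use independence to factor out $\Pr{A}$ against $\bigwedge_{c\in M}c$, and then lower-bound $\Pr{\bigwedge_{c\in C}c}/\Pr{\bigwedge_{c\in M}c}$ via the chain rule together with the inductive estimate $\Pr{\neg c_i\mid \bigwedge_{c\in S}c}\le x(c_i)$ for arbitrary $S\subseteq C\setminus\{c_i\}$. Your flagged subtlety---that the conditional bound must hold for all such $S$, not just those avoiding the dependency neighborhood---is exactly right and is indeed what the usual induction on $|S|$ establishes. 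There is nothing to compare against in the paper itself.
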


\subsection{Dependency graph and $2$-trees}
The dependency graph is a key notion for the Lov\'{a}sz local lemma.
\begin{definition}[dependency graph]\label{def:dependency-graph}
Let $\Phi =(V,\+Q,\+C)$ be a CSP formula and let $\+E \subseteq \+C$ be a subset of constraints.
\begin{itemize}
\item $G_{\Phi}(\+E)$ denotes the dependency graph induced by $\+E$, 
which is a graph with vertex set $\+E$ such that there is an edge between distinct $c, c'\in \+E$
if and only if $\vbl(c) \cap \vbl(c') \neq \emptyset$.
\item $G^2_{\Phi}(\+E)$ denotes the square graph  of $G_{\Phi}(\+E)$, in which there is an edge between distinct $c, c'\in \+E$ if and only if  $\text{dist}_{G_{\Phi}}(c,c')\leq 2$.
\end{itemize}
\end{definition}

The notion of $2$-tree is an important combinatorial structure introduced by Alon~\cite{Alon91}, which has played key roles in algorithmic and sampling LLL.
\begin{definition}[$2$-tree]\label{definition:2-tree}
Let $G=(V,E)$ be a graph and $\text{dist}_G(\cdot,\cdot)$ denote the shortest path distance in~$G$.
A \emph{$2$-tree} in $G$ is a subset of vertices $T\subseteq V$ such that: 
\begin{itemize}
    \item  for any $u,v\in T$, $\text{dist}_G(u,v)\geq2$; 
    \item  $T$ is connected if an edge is added between each $u,v\in T$ such that $\text{dist}_G(u,v)=2$.
\end{itemize}
\end{definition}

The following lemma bounds the number of $2$-trees of a certain size in bounded-degree graphs.

\begin{lemma}[\text{\cite[Corollary 5.7]{FGYZ20}}]\label{lemma:num-2-tree}
Let $G = (V, E)$ be a graph with maximum degree $d$ and $v \in V$ be a vertex. Then the
number of $2$-trees in $G$ of size $\ell$ containing $v$ is at most $\frac{\left(\mathrm{e}d^2\right)^{\ell-1}}{2}$.
\end{lemma}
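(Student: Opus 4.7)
The plan is to reduce the count of $2$-trees to counting connected subgraphs in an auxiliary graph of bounded maximum degree, then apply a classical combinatorial bound.

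First, I would introduce the auxiliary graph $G^{(2)} = (V, E^{(2)})$ on the same vertex set, where $\{u,w\} \in E^{(2)}$ iff $\mathrm{dist}_G(u,w) = 2$. Since each vertex has at most $d$ neighbors in $G$, and each of those neighbors has at most $d-1$ other neighbors besides the base vertex, each vertex of $G^{(2)}$ has at most $d(d-1) \leq d^2$ neighbors in $G^{(2)}$. Hence $G^{(2)}$ has maximum degree at most $d^2$.

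Second, I would observe that the second bullet of \Cref{definition:2-tree} is precisely the statement that the $2$-tree $T$ forms a connected vertex subset in $G^{(2)}$, while the first bullet (pairwise $G$-distance at least $2$) is an additional constraint that only restricts the count further. Therefore, the number of $2$-trees of size $\ell$ containing $v$ is upper bounded by the number of connected vertex subsets of $G^{(2)}$ of size $\ell$ containing $v$.

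Finally, I would invoke the standard fact that in any graph with maximum degree $\Delta$, the number of connected vertex subsets of size $\ell$ containing a fixed vertex is at most $\frac{(\mathrm{e}\Delta)^{\ell-1}}{2}$. One clean derivation maps each connected subset to a rooted spanning tree at $v$, estimates the number of such trees on $\ell$ nodes in an infinite $\Delta$-ary tree by a Fuss--Catalan-type expression such as $\frac{1}{(\Delta-1)\ell+1}\binom{\Delta\ell}{\ell-1}$, and then applies Stirling's approximation to extract the form $\frac{(\mathrm{e}\Delta)^{\ell-1}}{2}$. Substituting $\Delta \leq d^2$ yields the desired bound $\frac{(\mathrm{e}d^2)^{\ell-1}}{2}$.

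The main obstacle is this last combinatorial estimate: extracting the precise constant $\frac{1}{2}$ requires care both with the Fuss--Catalan/Stirling calculation and with the many-to-one correspondence between connected subsets and their rooted spanning trees. The conceptual content of the proof—the passage to the auxiliary graph $G^{(2)}$ and the observation that $2$-trees are connected subsets there—is essentially a one-line reduction, so all the technical weight sits in this final counting step, which is exactly the statement extracted as \cite[Corollary 5.7]{FGYZ20}.
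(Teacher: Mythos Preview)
The paper does not prove this lemma; it simply quotes it from \cite[Corollary~5.7]{FGYZ20} and uses it as a black box in the proof of \Cref{lemma:couple-dis}. Your proposed argument---pass to the auxiliary graph $G^{(2)}$ on distance-$2$ pairs, observe that a $2$-tree is a connected vertex subset there, bound the maximum degree of $G^{(2)}$ by $d^2$, and then invoke the standard $\frac{(\mathrm{e}\Delta)^{\ell-1}}{2}$ bound on connected subgraphs containing a fixed vertex---is exactly the standard proof and is how the cited result in \cite{FGYZ20} is obtained.
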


\subsection{Total variation distance and coupling}

 Let $\mu$ and $\nu$ be two probability distributions
over the same (finite) state space $\Omega$. Their total variation distance is defined by

\[
\dtv(\mu,\nu)=\frac{1}{2}\sum\limits_{x\in \Omega}|\mu(x)-\nu(x)|=\max\limits_{A\subseteq \Omega}\left(\mu(A)-\nu(A)\right).
\]

A coupling $\+D$ of two distributions $\mu$ and $\nu$ is a joint distribution over $\Omega\times \Omega$ whose projection on the first (or second)
coordinate is $\mu$ (or $\nu$). The well-known coupling lemma is often used to bound total variation distances, given as follows.

\begin{lemma}[\text{\cite[Proposition 4.7]{levin2017markov}}]\label{lemma:coupling-lemma}
   Let $\+D:(X,Y)$ be any coupling of $\mu$ and $\nu$, then
   \[
   \dtv(\mu,\nu)\leq \Pr[\+D]{X\neq Y}.
   \]
\end{lemma}

\section{Constraint-wise coupling with exponential decay of correlation}\label{sec:correlation-decay} 

We prove a correlation decay property for the uniform distribution $\mu$ over all satisfying assignments 
for the atomic CSP formula $\Phi$ satisfying \Cref{condition:main-condition}.
%
%
%
This correlation decay property is captured by the decay of discrepancy in a coupling between $\mu$ and the distribution with one fewer constraint.

\begin{theorem}\label{lemma:correlation-decay}
Let the CSP formula $\Phi=(V,\+Q,\+C)$ satisfy \Cref{condition:main-condition}.
Let  $c_0\in\+C$ be an arbitrary constraint.
%
There exists a coupling $(X,Y)$ of $\mu_{\+C}$ and $\mu_{\+C\setminus\{c_0\}}$, such that for any integer $K\geq 1$, 
\[
\Pr{d_{\mathrm{Ham}}(X,Y)\geq k\cdot (D+1)\cdot K}\leq 2^{-K},
\]
where 
$d_{\mathrm{Ham}}(X,Y)\triangleq\sum_{v\in V}I[X(v)\neq Y(v)]$ denotes the Hamming distance between $X$ and $Y$.
\end{theorem}

\noindent
\Cref{lemma:correlation-decay} says that in the local lemma regime characterized by \Cref{condition:main-condition},
the total influence of any particular constraint decays exponentially.
This is the first time that such an exponential decay of correlation has been found in a local lemma regime as \Cref{condition:main-condition}.
This is achieved by a new constraint-wise coupling process. 
The construction and analysis of this novel process completely bypass the use of Beck's freezing approach,
which introduced slackness and was used in almost every previous work.

The rest of this section is devoted to proving \Cref{lemma:correlation-decay}.

\subsection{A constraint-wise coupling}
First, we present the construction of the coupling in \Cref{lemma:correlation-decay}. 
Fix an instance of atomic CSP formula $\Phi=(V,\+Q,\+C)$.
The coupling is defined by a recursive procedure $\Couple(\+E,\+F,\sigma,\tau)$, 
which takes as input two sets $\+E,\+F\subseteq\+C^*$ of pinned constraints, 
along with two partial assignments $\sigma,\tau\in\+Q_{\Lambda}$ specified on some subset $\Lambda\subseteq V$ of variables, 
with the promise that both $\+E^\sigma$ and $\+F^\tau$ are satisfiable.
%
The procedure tries to produce a random pair of assignments $(X,Y)\in \+Q\times \+Q$ distributed according to a coupling of the joint distributions $\mu_{\+E}^{\sigma}$ and $\mu_{\+F}^{\tau}$.
%

This recursive procedure $\Couple(\+E,\+F,\sigma,\tau)$ is given in \Cref{Alg:couple}. 
An arbitrary total ordering is assumed on $\+C^*$, which contains all original constraints in $\+C$ and all their possible pinnings.

\begin{algorithm}
\caption{$\Couple(\+E,\+F,\sigma,\tau)$} \label{Alg:couple}
\SetKwInOut{Instance}{Instance}
\SetKwInOut{Input}{Input}
\SetKwInOut{Output}{Output}
\SetKwIF{WP}{ElseIf}{Else}{with probability}{do}{else if}{else}{endif}
\Instance{atomic CSP formula $\Phi=(V,\+Q,\+C)$;}
\Input{two subsets of pinned constraints $\+E,\+F\subseteq \+C^*$, and two partial assignments $\sigma,\tau\in\+Q_\Lambda$ specified on the same subset $\Lambda\subseteq V$ of variables;} 
\Output{a pair of assignments $(X,Y)\in \+Q \times \+Q$;}
\If{$\+E^\sigma=\+F^\tau$\label{Line:couple-return-cond}}{
let $(X,Y)$ be drawn according to the coupling of $\mu_{\+E}^{\sigma}$ and $\mu_{\+F}^{\tau}$ that always satisfies $X_{V\setminus \Lambda}=Y_{V\setminus \Lambda}$\;\label{Line:couple-perfect}
    \Return $(X,Y)$\;\label{Line:couple-return}
}
\eIf{$\+F^\tau\not\subseteq \+E^\sigma$ \label{Line:couple-swap-cond}}{
   choose the smallest $c\in \+F^\tau\setminus \+E^\sigma$\; \label{Line:couple-choose}
\eWP{$\mu_{\+E}^{\sigma}(c)$ \label{Line:couple-psat-cond}}{
    \Return $\Couple\left(\+E\cup\{c\},\+F,\sigma,\tau\right)$\; \label{Line:couple-psat-return}
}
{\label{Line:couple-else}
   let $\pi=\vio{c}$ and draw a random $\rho\sim \mu_{\+F,\vbl(c)}^\tau$\;\label{Line:couple-sample}
    \Return $\Couple\left(\+E,\+F, \sigma\land\pi, \tau\land\rho\right)$\; \label{Line:couple-assign-return}
}}
{\label{Line:couple-else-1}
choose the smallest $c\in \+E^\sigma\setminus \+F^\tau$\; \label{Line:couple-choose-1}
\eWP{$\mu_{\+F}^{\tau}(c)$ \label{Line:couple-psat-cond-1}}{
    \Return $\Couple\left(\+E,\+F\cup \{c\},\sigma,\tau\right)$\; \label{Line:couple-psat-return-1}
}
{
    draw a random $\pi\sim \mu_{\+E,\vbl(c)}^\sigma$ and let $\rho=\vio{c}$ \; \label{Line:couple-sample-1}
    \Return $\Couple\left(\+E,\+F, \sigma\land\pi, \tau\land\rho\right)$\; \label{Line:couple-assign-return-1}
}
}
\end{algorithm}


\begin{remark}[well-definedness of \Cref{Alg:couple}]
It is easy to see that for atomic CSP, if $\+E^\sigma=\+F^\tau$, the joint distributions $\mu_{\+E}^{\sigma}$ and $\mu_{\+F}^{\tau}$ are identically defined over ${V\setminus \Lambda}$. 
Thus the coupling in \Cref{Line:couple-perfect} must exist.
%

Also, note that the constraints $c$ chosen in \Cref{Line:couple-choose} and \Cref{Line:couple-choose-1} (which are used later) are pinned constraints, and are chosen according to the total ordering assumed over all pinned constraints in $\+C^*$.
\end{remark}

\subsubsection*{{Idea of the coupling.}}
\Cref{Alg:couple} formalizes the ideas outlined in \Cref{sec:technique-overview}.  
The coupling procedure maintains a pair of ``sub-instances with boundaries'' $(\+E,\sigma)$ and $(\+F,\tau)$, 
which initially are $(\+C\setminus\{c_0\},\varnothing)$ and $(\+C,\varnothing)$. 
%
The procedure tries to couple the joint distributions $\mu_\+E^{\sigma}$ and $\mu_\+F^{\tau}$. 
When the reduced sets of constraints $\+E^{\sigma}=\+F^{\tau}$, we have that $\mu_{\+E}^{\sigma}$ and $\mu_{\+F}^{\tau}$ are identically distributed over the set $V\setminus\Lambda$ of unassigned variables, 
and hence we can perfectly couple the unassigned variables (Lines~\ref{Line:couple-return-cond}-\ref{Line:couple-return}). 
Otherwise $\+E^{\sigma}\neq\+F^{\tau}$,  there must exist at least one (pinned) constraint $c\in\+E^{\sigma}\triangle\+F^{\tau}$, say $c\in \+F^{\tau}\setminus \+E^{\sigma}$ (Lines~\ref{Line:couple-swap-cond}-\ref{Line:couple-assign-return}).
The joint distribution $\mu_{\+E}^{\sigma}$ can be decomposed  based on the satisfaction and violation of $c$ as:
\[
\mu_{\+E}^{\sigma}=\mu^{\sigma}_{\+E}(c)\cdot \mu^{\sigma}_{\+E}(\cdot \mid c)+\mu^{\sigma}_{\+E}(\neg c)\cdot \mu^{\sigma}_{\+E\cup \{c\}}(\cdot \mid \neg c).
\]
The coupling algorithm then proceeds as follows according to the above decomposition:
\begin{itemize}
    \item With probability $\mu^{\sigma}_{\+E}(c)$, try to couple $\mu^{\sigma}_{\+E}(\cdot \mid c)$ and $\mu_{\+F}^{\tau}$ (Lines \ref{Line:couple-psat-cond}-\ref{Line:couple-psat-return}), which is solved recursively by the same coupling procedure on the new pair of sub-instances $(\+E\cup\{c\},\sigma)$ and $(\+F,\tau)$. 
    \item With probability $\mu^{\sigma}_{\+E}(\neg c)$, try to couple $\mu^{\sigma}_{\+E}(\cdot \mid \neg c)$ and $\mu_{\+F}^{\tau}$ (Lines \ref{Line:couple-else}-\ref{Line:couple-assign-return}). 
    Because $c$ is a (pinned) atomic constraint, $\mu^{\sigma}_{\+E}(\cdot \mid \neg c)$ fixes the assignment on $\vbl(c)$ to be $\pi=\vio{c}$. 
    On the other hand, 
    the joint distribution $\mu_{\+F}^{\tau}$ can be decomposes based on the assignment on $\vbl(c)$ as:
    \[
    \mu_{\+F}^{\tau}=\sum\limits_{\rho\in \+Q_{\vbl(c)}}\mu_{\+F}^{\tau}(\rho)\cdot \mu_{\+F}^{\tau}(\cdot \mid \rho).
    \]
    Then the coupling between $\mu^{\sigma}_{\+E}(\cdot \mid \neg c)$ and $\mu_{\+F}^{\tau}$ can be solved recursively on the new pair of sub-instances $(\+E,\sigma\land \pi)$ and $(\+F,\tau\land \rho)$, where $\pi=\vio{c}$ and $\rho\sim \mu^{\tau}_{\+F,\vbl(c)}$.
\end{itemize}
The mirrored case with $c\in\+E^{\sigma}\setminus\+F^{\tau}$ is processed symmetrically (Lines \ref{Line:couple-else-1}-\ref{Line:couple-assign-return-1}).

\Cref{Alg:couple} \emph{per se} does not give an efficient procedure for computing the output $(X,Y)$, 
since it relies on calculating some nontrivial marginal probabilities.
But it is useful for exhibiting the correlation decay property in the local lemma regime, 
which in turn is useful for the algorithmic implications.

\subsubsection*{{Correctness of the coupling.}}
%
%
Necessarily, the recursive procedure satisfies the invariant condition that the two joint distributions  $\mu_{\+E}^\sigma$ and $\mu_{\+F}^\tau$ are coupled correctly.


\begin{lemma}[soundness of coupling]\label{lemma:couple-well-defined}
Assume \Cref{condition:main-condition}.
For any constraint $c_0\in\+C$, the procedure $\Couple(\+C\setminus \{c_0\},\+C,\varnothing,\varnothing)$ terminates with probability 1 and returns a coupling of $\mu_{\+C}$ and $\mu_{\+C\setminus\{c_0\}}$. 
\end{lemma}
\begin{proof}
    We first prove that the coupling procedure is well-defined. 
    It suffices to show by structural induction in the top-down order of recursion that for each recursive call $\Couple(\+E,\+F,\sigma,\tau)$:
        \begin{equation}\label{eq:well-defined}
            \Lambda(\sigma)=\Lambda(\tau),\quad \P{\+E\land \sigma}>0, \quad \P{\+F\land \tau}>0.
        \end{equation} 
    The base case is $(\+E,\+F,\sigma,\tau)=(\+C\setminus \{c_0\},\+C,\varnothing,\varnothing)$. Then \eqref{eq:well-defined}  holds by \Cref{condition:main-condition} and \Cref{locallemma}.

    For the induction step, assume the current call is $\Couple(\+E,\+F,\sigma,\tau)$, then by the induction hypothesis, both $\+E^{\sigma}$ and $\+F^{\tau}$ are well-defined and $\Lambda(\sigma)=\Lambda(\tau)$. We then only prove the case when $\+F^{\tau}\not\subseteq \+E^{\sigma}$. The case when $\+F^{\tau}\subseteq \+E^{\sigma}$ follows analogously.

   Let $c$ be the smallest constraint in $\+F^{\tau}\setminus\+E^{\sigma}$. Note that by the induction hypothesis, $\mu_{\+E}^{\sigma}(c)$ is well-defined. Also, when $\mu_{\+E}^{\sigma}(c)>0$ we have
    \[
    \P{\left( \+E\cup \{c\}\right)\land \sigma}=\P{\+E\land \sigma}\cdot \mu_{\+E}^{\sigma}(c)>0,
    \]
    proving \eqref{eq:well-defined} for the direct recursive call $\Couple(\+E\cup \{c\},\+F,\sigma,\tau)$ on \Cref{Line:couple-psat-return}.
    
    For any direct recursive call $\Couple(\+E,\+F,\sigma\land \pi,\tau\land \rho)$ such that $\pi=\vio{c}$ and  $\rho\in \+Q_{\vbl(c)}$ on \Cref{Line:couple-assign-return}, $\Lambda(\sigma\land \pi)=\Lambda(\tau\land \rho)$ trivially holds. Also, it must hold that $\mu_{\+E}^{\sigma}(\neg c)>0, \mu_{\+F}^{\tau}(\rho)>0 $ by Lines \ref{Line:couple-else}-\Cref{Line:couple-sample} of \Cref{Alg:couple}, meaning $\pi$ implies $\neg c$. Therefore, we have
   \begin{align*}
    \P{\+E\land \sigma\land \pi}=\P{\+E\land \pi\land \neg c}=\mu_{\+E}^{\sigma}(\neg c)\cdot \P{\+E\land \sigma}>0,
   \end{align*}
   \[
   \P{\+F\land \tau\land \rho}=\mu_{\+F}^{\tau}(\rho)\cdot \P{\+F\land \tau}>0,
   \]
   finishing the proof of the induction step. This proves that the procedure is well-defined.

   Knowing the procedure is well-defined, notice that \Cref{Alg:couple} terminates when $\+E^{\sigma}=\+F^{\tau}$, and in each recursive step, either the size of $\+E^{\sigma}\triangle\+F^{\tau}$ reduces by one, or the number of unassigned variables in $\sigma$ and $\tau$ reduces by at least one. 
   Due to the finiteness of the number of (pinned) constraints and variables, the procedure $\Couple(\+C\setminus \{c_0\},\+C,\varnothing,\varnothing)$ terminates eventually.

   At last, we apply a structural induction in the bottom-up order of recursion, to prove that each possible recursive call of $\Couple(\+E,\+F,\sigma,\tau)$ produces a correct coupling of $\mu_{\+E}^{\sigma}$ and $\mu^{\tau}_{\+F}$.
   
   The base case is when $\+E^{\sigma}=\+F^{\tau}$, and the above claim holds by Lines \ref{Line:couple-return-cond}-\ref{Line:couple-return} of \Cref{Alg:couple}.

   For the induction step, by symmetry, we only prove for the case when $\+F^{\tau}\not\subseteq \+E^{\sigma}$. Let $c$ be the smallest constraint in $\+F^{\tau}\setminus\+E^{\sigma}$. From Lines \ref{Line:couple-choose}-\ref{Line:couple-assign-return} of \Cref{Alg:couple}, we have the following two cases:
\begin{itemize}
\item With probability $\mu_{\+E}^{\sigma}(c)$, $\Couple(\+E,\+F,\sigma,\tau)$ returns the output of $\Couple(\+E\cup \{c\},\+F,\sigma,\tau)$. By the induction hypothesis, the random pair of assignments $(X_1,Y_1)$ returned by $\Couple(\+E\cup \{c\},\+F,\sigma,\tau)$ at \Cref{Line:couple-psat-return} follow the marginal distributions
\[X_1\sim \mu_{\+E\cup \{c\}}^{\sigma}=\mu_{\+E}^{\sigma}(\cdot \mid c)\quad\text{ and }\quad Y_1\sim \mu^{\tau}_{\+F}.\]

\item With probability $\mu_{\+E}^{\sigma}(\neg c)$, $\Couple(\+E,\+F,\sigma,\tau)$ returns the output of $\Couple(\+E,\+F,\sigma\land \pi,\tau\land \rho)$ for $\pi=\vio{c}$ and $\rho\sim \mu^\tau_{\+F,\vbl(c)}$.  By the induction hypothesis, the random pair of assignments $(X_2,Y_2)$ returned by $\Couple(\+E,\+F,\sigma\land \pi,\tau\land \rho)$ at \Cref{Line:couple-assign-return} follow the marginal distributions:
\begin{align*}
X_2\sim  \mu_{\+E}^{\sigma}(\cdot \mid \pi)= \mu_{\+E}^{\sigma}(\cdot \mid \neg c) \quad\text{ and }\quad Y_2\sim\sum\limits_{\rho\in \+Q_{\vbl(c)}}\mu_{\+F}^{\tau}(\rho)\cdot \mu_{\+F}^{\tau}(\cdot \mid \rho)=\mu^{\tau}_{\+F}.
\end{align*}
\end{itemize}
Hence, the pair of assignments $(X,Y)$ returned at \Cref{Line:couple-assign-return} follow the marginal distributions:
\begin{equation*}
    \begin{aligned}
    X\sim & \mu_{\+E}^{\sigma}(c)\cdot \mu_{\+E}^{\sigma}(\cdot \mid c)+\mu_{\+E}^{\sigma}(\neg c)\cdot\mu_{\+E}^{\sigma}(\cdot \mid \neg c) &=\mu_{\+E}^{\sigma},\\
    Y\sim &\mu_{\+E}^{\sigma}(c)\cdot \mu^{\tau}_{\+F}+\mu_{\+E}^{\sigma}(\neg c)\cdot\mu^{\tau}_{\+F} &=\mu^{\tau}_{\+F}.
    \end{aligned}
\end{equation*}
This finishes the last case of the induction step and the proof of the lemma.
\end{proof}

\subsection{2-tree witness for discrepancy}
We need to bound the decay of correlation stated in \Cref{lemma:correlation-decay}. 
Throughout, we fix the input atomic CSP formula $\Phi=(V,\+Q,\+C)$ and an arbitrary constraint $c_0\subseteq \+C$ assumed in \Cref{lemma:correlation-decay}. 
Our goal is to bound the discrepancy of the coupling measured in Hamming distance between the random pair $(X,Y)$ produced by $\Couple(\+C\setminus \{c_0\},\+C,\varnothing,\varnothing)$.
To achieve this goal, we introduce a notion of witness for the discrepancy.
Inspired by the 2-tree structure~\cite{Alon91}, the witness is defined as a subset of independent constraints accessed by \Cref{Alg:couple}.

A technical aspect worth paying attention to is that \Cref{Alg:couple} deals with pinned constraints $c\in \+C^*$; 
however, independence between the original unpinned constraints $\unpin{c}\in \+C$ should be ensured to properly bound the discrepancy.
This is formalized through the following definition.

\begin{definition}[witness for discrepancy]\label{definition:bad-constraints}
Given a run of \Cref{Alg:couple} from 
$\Couple(\+C\setminus \{c_0\},\+C,\varnothing,\varnothing)$,
the \emph{witness set} $B=B_{c_0}\subseteq \+C$ is a set of (unpinned) constraints, constructed as follows:
\begin{itemize}
    \item Initially, $B=\emptyset$.
    \item Whenever $\Couple(\+E,\+F, \sigma\land \pi, \tau\land \rho)$ is recursively called at \Cref{Line:couple-assign-return} or \Cref{Line:couple-assign-return-1},
    add the original unpinned constraint $\unpin{c}\in \+C$ into $B$ if it does not share variables with any constraints already in $B$.
    Formally, let $B\gets B\Join c$, where the \emph{join} operator $B\Join c$ is defined as:
\begin{equation}\label{eq:definition-f}
    B\Join c\defeq \begin{cases} B & \exists c'\in B\text{ s.t. }\vbl\left(\unpin{c}\right)\cap \vbl(c')\neq \emptyset;\\
B\cup \left\{\unpin{c}\right\} & \text{otherwise}.\end{cases}
\end{equation}
\end{itemize}
%
\end{definition}

The set $B$ constructed in the definition above gives a witness for the discrepancy in the coupling produced by \Cref{Alg:couple}, 
in the sense which we will see later that 
the hamming distance $d_{\text{Ham}}(X,Y)$ is always upper bounded by $k\cdot (D+1)\cdot |B|$
for the output $(X,Y)$ of $\Couple(\+C\setminus \{c_0\},\+C,\varnothing,\varnothing)$.

Recall the definitions of dependency graph (\Cref{def:dependency-graph}) and 2-tree (\Cref{definition:2-tree}), we have the following observation.

\begin{lemma}\label{lemma:bad-constraint-2-tree}
During the process in \Cref{definition:bad-constraints}, 
the set $B$ is always a $2$-tree in the dependency graph $G_{\Phi}$.
\end{lemma}
\begin{proof}

Note that by \Cref{definition:bad-constraints} and \eqref{eq:definition-f}, we directly have the set $B$ always disjoint. It remains to show that $G^2_{\Phi}(B)$ is always connected. We then prove this claim by a forward induction.

The induction basis is when the current parameter for $\Couple$ is $(\+E,\+F,\sigma,\tau)=(\+C\setminus \{c_0\},\+C,\varnothing,\varnothing)$, and the claim holds by convention as we initialized $B$ as $\emptyset$.

 For the induction step, assume the current parameter $(\+E,\+F,\sigma,\tau)$ and the current witness set $B$. By the induction hypothesis, $G^2_{\Phi}(B)$ is connected. We then prove the hypothesis for all possible direct recursive calls. If $\+E^{\sigma}=\+F^{\tau}$, no further recursion will be called and we are already done. Otherwise $\+E^{\sigma}\neq\+F^{\tau}$, we then only prove the case when $\+F^{\tau}\not\subseteq \+E^{\sigma}$. The case when $\+F^{\tau}\subseteq \+E^{\sigma}$ follows analogously. Let $c$ be the smallest constraint in $\+F^{\tau}\setminus \+E^{\sigma}$, we have two cases:
\begin{itemize}
    \item We recurse to $\Couple(\+E\cup \{c\},\+F,\sigma,\tau)$ with $B$ remains unchanged, the claim trivially holds.
    \item We recurse to $\Couple(\+E,\+F,\sigma\land \pi,\tau\land \rho)$ for $\pi=\vio{c}$ and some $\rho\in \+Q_{\vbl(c)}$ and we update $B\gets B\Join c$. If $B\Join c=B$, then the claim trivially holds by the induction hypothesis, otherwise assume $B\Join c=B\cup \{\unpin{c}\}$, we then have two additional cases:
    \begin{itemize}
        \item $\unpin{c}$ is exactly $c_0$, in this case it must hold that $(\+E,\+F,\sigma,\tau)=(\+C\setminus \{c_0\},\+C,\varnothing,\varnothing)$, since otherwise after the first step $c_0$ is satisfied in both $\+E^{\sigma}$ and $\+F^{\tau}$, and any pinned constraint of it can never have been chosen by the algorithm. In this case, $B\Join c$ is simply $\{\unpin{c_0}\}$ and therefore the claim holds.
        \item Otherwise, it must satisfy that $\vbl(c)\neq\vbl(\unpin{c})$, since this is the only way that $c$ can be added into $\+E^{\sigma}\triangle\+F^{\tau}$. Take any variable $v\in \vbl(\unpin{c})\setminus \vbl(c)$, and let $c'\in \+C^*$ be the pinned constraint containing $v$ chosen by the procedure when $v$ is assigned. We must have $\unpin{(c')}\notin B$ since otherwise we have $B\Join \unpin{c}=B$ by \eqref{eq:definition-f}.  Hence, there must exist another constraint $c^*\in B$ such that $\vbl(c')\cap \vbl(c^*)\neq \emptyset$ by the time we update $B\gets B\Join c'$ according to \eqref{eq:definition-f}. Hence $c^*$ and $\unpin{c}$ are adjacent in $G^2_{\Phi}$ and $G^2_{\Phi}(B\Join c)=G^2_{\Phi}(B\cup \{\unpin{c}\})$ is also connected by the induction hypothesis.
    \end{itemize}
\end{itemize}
\end{proof}

%

To establish \Cref{lemma:correlation-decay}, it is intuitive to explore a truncated version of \Cref{Alg:couple}, limited to $|B|< K$ for an integer $K\ge 1$, and then bound the probability of truncation. 
This will be formalized by the random process introduced in \Cref{definition:procedure-root-to-leaf-path-1}. 
Here, we monitor the sequence of input arguments across all recursive calls of $\Couple$, 
along with the associated witness set $B$, truncating it once its size surpasses $K$. 
This formal approach elucidates the structure of the recursive execution of \Cref{Alg:couple}, thereby enhancing our understanding and paving the way for subsequent algorithmic insights.

\begin{definition}[random process simulating $K$-truncated \Cref{Alg:couple}]\label{definition:procedure-root-to-leaf-path-1}
Let  $\*X\sim \mu_{\+C\setminus \{c_0\}}$ and $\*Y\sim \mu_{\+C}$ be drawn independently beforehand. 
Define the random process $P^{\cp}=P_K^{\cp}=\left\{(\+E_t,\+F_t,\sigma_t,\tau_t,B_t)\right\}_{t\ge 0}$ starting from the initial state $(\+E_0,\+F_0,\sigma_0,\tau_0,B_0)=(\+C\setminus \{c_0\},\+C,\varnothing,\varnothing,\emptyset)$ as follows:

    \begin{enumerate}
        \item If $|B|= K$ or $\+E^{\sigma}=\+F^{\tau}$, the process stops and $(\+E_t,\+F_t,\sigma_t,\tau_t,B_t)$ is the outcome of the process.
       \item Otherwise, suppose $\+F_t^{\tau_t}\not\subseteq \+E_t^{\sigma_t}$. Let $c$ be the smallest pinned constraint in $\+F_t^{\tau_t}\setminus \+E_t^{\sigma_t}$.
       \[
        \left(\+E_{t+1},\+F_{t+1},\sigma_{t+1},\tau_{t+1},B_{t+1}\right)\gets
        \begin{cases}
        \left(\+E_{t}\cup \{c\},\+F_{t},\sigma_{t},\tau_{t},B_{t}\right)   & 
         \text{$c$ is satisfied by $\*X_{\vbl(c)}$};\\
             \left(\+E_{t},\+F_{t},\sigma_{t}\land \*X_{\vbl(c)},\tau_{t}\land \*Y_{\vbl(c)},B_{t}\Join c\right)  & \text{otherwise}.
        \end{cases}
        \]
        \item Otherwise $\+F_{t}^{\tau}\subseteq \+E_{t}^{\sigma}$. Let $c$ be the smallest pinned constraint in $\+E_{t}^{\sigma}\setminus \+F_{t}^{\tau}$. 
        \[
       \left(\+E_{t+1},\+F_{t+1},\sigma_{t+1},\tau_{t+1},B_{t+1}\right)\gets
        \begin{cases}
           \left(\+E_{t},\+F_{t}\cup \{c\},\sigma_{t},\tau_{t},B_{t}\right)   & 
         \text{$c$ is satisfied by $\*Y_{\vbl(c)}$};\\
         \left(\+E_{t},\+F_{t},\sigma_{t}\land \*X_{\vbl(c)},\tau_{t}\land \*Y_{\vbl(c)},B_{t}\Join c\right)   & \text{otherwise}.
        \end{cases}
        \]
    \end{enumerate}
Let $\mu^{\cp}=\mu_K^{\cp}$ denote the distribution of the outcome $(\+E_\infty,\+F_\infty,\sigma_\infty,\tau_\infty,B_\infty)$
of this process,
%
and let $\+L^{\cp}=\supp(\mu^{\cp})$ be its support.
Let $\+L^{\cp}_K$ be the set of ``truncated'' outcomes, i.e.:
\[
\+L^{\cp}_K=\left\{(\+E,\+F,\sigma,\tau,B)\mid \mu^{\cp}((\+E,\+F,\sigma,\tau,B))>0\land |B|= K\right\}.
\]
At last, let $\+V^{\cp}$ denote the set of all possible $(\+E,\+F,\sigma,\tau,B)$ with $\Pr{(\+E,\+F,\sigma,\tau,B)\in P^{\cp}}>0$.

\end{definition}

\begin{remark}
Note that for atomic CSP $\Phi$ satisfying \Cref{condition:main-condition}, due to \Cref{locallemma} and total probability,
the random process $P^{\cp}$ and the distribution $\mu^{\cp}$ in \Cref{definition:procedure-root-to-leaf-path-1}  are well-defined.
\end{remark}

\begin{remark}[explicitly identified randomness]\label{remark:explicit-randomness}
In \Cref{definition:procedure-root-to-leaf-path-1}, all randomness used by the process $P^\cp$ is identified with the two pre-generated random assignments $\*X\sim \mu_{\+C\setminus {c_0}}$ and $\*Y\sim \mu_{\+C}$. However, in \Cref{Alg:couple}, random choices (in \Cref{Line:couple-psat-cond,Line:couple-sample,Line:couple-psat-cond-1,Line:couple-sample-1}) are generated at the moment they are used. According to the principle of deferred decision, this distinction does not affect the definition of $P^\cp$, and the process faithfully simulates \Cref{Alg:couple}. Nevertheless, such explicit identification of randomness is crucial for our analyses of the coupling and its algorithmic implications.
\end{remark}

A key observation is that the correlation decay of the coupling is bounded by the probability of truncated outcomes of the process constructed in \Cref{definition:procedure-root-to-leaf-path-1}.


\begin{lemma}\label{lemma:hamming-transform}
Assume \Cref{condition:main-condition}. For the output $(X,Y)$ of $\Couple(\+C\setminus \{c_0\},\+C,\varnothing,\varnothing)$,
\[
\Pr{d_{\mathrm{Ham}}(X,Y)\geq k\cdot(D+1)\cdot K}\leq  \mu^{\cp}\left[\+L^{\cp}_{K}\right].
\]
\end{lemma}
\begin{proof}
Note that by \Cref{definition:bad-constraints},  each time we recursively call $\Couple(\+E,\+F,\sigma\land \pi, \tau\land \rho)$ at \Cref{Line:couple-assign-return} or \Cref{Line:couple-assign-return-1} of \Cref{Alg:couple}, we update $B\gets B\Join c$, at most $k$ variables is additionally assigned to $\sigma/\tau$ and any pinned constraint of $\unpin{c}$ will never been chosen by the procedure in future steps. Note that each constraint in $\+C$ shares constraint with at most $D$ other constraints, hence by \eqref{eq:definition-f}, after $D+1$ such operations the size of $B$ increases by at least one.  Also, by \Cref{Line:couple-return} of \Cref{Alg:couple} we have $d_{\mathrm{Ham}}(X,Y)$ is upper bounded by the total number of variables assigned in $\sigma/\tau$ during $\Couple(\+C\setminus \{c_0\},\+C,\varnothing,\varnothing)$. Therefore, we have
\[
d_{\mathrm{Ham}}(X,Y)\geq k\cdot (D+1)\cdot K\implies |B|\geq K.
\]
We claim that the process in \Cref{definition:procedure-root-to-leaf-path-1} can be coupled with \Cref{Alg:couple} initialized by $\Couple(\+C\setminus \{c_0\},\+C,\varnothing,\varnothing)$ so that each time we recursively call $\Couple(\+E,\+F,\sigma,\tau)$ with witness set $B$, we will also move to the same tuple $(\+E,\+F,\sigma,\tau,B)$ in \Cref{definition:procedure-root-to-leaf-path-1}. Under the coupling assumed by the claim, it follows that
\[
|B|\geq K\implies (\+E^{\cp},\+F^{\cp},\sigma^{\cp},\tau^{\cp},B^{\cp})\in \+L^{\cp}_K,
\]
and the lemma immediately follows.

We strengthen the claim that conditioning on we are currently at some tuple $(\+E,\+F,\sigma,\tau,B)$ in \Cref{definition:procedure-root-to-leaf-path-1}, it follows that
\begin{equation}\label{eq:initial-assignment-distribution}
\*X\sim \mu^{\sigma}_{\+E},\quad \*Y\sim \mu^{\tau}_{\+F}.
\end{equation}
and prove the strengthened claim using a structural induction in the top-down order. The base case is for the initial call $\Couple(\+C\setminus \{c_0\},\+C,\varnothing,\varnothing)$ with witness set $B=\emptyset$. Note that in \Cref{definition:procedure-root-to-leaf-path-1} we also initially have $(\+E,\+F,\sigma,\tau,B)=(\+C\setminus \{c_0\},\+C,\varnothing,\varnothing,\emptyset)$, and 
\[
\*X\sim \mu_{\+C\setminus \{c_0\}}=\mu^{\sigma}_{\+E},\quad \*Y\sim \mu_{\+C}=\mu^{\tau}_{\+F}.
\]

For the induction step, assume we are currently executing $\Couple(\+C\setminus \{c_0\},\+C,\varnothing,\varnothing)$ with witness set $B$. By the induction hypothesis, we are also at the same tuple $(\+E,\+F,\sigma,\tau,B)$ in \Cref{definition:procedure-root-to-leaf-path-1}, so that \eqref{eq:initial-assignment-distribution} holds.

Now it remains to compare \Cref{Alg:couple} with \Cref{definition:procedure-root-to-leaf-path-1} and notice that under \eqref{eq:initial-assignment-distribution}, the transition probabilities to the next tuples are the same in both. Therefore, the next tuple in both can still be perfectly coupled. Also, it can be verified that \eqref{eq:initial-assignment-distribution} also holds for each possible branch in \Cref{definition:procedure-root-to-leaf-path-1}. This finishes the proof of the claim and the lemma.
\end{proof}

The following tail bound for the measure of truncated outcomes is crucial for proving \Cref{lemma:correlation-decay}.
\begin{lemma}\label{lemma:process-tail-bound}
Assume \Cref{condition:main-condition}. For any integer $K\geq 1$,
\[
 \mu^{\cp}\left[\+L^{\cp}_{K}\right]\leq 2^{-K}.
\]
\end{lemma}

\Cref{lemma:correlation-decay} then follows directly from combining \Cref{lemma:couple-well-defined,lemma:hamming-transform,lemma:process-tail-bound}.
All we need to do now is prove \Cref{lemma:process-tail-bound}.

\subsection{Refutation of large witnesses}
By the above arguments, it suffices to bound the probability $\mu^{\cp}\left[\+L^{\cp}_K\right]$.
Consider the random outcome generated by the process in \Cref{definition:procedure-root-to-leaf-path-1}:
\[
\left(\+E^{\cp},\+F^{\cp},\sigma^{\cp},\tau^{\cp},B^{\cp}\right)\sim\mu^{\cp}.
\]
It then suffices to show that the witness set $B^{\cp}$ is very unlikely to become too large.
We first establish the following tail bound for any particular large enough subset of $B^{\cp}$.


\begin{lemma}\label{lemma:disjoint-prob}
Assume \Cref{condition:main-condition}. For any subset of disjoint constraints $T\subseteq \+C$,
\[
\Pr{T\subseteq B^{\cp}}\leq p^{\frac{2|T|}{2+\zeta}}\cdot (1-\mathrm{e}p)^{2(D+1)|T|}.
\]
\end{lemma}

\begin{proof}

 We claim that for each $c\in T$, the event $c\in B^{\cp}$ implies the following event 
 \begin{equation}\label{eq:definition-ec}
\+A_c: \forall v\in \vbl(c), \quad\*X(v)=\vio{c}(v) \lor \*Y(v)=\vio{c}(v).
\end{equation}

To prove the claim, by contradiction we suppose that there exists some $v\in \vbl(c)$ such that both 
$\*X(v)\neq \vio{c}(v)$ and $\*Y(v)\neq \vio{c}(v)$. Let $c'$ be the pinned constraint chosen by the procedure when $c$ is added into $B^{\cp}$ at some step in \Cref{definition:procedure-root-to-leaf-path-1}, then we have $c=\unpin{(c')}$. It must hold that $v\in \Lambda(c')$, as otherwise $c'$ is both satisfied in $\+E$ and $\+F$ conditioning on the assignment of $\*X$ and $\*Y$ on $\vbl(c)\setminus \Lambda(c')$, and could not have been chosen by the procedure. However, $v\in \Lambda(c')$ is also not possible as otherwise by the time $c'$ is chosen in \Cref{definition:procedure-root-to-leaf-path-1}, $c'$ will be satisfied by both $\*X_{\vbl(c')}$ and $\*Y_{\vbl(c')}$ and will not be added into $B^{\cp}$, a contradiction. The above claim is proved.
 
 Thus we have
 \begin{equation}\label{eq:disjoint-prob-1}
\begin{aligned}
\Pr{T\subseteq B^{\cp}}\leq & \Pr{\bigwedge\limits_{c\in T}\+A_c}\\
(\text{by disjointness between $c\in T$})\quad\leq &\prod\limits_{c\in T}\Pr{\+A_c}\\
\leq & \prod\limits_{c\in T}\left(\prod\limits_{v\in \vbl(c)}\left(\frac{2}{|Q_v|}-\frac{1}{|Q_v|^2}\right)\cdot (1-\mathrm{e}p)^{-2(D+1)}\right).
\end{aligned}
 \end{equation}

 Here, the last inequality is by interpreting the probability space for generating $\*X\sim \mu_{\+C}$ and $\*Y\sim \mu_{\+C\setminus \{c_0\}}$ as the product space over two copies of the distribution $\+P$, conditioning on that all constraints in $\+C$ are satisfied in the first copy, and all constraints in $\+C\setminus \{c_0\}$ are satisfied in the second copy. Note that this can be viewed as an LLL distribution with dependency degree at most $D$ and violation probability of each bad event at most $p$. Also, each event $\+A_c$ is mutually dependent with all but $2(D+1)$ bad events. Therefore, setting $x(c)=\mathrm{e}p$ for each bad event $c$ and applying \Cref{HSS} gives to the above last inequality. 

Note that since $\+C$ is a set of atomic constraints, we have for each $c\in \+C$,
\[
    \prod\limits_{v\in \vbl(c)}\frac{1}{|Q_v|}\leq p, 
\]
and for any $v\in \vbl(c)$,
\[
    \frac{\ln \left(2/|Q_v|-1/|Q_v|^2\right)}{\ln \left(1/|Q_v|\right)}=2-\frac{\ln(2/|Q_v|-1)}{\ln(1/|Q_v|)}\geq 2-\frac{\ln(2/q_{\min}-1)}{\ln q_{\min}}=\frac{2}{2+\zeta},
\]
where in above we use that $\frac{\ln(2x-1)}{\ln x}$ is a monotonically decreasing function for all $x>1$.

Hence, combining with \eqref{eq:disjoint-prob-1} we have
\[
\Pr{T\subseteq B^{\cp}}\leq \prod\limits_{c\in T}\left(p^{\frac{2}{2+\zeta}}\cdot (1-\mathrm{e}p)^{-2(D+1)}\right)=p^{\frac{2|T|}{2+\zeta}}\cdot (1-\mathrm{e}p)^{-2(D+1)|T|}. \qedhere
\]
\end{proof}

With the help of \Cref{lemma:disjoint-prob}, we are ready to prove \Cref{lemma:process-tail-bound}, completing the proof of \Cref{lemma:correlation-decay}.
\begin{proof}[Proof of \Cref{lemma:process-tail-bound}]
Let $\mathbb{T}_K^{c_0}$ be the set of $2$-trees in $G_{\Phi}$ of size $K$ containing $c_0$. Since $K\geq 1$ we have all $(\+E^{\cp},\+F^{\cp},\sigma^{\cp},\tau^{\cp},B^{\cp})\in \+V_{K}^{\cp}$ satisfy that $c_0\in B^{\cp}$ by \Cref{condition:main-condition} and \Cref{definition:procedure-root-to-leaf-path-1}. Then we have
\begin{align*}
&\mu^{\cp}[\+L^{\cp}_{K}]\\
(\text{by \Cref{definition:bad-constraints}})\quad\leq &\Pr{|B^{\cp}|\geq K}\\
(\text{by \Cref{lemma:bad-constraint-2-tree}})\quad\leq &\sum\limits_{T\in \mathbb{T}_K^{c_0}}\Pr{T\subseteq B^{\cp}}\\
(\text{by \Cref{lemma:disjoint-prob}})\quad\leq &\sum\limits_{T\in \mathbb{T}_K^{c_0}} p^{\frac{2K}{2+\zeta}}\cdot (1-\mathrm{e}p)^{-2(D+1)K}\\
(\text{by \Cref{lemma:num-2-tree}})\quad\leq &\frac{(\mathrm{e}D^2)^{K-1}}{2}\cdot p^{\frac{2L}{2+\zeta}}\cdot (1-\mathrm{e}p)^{-2(D+1)K}\\
\leq & \left(\mathrm{e}D^2\cdot p^\frac{2}{2+\zeta}\cdot (1-\mathrm{e}p)^{-2(D+1)}\right)^{K}\\
(\text{by \Cref{condition:main-condition}})\quad \leq &2^{-K}.
\end{align*}

\end{proof}

\section{Linear programming from constraint-wise coupling}\label{sec:algorithms}

In this section, we will present a linear program that is the central component for the counting and sampling algorithms within the local lemma regime, as stated in \Cref{theorem:main-counting} and \Cref{theorem:main-sampling}.


As in previous works for LP-based counting LLL~\cite{Moi19,guo2019counting,Vishesh21towards}, we will set up a linear program to mimic the transition probabilities in the coupling procedure (\Cref{definition:procedure-root-to-leaf-path-1}) with the goal of bootstrapping the constraint-wise marginal ratios. However, our construction of the coupling differs vastly from previous works. Additionally, our analysis of the coupling is near-critical in reaching the optimal threshold, making the design of the linear program challenging and requiring significant novelty. Nevertheless, we have managed to distill critical structures from the analysis in \Cref{sec:correlation-decay} that are sufficient for constructing the LP.

\subsection{Marginal probabilities from coupling procedure}


Fix an atomic CSP formula $\Phi=(V,\+Q,\+C)$, an arbitrary constraint $c_0\in \+C$, and an integer $K\ge 1$. 
Recall the random process $P^{\cp}=P^{\cp}_K=\{(\+E_t,\+F_t,\sigma_t,\tau_t,B_t)\}_{t\ge 0}$  and the distribution $\mu^{\cp}=\mu^{\cp}_K$ of its final outcome, both introduced in \Cref{definition:procedure-root-to-leaf-path-1}.
The following defines a family of probabilities arising from a natural one-sided sampling process induced from $P^{\cp}$.
These probabilities correspond to the variables of the linear program that will be introduced later.

\begin{definition}[one-sided sampler and marginal probabilities]\label{definition:imaginary-sampler-quantities}
Let ${X}$ (resp.~${Y}$) be drawn as:
\begin{itemize}
\item draw $(\bm{\+E},\bm{\+F},\bm{\sigma},\bm{\tau},\bm{B})\sim\mu^{\cp}$;
\item draw ${X}\sim \mu^{\bm{\sigma}}_{\bm{\+E}}$ (and resp. ${Y}\sim \mu^{\bm{\tau}}_{\bm{\+F}}$).
\end{itemize}
For any evaluation of $(\+E,\+F,\sigma,\tau,B)$, any $\bm{x}\in \+Q^{\+E\land\sigma}\triangleq\{\pi\in\+Q\mid \pi\text{ satisfies }\+E\land\sigma\}$ and $\bm{y}\in \+Q^{\+F\land\tau}$, define:
      \[p_{(\+E,\+F,\sigma,\tau,B)}^X\triangleq\Pr{(\+E,\+F,\sigma,\tau,B)\in P^{\cp}\mid {X}=\bm{x}},
      \]
      \[p_{(\+E,\+F,\sigma,\tau,B)}^Y\triangleq\Pr{(\+E,\+F,\sigma,\tau,B)\in P^{\cp}\mid {Y}=\bm{y}}.
    \]
\end{definition}

Note that in above definition, $p_{(\+E,\+F,\sigma,\tau,B)}^X=\Pr{(\+E,\+F,\sigma,\tau,B)\in P^{\cp}\mid {X}=\bm{x}}$ for \emph{any} $\bm{x}\in \+Q^{\+E\land\sigma}$ (and similarly $p_{(\+E,\+F,\sigma,\tau,B)}^Y=\Pr{(\+E,\+F,\sigma,\tau,B)\in P^{\cp}\mid {Y}=\bm{y}}$ for any $\bm{y}\in \+Q^{\+F\land\tau}$). This is actually well-defined. 
And furthermore, these probabilities are in fact marginal probabilities of some partial assignments in some joint distributions.
These are formally justified by the following proposition.

\begin{proposition}\label{lemma:quantities-well-defined}
Assume \Cref{condition:main-condition}. Fix any evaluation of $(\+E,\+F,\sigma,\tau,B)$. It holds that
\begin{align*}
\forall \bm{x},\bm{x}'\in \+Q^{\+E\land\sigma}:&&  \Pr{(\+E,\+F,\sigma,\tau,B)\in P^\cp\mid {X}=\bm{x}}&=\Pr{(\+E,\+F,\sigma,\tau,B)\in P^\cp\mid {X}=\bm{x}'},\\
\forall \bm{y},\bm{y}'\in \+Q^{\+F\land\tau}:&&  \Pr{(\+E,\+F,\sigma,\tau,B)\in P^\cp\mid {Y}=\bm{y}}&=\Pr{(\+E,\+F,\sigma,\tau,B)\in P^\cp\mid {Y}=\bm{y}'}.
\end{align*}
Moreover, it holds that
\begin{equation}\label{eq:real-quantities}
p_{(\+E,\+F,\sigma,\tau,B)}^{X}=\mu_{\+C}(\+F\land \tau) \quad\text{ and } \quad p_{(\+E,\+F,\sigma,\tau,B)}^{Y}=\mu_{\+C\setminus \{c_0\}}(\+E\land \sigma).
 \end{equation}
\end{proposition}

\begin{proof}
Following the same routine as the proof of \Cref{lemma:hamming-transform} (more precisely, following \eqref{eq:initial-assignment-distribution}), we can couple the one-sided sampler in \Cref{definition:imaginary-sampler-quantities} with the process $P^\cp$ so that the sample $X$ (resp. $Y$) generated by the one-sided sampler is identified with the random source $\*X\sim \mu_{\+C\setminus \{c_0\}}$ (resp. $\*Y\sim \mu_{\+C}$) used in the construction of $P^\cp$ in \Cref{definition:procedure-root-to-leaf-path-1}. 

We then only prove the first equality in \eqref{eq:real-quantities}, and the other follows analogously. Fix some $\bm{x}\in \+Q$, then under the coupling between $X$ and $\*X$ it immediately follows that
\[
\Pr{X=\bm{x}}=\mu_{\+C\setminus \{c_0\}}(\bm{x}).
\]
We claim that for any $(\+E,\+F,\sigma,\tau,B)\in \+V^{\cp}$,
\begin{equation}\label{eq:quantities-well-defined-claim}
    \Pr{(\+E,\+F,\sigma,\tau,B)\in P^{\cp}\land X= \bm{x}}=\begin{cases}0 & \bm{x}\notin \+Q^{\+E\land \sigma};\\ \mu_{\+C\setminus \{c_0\}}(\bm{x})\cdot \mu_{\+C}(\+F\land \tau)& \bm{x}\in \+Q^{\+E\land \sigma},\end{cases}
\end{equation}
and the lemma follows from the law of conditional probability.

We then prove \eqref{eq:quantities-well-defined-claim}. Note that the first case is immediate by \Cref{definition:imaginary-sampler-quantities}. 

Following the proof of  \Cref{lemma:hamming-transform}, we have that for each $(\+E,\+F,\sigma,\tau,B)\in \+V^{\cp}$,
\begin{equation}\label{eq:quantities-well-defined-property}
\Pr{(\+E,\+F,\sigma,\tau,B)\in P^{\cp}}=\mu_{\+C\setminus \{c_0\}}(\+E\land \sigma)\cdot \mu_{\+C}(\+F\land \tau).
\end{equation}

Therefore, for any $(\+E,\+F,\sigma,\tau,B)\in \+V^{\cp}$ and any $\bm{x}\in \+Q^{\+E\land \sigma}$,
\begin{align*}
&\Pr{(\+E,\+F,\sigma,\tau,B)\in P^{\cp}\land X=\bm{x}}\\
=& \Pr{(\+E,\+F,\sigma,\tau,B)\in P^{\cp}}\cdot \Pr{X=\bm{x}\mid (\+E,\+F,\sigma,\tau,B)\in P^{\cp}}\\
(\text{by \Cref{definition:imaginary-sampler-quantities}})\quad =& \Pr{(\+E,\+F,\sigma,\tau,B)\in P^{\cp}}\cdot \mu^{\sigma}_{\+E}(\bm{x})\\
(\text{by \eqref{eq:quantities-well-defined-property}})\quad=&\mu_{\+C\setminus \{c_0\}}(\+E\land \sigma)\cdot \mu_{\+C}(\+F\land \tau)\cdot \mu^{\sigma}_{\+E}(\bm{x})\\
(\text{by the chain rule})\quad=&\mu_{\+C\setminus \{c_0\}}(\bm{x})\cdot \mu_{\+C}(\+F\land \tau),
\end{align*}
finishing the proof of \eqref{eq:quantities-well-defined-claim}. Here, the last equality is additionally by that $\+E\land \sigma$ implies $\+C\setminus \{c_0\}$ for each $(\+E,\+F,\sigma,\tau,B)\in \+V^{\cp}$, an argument easily proven through an induction on \Cref{definition:procedure-root-to-leaf-path-1}.
\end{proof}

The following lists some basic properties of  $p_{(\+E,\+F,\sigma,\tau,B)}^X$ and $p_{(\+E,\+F,\sigma,\tau,B)}^Y$.

\begin{proposition}\label{lemma:quantities-properties}
    Assume \Cref{condition:main-condition}. The followings hold for the $p_{(\+E,\+F,\sigma,\tau,B)}^X$ and $p_{(\+E,\+F,\sigma,\tau,B)}^Y$:
\begin{enumerate}
    \item $p_{(\+E,\+F,\sigma,\tau,B)}^X,p_{(\+E,\+F,\sigma,\tau,B)}^Y\in [0,1]$. \label{item:quantities-properties-1}
    In particular, $p_{(\+C\setminus \{c_0\},\+C,\varnothing,\varnothing,\emptyset)}^X=p_{(\+C\setminus \{c_0\},\+C,\varnothing,\varnothing,\emptyset)}^Y=1$.
    \item For any $(\+E,\+F,\sigma,\tau,B) \in \+V^{\cp}\setminus \+L^{\cp}$, where $\+V^{\cp}$ and $\+L^\cp$ are defined in \Cref{definition:procedure-root-to-leaf-path-1},  \label{item:quantities-properties-2}
    \begin{enumerate}
        \item if $\+F^{\tau}\not\subseteq \+E^{\sigma}$, letting $c\in\+F^{\tau}\setminus \+E^{\sigma}$ be the smallest and $\pi=\vio{c}$,
        \begin{align*}
            p_{(\+E,\+F,\sigma,\tau,B)}^X =
            &p^X_{(\+E\cup \{c\},\+F,\sigma,\tau,B)}
            =
            \sum\limits_{\substack{\rho\in \+Q_{\vbl(c)}\\ (\+E,\+F,\sigma\land \pi,\tau\land \rho,B\Join c)\in \+V^{\cp}}}p^X_{(\+E,\+F,\sigma\land \pi,\tau\land \rho,B\Join c)};\\
             p_{(\+E,\+F,\sigma,\tau,B)}^Y =
             &p^Y_{(\+E\cup \{c\},\+F,\sigma,\tau,B)}+p^Y_{(\+E,\+F,\sigma\land \pi,\tau\land \rho,B\Join c)},\\
             &\text{for all }\rho\in \+Q_{\vbl(c)}\text{ and all }(\+E,\+F,\sigma\land \pi,\tau\land \rho,B\Join c)\in \+V^{\cp}.
        \end{align*}
       \label{item:quantities-properties-3-a}
        \item otherwise $\+F^{\tau}\subseteq \+E^{\sigma}$, letting $c\in\+E^{\sigma}\setminus \+F^{\tau}$ be the smallest and  $\rho=\vio{c}$,
        \begin{align*}
        p_{(\+E,\+F,\sigma,\tau,B)}^X
=
&p^X_{(\+E,\+F\cup \{c\},\sigma,\tau,B)}+p^X_{(\+E,\+F,\sigma\land \pi,\tau\land \rho,B\Join c)},\\
&\text{for all }\pi\in \+Q_{\vbl(c)}\text{ and all }(\+E,\+F,\sigma\land \pi,\tau\land \rho,B\Join c)\in \+V^{\cp};\\
p_{(\+E,\+F,\sigma,\tau,B)}^Y
            =
            &p^Y_{(\+E,\+F\cup \{c\},\sigma,\tau,B)}
            =
            \sum\limits_{\substack{\pi\in \+Q_{\vbl(c)}\\(\+E,\+F,\sigma\land \pi,\tau\land \rho,B\Join c)\in \+V^{\cp}}}p^Y_{(\+E,\+F,\sigma\land \pi,\tau\land \rho,B\Join c)}.
        \end{align*}\label{item:quantities-properties-3-b}
    \end{enumerate}
        \item For any $(\+E,\+F,\sigma,\tau,B)\in \+V^{\cp}$,
        \[
        p_{(\+E,\+F,\sigma,\tau,B)}^X\cdot \frac{ |\+Q^{\+E\land \sigma}|}{|\+Q^{\+C\setminus \{c_0\}}|}= p_{(\+E,\+F,\sigma,\tau,B)}^Y\cdot \frac{|\+Q^{\+F\land \tau}|}{|\+Q^{\+C}|}.
        \] \label{item:quantities-properties-4}
\end{enumerate}
\end{proposition}
These properties follow directly from \Cref{definition:imaginary-sampler-quantities} or from \Cref{lemma:quantities-well-defined}, and are easy to verify.

The following bounds for $p_{(\+E,\+F,\sigma,\tau,B)}^X$ and $p_{(\+E,\+F,\sigma,\tau,B)}^Y$ with 2-trees
follow from a similar argument as in our analysis of the correlation decay property (e.g.~the proof of \Cref{lemma:disjoint-prob}).

\begin{proposition}\label{lemma:quantities-property-bad-event}
     Assume \Cref{condition:main-condition}. Let $\mathbb{T}_K^{c_0}$ be the set of $2$-trees in $G_{\Phi}$ of size $K$ containing $c_0$. 
 Recall that $\+P$ is the product distribution.
 For each $c\in \+C$, let $\+A_c$ be the event defined as \eqref{eq:definition-ec}, that is,
       \[\+A_c: \forall v\in \vbl(c), \quad\*X(v)=\vio{c}(v) \lor \*Y(v)=\vio{c}(v) .
\]
Then for any $T\in \mathbb{T}_K^{c_0}$ and $\bm{x}\in \+Q^{\+C\setminus \{c_0\}}$,
\begin{align*}
         \sum\limits_{\substack{(\+E,\+F,\sigma,\tau,B)\in \+L^{\cp}:\\ B=T\land \bm{x}\in \+Q^{\+E\land \sigma}}} p^X_{(\+E,\+F,\sigma,\tau,B)}
         &\leq (1-\mathrm{e}p)^{(D+1)K}\Pr[\*Y\sim \+P]{\bigwedge\limits_{c\in T}\+A_c\mid \*X=\bm{x}},
\end{align*}
and for any $T\in \mathbb{T}_K^{c_0}$ and $\bm{y}\in \+Q^{\+C}$,
\begin{align*}
     \sum\limits_{\substack{(\+E,\+F,\sigma,\tau,B)\in \+L^{\cp}:\\ B=T\land \bm{y}\in \+Q^{\+F\land \tau}}} p^Y_{(\+E,\+F,\sigma,\tau,B)}
     &\leq (1-\mathrm{e}p)^{(D+1)K}\Pr[\*X\sim \+P]{\bigwedge\limits_{c\in T}\+A_c\mid \*Y=\bm{y}}.
\end{align*}
\end{proposition}

\begin{proof}
We only prove the first inequality, and the second one follows analogously.

Let $\left(\+E^{\cp},\+F^{\cp},\sigma^{\cp},\tau^{\cp},B^{\cp}\right)\sim\mu^{\cp}$ denote the random outcome of the process $P^\cp$ constructed in \Cref{definition:procedure-root-to-leaf-path-1}.
Following the same routine as the proof of \Cref{lemma:hamming-transform} (more precisely, following \eqref{eq:initial-assignment-distribution}), we can couple the one-sided sampler in \Cref{definition:imaginary-sampler-quantities} with the process $P^\cp$ so that the sample $X$ (resp. $Y$) generated by the one-sided sampler is identified with the random source $\*X\sim \mu_{\+C\setminus \{c_0\}}$ (resp. $\*Y\sim \mu_{\+C}$) used in the construction of $P^\cp$ in \Cref{definition:procedure-root-to-leaf-path-1}. 
Fix any $\bm{x}\in \+Q^{\+C\setminus \{c_0\}}$. Then we have
\begin{align*}
 \sum\limits_{\substack{(\+E,\+F,\sigma,\tau,B)\in \+L^{\cp}:\\ B=T\land \bm{x}\in \+Q^{\+E\land \sigma}}} p^X_{(\+E,\+F,\sigma,\tau,B)}
 =&\sum\limits_{\substack{(\+E,\+F,\sigma,\tau,B)\in \+L^{\cp}:\\ B=T\land \bm{x}\in \+Q^{\+E\land \sigma}}} \Pr{(\+E,\+F,\sigma,\tau,B)\in P^{\cp}\mid {X}=\bm{x}}\\
  =&\Pr{B^{\cp}=T\mid {X}=\bm{x}}\\
 (\text{by the coupling between $X$ and $\*X$})\qquad =& \Pr[\substack{\*X\sim\mu_{\+C\setminus \{c_0\}}\\\*Y\sim \mu_{\+C}}]{B^{\cp}=T\mid \*X=\bm{x}}\\
 \leq & \Pr[\substack{\*X\sim\mu_{\+C\setminus \{c_0\}}\\\*Y\sim \mu_{\+C}}]{\bigwedge\limits_{c\in T}\+A_c\mid \*X=\bm{x}}\\
(\text{by \Cref{HSS}})\qquad \leq & (1-\mathrm{e}p)^{(D+1)K}\Pr[\*Y\sim \+P]{\bigwedge\limits_{c\in T}\+A_c\mid \*X=\bm{x}}.
\end{align*}
Here, the first inequality follows the same argument in the proof of \Cref{lemma:disjoint-prob} that $B^{\cp}=T$ implies the event $\bigwedge\limits_{c\in T}\+A_c$.
\end{proof}

\Cref{lemma:quantities-property-bad-event} utilizes the independence between the random assignments $\*X,\*Y$,
which are generated as the random source used by the process $P^\cp$ constructed in \Cref{definition:procedure-root-to-leaf-path-1} for simulating \Cref{Alg:couple}.  
Such explicit identification of randomness helped in proving \Cref{lemma:disjoint-prob} and also will play a key role in constructing the linear program that will be introduced later.




\subsection{Setting up the linear program}





Next,  we  set up the linear program that mimicries the marginal probabilities $p_{(\+E,\+F,\sigma,\tau,B)}^X$ and $p_{(\+E,\+F,\sigma,\tau,B)}^Y$  introduced in \Cref{definition:imaginary-sampler-quantities}, in order to bootstrap the ratio between the numbers of assignments satisfying $\+C$ and those satisfying $\+C\setminus \{c_0\}$. 

\subsubsection{The coupling tree}
We define the following notion of the recursion tree for the coupling procedure $\Couple(\+C\setminus \{c_0\},\+C,\varnothing,\varnothing)$, truncated whenever the witness set $B$ has size $|B|\geq K$.


\begin{definition}[$K$-truncated coupling tree]\label{definition:coupling-tree}
The \emph{$K$-truncated coupling tree} $\+T=\+T_K(\Phi,c_0)$ is a finite rooted tree, 
with each tree node corresponding to a $(\+E,\+F,\sigma,\tau,B)$, which is the tuple appeared in \Cref{definition:procedure-root-to-leaf-path-1}.
The tree $\+T$ is inductively constructed as follows:
    \begin{enumerate}
    \item The root of $\+T$ corresponds to $(\+C\setminus \{c_0\},\+C,\varnothing,\varnothing,\emptyset)$, and has depth $0$.
    \label{item:coupling-tree-1}
    \item 
    For $i=0,1,\ldots$: for all existing tree nodes $(\+E,\+F,\sigma,\tau,B)\in V(\+T)$ of depth $i$ in the current $\+T$,\label{item:coupling-tree-2}
    \begin{enumerate}
        \item if $\sigma$ violates $\+E$ or $\tau$ violates $\+F$ or $\+E^{\sigma}=\+F^{\tau}$ or $|B|=K$, then do nothing and hence leave $(\+E,\+F,\sigma,\tau,B)$ as a leaf node in $\+T$; \label{item:coupling-tree-2-a}
        \item else if $\+F^{\tau}\not\subseteq \+E^{\sigma}$, then pick the smallest $c\in\+F^{\tau}\setminus \+E^{\sigma}$, add $(\+E\cup \{c\},\+F,\sigma,\tau,B)$ as a child of $(\+E,\+F,\sigma,\tau,B)$ in $\+T$, and for $\pi=\vio{c}$ and each $\rho\in \+Q_{\vbl(c)}$, add $(\+E,\+F,\sigma\land \pi,\tau\land \rho,B\Join c)$ as a child of $(\+E,\+F,\sigma,\tau,B)$ in $\+T$;\label{item:coupling-tree-2-b}
        \item else, pick the smallest $c\in\+E^{\sigma}\setminus \+F^{\tau}$, add $(\+E,\+F\cup \{c\},\sigma,\tau,B)$ as a child of $(\+E,\+F,\sigma,\tau,B)$ in $\+T$, and for each $\pi\in \+Q_{\vbl(c)}$ and $\rho=\vio{c}$, add $(\+E,\+F,\sigma\land \pi,\tau\land \rho,B\Join c)$ as a child of $(\+E,\+F,\sigma,\tau,B)$ in $\+T$.\label{item:coupling-tree-2-c}
    \end{enumerate}
\end{enumerate}
Let $\+L$ be the set of leaf nodes in $\+T$. 
We further define:
\begin{itemize}
    \item $\+L_{\coup}\defeq \{(\+E,\+F,\sigma,\tau,B)\in \+L\mid \+E^{\sigma}=\+F^{\tau}\}$ as the set of ``coupled'' leaf nodes in $\+T$;
    \item $\+L_{\trun}\defeq \{(\+E,\+F,\sigma,\tau,B)\in \+L\mid |B|=K\}$ as the set of ``truncated'' leaf nodes in $\+T$;
    \item $\+{L}_{\text{valid}}\defeq\+L_{\trun}\cup \+L_{\coup}$ as the set of ``valid'' leaf nodes in $\+T$;
    \item $\+L_{\text{invld}}\defeq \+L\setminus \+{L}_{\text{valid}}$ as the set of ``invalid'' leaf nodes in $\+T$.
\end{itemize}
\end{definition}

\begin{remark}[well-definedness of coupling tree]\label{remark:coupling-tree-well-defined}
For each internal node $(\+E,\+F,\sigma,\tau,B)$ in the coupling tree $\+T$, 
the event $\+E\land \sigma\land \+F\land \tau$ is partitioned in some way into mutually disjoint events, e.g.:
\[
(\+E\cup \{c\})\land \sigma\land \+F\land \tau,
\qquad 
\+E\land (\sigma\land \vio{c})\land \+F\land (\tau\land \rho), 
\quad \forall \rho\in \+Q_{\vbl(c)},
\]
each of which corresponds to a child.
Therefore, all nodes in $\+T$ are distinct and $\+T$ is well-defined.

Note that every leaf node $(\+E,\+F,\sigma,\tau,B)\in \+L$ satisfies exactly one of the followings:
 \begin{itemize}
     \item $\+E^{\sigma}=\+F^{\tau}$;
     \item $|B|=K$;
     \item $\sigma$ violates $\+E$ or $\tau$ violates $\+F$.
 \end{itemize}
Therefore, the classification of leaf nodes in \Cref{definition:coupling-tree} is also well defined.
\end{remark}

\begin{remark}[difference between the coupling tree and the original coupling process]\label{remark:coupling-tree}
Note that the coupling tree $\+T$ may include more nodes than the tuples $(\+E,\+F,\sigma,\tau,B)$  that can possibly be generated by the process in \Cref{definition:procedure-root-to-leaf-path-1}. 
This is because some branch in the coupling tree $\+T$ may have 0 probability in the random process.
Some properties are unaffected by such distinction and still hold for the coupling tree, for instance, \Cref{lemma:bad-constraint-2-tree}, while others need to be reproved in this stronger setting.
\end{remark}

\subsubsection{The linear program}

We now introduce the linear program.
The LP is constructed on an instance of $K$-truncated coupling tree introduced in \Cref{definition:coupling-tree}, 
where each tree node $(\+E,\+F,\sigma,\tau,B)$ is associated with a pair of variables, corresponding to $p_{(\+E,\+F,\sigma,\tau,B)}^X$ and $p_{(\+E,\+F,\sigma,\tau,B)}^Y$. 
The constraints of the LP are derived from the properties listed in \Cref{lemma:quantities-properties,lemma:quantities-property-bad-event}.

\begin{definition}[linear program induced by the coupling] \label{definition:linear-program}
Let $\+T=\+T_K(\Phi,c_0)$ be the $K$-truncated coupling tree. Let $0\leq r_- \leq r_+$ be two parameters.
The following linear program (for checking feasibility) is defined on the variables $\hat{p}_{(\+E,\+F,\sigma,\tau,B)}^X$ and $\hat{p}_{(\+E,\+F,\sigma,\tau,B)}^Y$ for all $(\+E,\+F,\sigma,\tau,B)\in V(\+T)$:
\begin{enumerate}[label=\Roman*.]
    \item \emph{Range constraints}: 
\begin{align*}
&\hat{p}_{(\+C\setminus \{c_0\},\+C,\varnothing,\varnothing,\emptyset)}^X
=\hat{p}_{(\+C\setminus \{c_0\},\+C,\varnothing,\varnothing,\emptyset)}^Y=1;\\
&\hat{p}_{(\+E,\+F,\sigma,\tau,B)}^X,\hat{p}_{(\+E,\+F,\sigma,\tau,B)}^Y\in [0,1], \qquad \forall(\+E,\+F,\sigma,\tau,B)\in V(\+T).
\end{align*}
\label{item:linear-program-range-constraints}
\item \emph{Non-leaf constraints}: 
For each non-leaf node $(\+E,\+F,\sigma,\tau,B) \in V(\+T)\setminus\+L$, \label{item:linear-program-non-leaf-constraints}
    \begin{enumerate}
        \item if $\+F^{\tau}\not\subseteq \+E^{\sigma}$, letting $c$ be the smallest constraint in $\+F^{\tau}\setminus \+E^{\sigma}$ and $\pi=\vio{c}$,
\begin{align*}
\hat{p}_{(\+E,\+F,\sigma,\tau,B)}^X
&=\hat{p}^X_{(\+E\cup \{c\},\+F,\sigma,\tau,B)}=\sum\limits_{\rho\in \+Q_{\vbl(c)}}\hat{p}^X_{(\+E,\+F,\sigma\land \pi,\tau\land \rho,B\Join c)};\\  
\hat{p}_{(\+E,\+F,\sigma,\tau,B)}^Y
&=\hat{p}^Y_{(\+E\cup \{c\},\+F,\sigma,\tau,B)}+\hat{p}^Y_{(\+E,\+F,\sigma\land \pi,\tau\land \rho,B\Join c)}, \qquad \forall\rho\in \+Q_{\vbl(c)}.
\end{align*}
\label{item:linear-program-2-a}
        \item otherwise $\+F^{\tau}\subseteq \+E^{\sigma}$, letting $c$ be the smallest constraint in $\+E^{\sigma}\setminus \+F^{\tau}$ and $\rho=\vio{c}$,
\begin{align*}
            \hat{p}_{(\+E,\+F,\sigma,\tau,B)}^X
            &=\hat{p}^X_{(\+E,\+F\cup \{c\},\sigma,\tau,B)}+\hat{p}^X_{(\+E,\+F,\sigma\land \pi,\tau\land \rho,B\Join c)}, \qquad \forall\pi\in \+Q_{\vbl(c)};\\
            \hat{p}_{(\+E,\+F,\sigma,\tau,B)}^Y
            &=\hat{p}^Y_{(\+E,\+F\cup \{c\},\sigma,\tau,B)}=\sum\limits_{\pi\in \+Q_{\vbl(c)}}\hat{p}^Y_{(\+E,\+F,\sigma\land \pi,\tau\land \rho,B\Join c)}.
\end{align*}
        \label{item:linear-program-2-b}
    \end{enumerate}
\item \emph{Leaf constraints}: \label{item:linear-program-leaf-constraints}
For each leaf node $(\+E,\+F,\sigma,\tau,B)\in \+L$,
    \begin{enumerate}
    \item if it is a coupled leaf $(\+E,\+F,\sigma,\tau,B)\in\+L_{\coup}$, 
        \[
       r_-\cdot \hat{p}_{(\+E,\+F,\sigma,\tau,B)}^Y\leq \hat{p}_{(\+E,\+F,\sigma,\tau,B)}^X\leq r_+\cdot \hat{p}_{(\+E,\+F,\sigma,\tau,B)}^Y.
        \] \label{item:linear-program-coupled}
  \item if it is an invalid leaf $(\+E,\+F,\sigma,\tau,B)\in \+L_{\text{invld}}$,
  \begin{align*}
      \hat{p}_{(\+E,\+F,\sigma,\tau,B)}^Y=0, \text{ if $\sigma$ violates $\+E$}; \\
      \hat{p}_{(\+E,\+F,\sigma,\tau,B)}^X=0, \text{ if $\tau$ violates $\+F$}.
  \end{align*}
  \label{item:linear-program-invalid}
    \end{enumerate}
        \item \emph{Overflow constraints (via 2-tree)}: 
        Let $\mathbb{T}_K^{c_0}$ be the set of $2$-trees in $G_{\Phi}$ of size $K$ containing $c_0$. 
\begin{align*}
         \sum\limits_{\substack{(\+E,\+F,\sigma,\tau,B)\in \+L_{\trun}:\\ B=T\land \bm{x}\in \+Q^{\+E\setminus (\+C\setminus \{c_0\})\land \sigma}}} \hat{p}^X_{(\+E,\+F,\sigma,\tau,B)}
         &\leq (1-\mathrm{e}p)^{(D+1)K}\Pr[\*Y\sim \+P]{\bigwedge\limits_{c\in T}\+A_c\mid \*X=\bm{x}};
         &\forall T\in \mathbb{T}_K^{c_0}, \bm{x}\in \+Q;\\
     \sum\limits_{\substack{(\+E,\+F,\sigma,\tau,B)\in \+L_{\trun}:\\ B=T\land \bm{y}\in \+Q^{\+F\setminus \+C\land \tau}}} \hat{p}^Y_{(\+E,\+F,\sigma,\tau,B)}
     &\leq (1-\mathrm{e}p)^{(D+1)K}\Pr[\*X\sim \+P]{\bigwedge\limits_{c\in T}\+A_c\mid \*Y=\bm{y}},
     &\forall T\in \mathbb{T}_K^{c_0}, \bm{y}\in \+Q,
\end{align*}
where recall that $\+P$ is the product distribution and $\+A_c$ denotes the event defined in \eqref{eq:definition-ec}, i.e.
       \[\+A_c: \forall v\in \vbl(c), \quad\*X(v)=\vio{c}(v) \lor \*Y(v)=\vio{c}(v).
\]
\label{item:linear-program-overflow-constraints}

\end{enumerate}
\end{definition}
\vspace{-2ex}
Due to the independence between the constraints within a 2-tree, the probabilities involving $\+A_c$ in the overflow constraints (\Cref{item:linear-program-overflow-constraints}) of \Cref{definition:linear-program}, can be calculated explicitly as:
\begin{align*}
    \Pr[\*Y\sim \+P]{\bigwedge\limits_{c\in T}\+A_c\mid \*X=\bm{x}}
    &=\prod\limits_{c\in T}\prod\limits_{v\in \vbl(c)}|Q_v|^{-I\left[\bm{x}_v\neq \vio{c}_v\right]},\\
\Pr[\*X\sim \+P]{\bigwedge\limits_{c\in T}\+A_c\mid \*Y=\bm{y}}
&=\prod\limits_{c\in T}\prod\limits_{v\in \vbl(c)}|Q_v|^{-I\left[\bm{y}_v\neq \vio{c}_v\right]}.
\end{align*}
The overflow constraints (\Cref{item:linear-program-overflow-constraints}) of \Cref{definition:linear-program} are derived from \Cref{lemma:quantities-property-bad-event}. 
These constraints encapsulate a critical ``independence property'' pivotal to our analysis of the correlation decay.
They stand as the primary novelty in our linear program design, distinguishing it from previous LP-based approaches~\cite{Moi19,galanis2019counting,Vishesh21towards}. Note, however, that the overflow constraints slightly differ from \Cref{lemma:quantities-property-bad-event}, as the domain of $\bm{x}/\bm{y}$ have been extended and the conditions under the summation have changed. These differences serve technical purposes for the linear program to be efficient.

While it is natural to question the number of overflow constraints described in \Cref{definition:linear-program}, 
we will address this concern later (in \Cref{lemma:lp-properties}). 
Specifically, we will show that there are indeed $\mathrm{exp}(K\cdot \poly(k,\log q_{\max},D))$ distinct overflow constraints,
and moreover, it takes approximately that amount of time to both  write down the linear program and verify its feasibility.

\subsection{Analysis of the linear program}
We now present several properties for the above linear program. 
First, we show that the feasibility of the linear program can be checked efficiently.

\begin{proposition}\label{lemma:lp-properties}
For any $0\leq r_{-}\leq r_{+}$, the feasibility of the linear program in \Cref{definition:linear-program} can be checked within $\mathrm{exp}(K\cdot \poly(k,\log q_{\max},D))$ time. 
\end{proposition}

\Cref{lemma:lp-properties} holds because there are at most $\mathrm{exp}(K\cdot \poly(k,\log q_{\max},D))$ constraints that need to be checked.
Its formal proof is deferred to \Cref{sec:lp-properties}.

Next, we show the soundness of this linear program:
The true values of the marginal probabilities satisfy all the linear constraints.

\begin{lemma}\label{lemma:lp-feasibility}
Assume \Cref{condition:main-condition}. The LP in \Cref{definition:linear-program} is feasible for $r_{-}=r_{+}=\frac{|\+Q^{\+C\setminus \{c_0\}}|}{|\+Q^{\+C}|}$.
\end{lemma}
\begin{proof}

For each $(\+E,\+F,\sigma,\tau,B)\in V(\+T)$, we let
\begin{equation}\label{eq:linear-program-solutions}
\hat{p}_{(\+E,\+F,\sigma,\tau,B)}^X=\mu_{\+C}(\+F\land \tau), \quad \hat{p}_{(\+E,\+F,\sigma,\tau,B)}^Y=\mu_{\+C\setminus \{c_0\}}(\+E\land \sigma),
\end{equation}
and show that this is a feasible solution for $r_{-}=r_{+}=\frac{|\+Q^{\+C\setminus \{c_0\}}|}{|\+Q^{\+C}|}$. We remark that following \Cref{lemma:couple-well-defined}, this choice of the solution in \eqref{eq:linear-program-solutions} coincides with the real values of $p_{(\+E,\+F,\sigma,\tau,B)}^{X},p_{(\+E,\+F,\sigma,\tau,B)}^{Y}$ defined in \Cref{definition:imaginary-sampler-quantities}, but extends to all nodes in $V(\+T)$ instead of just $\+V^{\cp}$.

\Cref{item:linear-program-range-constraints} directly follows from \Cref{item:quantities-properties-1} of \Cref{lemma:quantities-properties}. Also, \Cref{item:linear-program-non-leaf-constraints} directly follows from \Cref{item:quantities-properties-2} of \Cref{lemma:quantities-properties}.

 \Cref{item:linear-program-invalid} directly follows from \eqref{eq:linear-program-solutions}. To prove \Cref{item:linear-program-coupled} , note that for $(\+E,\+F,\sigma,\tau,B)\in \+L_{\coup}$, $\+E^\sigma=\+F^{\tau}$ and hence $|\+Q^{\+E\land \sigma}|=|\+Q^{\+F\land \tau}|$ . Therefore,
\begin{itemize}
    \item if $|\+Q^{\+E\land \sigma}|=|\+Q^{\+F\land \tau}|>0$, then \Cref{item:linear-program-coupled}  follows directly from \Cref{item:quantities-properties-4} of \Cref{lemma:quantities-properties};
    \item otherwise $|\+Q^{\+E\land \sigma}|=|\+Q^{\+F\land \tau}|=0$, then $\mu_{\+C}(\+F\land \tau)=\mu_{\+C\setminus \{c_0\}}(\+E\land \sigma)=0$ and \Cref{item:linear-program-coupled} still holds.
\end{itemize}
 
The proof of \Cref{item:linear-program-overflow-constraints} falls into a similar vibe as the proof for \Cref{lemma:quantities-property-bad-event}. However, since now the domain for $\bm{x}$ and $\bm{y}$ has changed, we need to define another procedure similar to \Cref{definition:procedure-root-to-leaf-path-1} for explicitly identifying randomness to interpret the left-hand side in the inequality of \Cref{item:linear-program-overflow-constraints}. The formal proof of \Cref{item:linear-program-overflow-constraints} is deferred to \Cref{sec:lp-bounds}.
\end{proof}

At last, we show that the feasibility of the linear program implies that $r_{-}$ and $r_{+}$ provide respective 
lower and upper bound approximations for $\frac{|\+Q^{\+C\setminus \{c_0\}}|}{|\+Q^{\+C}|}$ with bounded multiplicative error. 
With this, we can apply a binary search to approximate the true value of $\frac{|\+Q^{\+C\setminus \{c_0\}}|}{|\+Q^{\+C}|}$.

\begin{lemma}\label{lemma:linear-program-error-bound}
Assume \Cref{condition:main-condition}. 
If the LP in \Cref{definition:linear-program} is feasible for parameters $0\leq r_-\leq r_+$, then
\[
\left(1-2\cdot 2^{-K}\right)r_-\leq \frac{|\+Q^{\+C\setminus \{c_0\}}|}{|\+Q^{\+C}|}\leq \left(1+2\cdot 2^{-K}\right)r_+.
\]
\end{lemma}
\begin{proof}
First, we claim the following property for the feasible solutions.
   \begin{claim}\label{lemma:lp-sum-to-1}
\begin{align*}
    \sum\limits_{\substack{(\+E,\+F,\sigma,\tau,B)\in \+L_{\textnormal{valid}}:\\\bm{x}\in \+Q^{\+E\land \sigma}}}\hat{p}_{(\+E,\+F,\sigma,\tau,B)}^X
    &=1, \quad\textnormal{for all }\bm{x}\in \+Q^{\+C\setminus \{c_0\}},\\
     \sum\limits_{\substack{(\+E,\+F,\sigma,\tau,B)\in \+L_{\textnormal{valid}}:\\\bm{y}\in \+Q^{\+F\land \tau}}}\hat{p}_{(\+E,\+F,\sigma,\tau,B)}^Y
     &=1, \quad\textnormal{for all }\bm{y}\in \+Q^{\+C}.
\end{align*}   
    \end{claim}
This claim can be proved by routinely verifying \Cref{definition:linear-program}. The  proof is deferred to \Cref{sec:lp-properties}.
    By summing the equations in \Cref{lemma:lp-sum-to-1} over all $c\in\+Q^{\+C\setminus \{c_0\}}$ and respectively all $c\in \+Q^{\+C}$, we obtain:
    \begin{equation}\label{eq:ss-sum}
    \begin{aligned}
                  |\+Q^{\+C\setminus \{c_0\}}|
                  &=\sum\limits_{\bm{x}\in \+Q^{\+C\setminus \{c_0\}}}\sum\limits_{\substack{(\+E,\+F,\sigma,\tau,B)\in \+L_{\text{valid}}:\\ \bm{x}\in \+Q^{\+E\land \sigma}}}\hat{p}_{(\+E,\+F,\sigma,\tau,B)}^X,\\
                      |\+Q^{\+C}|
                      &=\sum\limits_{\bm{y}\in \+Q^{\+C}}\sum\limits_{\substack{(\+E,\+F,\sigma,\tau,B)\in \+L_{\text{valid}}:\\\bm{y}\in \+Q^{\+F\land \tau}}}\hat{p}_{(\+E,\+F,\sigma,\tau,B)}^Y.
    \end{aligned}
    \end{equation}
\begin{claim}\label{lemma:linear-program-bad-leaf-loss}
Assume \Cref{condition:main-condition}. The followings hold. 
\begin{align*}
    \frac{1}{|\+Q^{\+C\setminus \{c_0\}}|}\sum\limits_{\bm{x}\in \+Q^{\+C\setminus \{c_0\}}}\sum\limits_{\substack{(\+E,\+F,\sigma,\tau,B)\in \+L_{\trun}:\\\bm{x}\in \+Q^{\+E\land \sigma}}}\hat{p}_{(\+E,\+F,\sigma,\tau,B)}^X 
    &\leq 2^{-K},\\
    \frac{1}{|\+Q^{\+C}|}\sum\limits_{\bm{y}\in \+Q^{\+C}}\sum\limits_{\substack{(\+E,\+F,\sigma,\tau,B)\in \+L_{\trun}:\\\bm{y}\in \+Q^{\+F\land \tau}}}\hat{p}_{(\+E,\+F,\sigma,\tau,B)}^Y
    &\leq 2^{-K}.
\end{align*}
\end{claim}
This claim can be proved by a similar argument to the analysis of correlation decay in \Cref{sec:correlation-decay}.  The proof is deferred to \Cref{sec:lp-bounds}.

We then have
\begin{align*}
(\text{by \eqref{eq:ss-sum}})&&
|\+Q^{\+C\setminus \{c_0\}}|
 =&\sum\limits_{\bm{x}\in \+Q^{\+C\setminus \{c_0\}}}\sum\limits_{\substack{(\+E,\+F,\sigma,\tau,B)\in \+L_{\text{valid}}:\\\bm{x}\in \+Q^{\+E\land \sigma}}}\hat{p}_{(\+E,\+F,\sigma,\tau,B)}^X\\
&&=&\sum\limits_{\bm{x}\in \+Q^{\+C\setminus \{c_0\}}}\sum\limits_{\substack{(\+E,\+F,\sigma,\tau,B)\in \+L_{\coup}:\\\bm{x}\in \+Q^{\+E\land \sigma}}}\hat{p}_{(\+E,\+F,\sigma,\tau,B)}^X\\
&&&+\sum\limits_{\bm{x}\in \+Q^{\+C\setminus \{c_0\}}}\sum\limits_{\substack{(\+E,\+F,\sigma,\tau,B)\in \+L_{\trun}:\\\bm{x}\in \+Q^{\+E\land \sigma}}}\hat{p}_{(\+E,\+F,\sigma,\tau,B)}^X\\
(\text{by \Cref{lemma:linear-program-bad-leaf-loss}})
&&\leq & \sum\limits_{(\+E,\+F,\sigma,\tau,B)\in \+L_{\coup}}|\+Q^{\+E\land \sigma}|\cdot \hat{p}_{(\+E,\+F,\sigma,\tau,B)}^X+2^{-K}\cdot |\+Q^{\+C\setminus \{c_0\}}|. 
\end{align*}
Therefore, we have
\begin{align*}
    \left|\+Q^{\+C\setminus \{c_0\}}\right|
    \in &\left[\hat{z}^X,\,\,\left(1+2\cdot 2^{-K}\right)\hat{z}^X \right],\\
    &\text{where }\hat{z}^X\triangleq \sum\limits_{(\+E,\+F,\sigma,\tau,B)\in \+L_{\coup}}|\+Q^{\+E\land \sigma}|\cdot \hat{p}_{(\+E,\+F,\sigma,\tau,B)}^X;\\
    \left|\+Q^{\+C}\right|
    \in &\left[\hat{z}^Y,\,\,\left(1+2\cdot 2^{-K}\right)\hat{z}^Y \right],\\
    &\text{where }\hat{z}^Y\triangleq \sum\limits_{(\+E,\+F,\sigma,\tau,B)\in \+L_{\coup}}|\+Q^{\+F\land \tau}|\cdot \hat{p}_{(\+E,\+F,\sigma,\tau,B)}^Y.
\end{align*}
Therefore, 
\begin{align*}
\frac{|\+Q^{\+C\setminus \{c_0\}}|}{|\+Q^{\+C}|}
&\leq 
\left(1+2\cdot 2^{-K}\right)\frac{\sum\limits_{(\+E,\+F,\sigma,\tau,B)\in \+L_{\coup}}|\+Q^{\+E\land \sigma}|\cdot \hat{p}_{(\+E,\+F,\sigma,\tau,B)}^X}{\sum\limits_{(\+E,\+F,\sigma,\tau,B)\in \+L_{\coup}}|\+Q^{\+F\land \tau}|\cdot \hat{p}_{(\+E,\+F,\sigma,\tau,B)}^Y}\\
&\leq \left(1+2\cdot 2^{-K}\right)r_{+},
\end{align*}
where the last inequality is due to the leaf constraints of the LP (\Cref{item:linear-program-leaf-constraints} of \Cref{definition:linear-program}).
Symmetrically, we have
\begin{align*}
\frac{|\+Q^{\+C\setminus \{c_0\}}|}{|\+Q^{\+C}|}
&\geq \left(1-2\cdot 2^{-K}\right)\frac{\sum\limits_{(\+E,\+F,\sigma,\tau,B)\in \+L_{\coup}}|\+Q^{\+E\land \sigma}|\cdot \hat{p}_{(\+E,\+F,\sigma,\tau,B)}^X}{\sum\limits_{(\+E,\+F,\sigma,\tau,B)\in \+L_{\coup}}|\+Q^{\+F\land \tau}|\cdot \hat{p}_{(\+E,\+F,\sigma,\tau,B)}^Y}\\
&\geq \left(1-2\cdot 2^{-K}\right)r_{-}.
\end{align*}
Together, we have
\[
\left(1-2\cdot 2^{-K}\right)r_-\leq \frac{|\+Q^{\+C\setminus \{c_0\}}|}{|\+Q^{\+C}|}\leq \left(1+2\cdot 2^{-K}\right)r_+.\qedhere
\]
\end{proof}

\section{Counting and sampling via linear programming} 
In this section, we utilize the linear program introduced in \Cref{sec:algorithms} to prove \Cref{theorem:main-counting,theorem:main-sampling}.
\subsection{Constraint-wise self-reducibility in the local lemma regime}
To obtain the algorithms for approximate counting and sampling required in \Cref{theorem:main-counting,theorem:main-sampling}, we will first provide a \emph{marginal estimator} and a \emph{dynamic sampler}.
Within the local lemma regime characterized by \Cref{condition:main-condition},
the marginal estimator can efficiently estimate the ratio between the numbers of satisfying assignments for two sets of constraints differing in exactly one constraint;
and the dynamic sampler can dynamically update a sample for the uniform satisfying assignment when a new constraint is added.
Such a marginal estimator and a dynamic sampler are stated in the following two theorems.

\begin{theorem}[marginal estimator]\label{lemma:marginal-estimator}
There exists a deterministic algorithm such that for any $\Phi=(V,\+Q,\+C)$ satisfying \Cref{condition:main-condition}, any $c_0\in\+C$ and $\varepsilon\in (0,1)$, 
within time $O\left(\left(\frac{1}{\varepsilon}\right)^{\poly(k,D,\log q_{\max})}\right)$ the algorithm outputs an estimate $\hat{r}$ satisfying
\[
(1-\varepsilon)\frac{Z_{\Phi}}{Z_{\Phi^{c_0}}}\leq \hat{r}\leq (1+\varepsilon)\frac{Z_{\Phi}}{Z_{\Phi^{c_0}}},
\]
where $Z_{\Phi}$ and $Z_{\Phi^{c_0}}$ respectively denote the numbers of solutions to $\Phi$ and $\Phi^{c_0}=(V,\+Q,\+C\setminus \{c_0\})$.
\end{theorem}

\begin{theorem}[dynamic sampler]\label{lemma:marginal-sampler}
There exists a randomized algorithm such that for any $\Phi=(V,\+Q,\+C)$ satisfying \Cref{condition:main-condition}, any $c_0\in\+C$
and $\varepsilon\in (0,1)$, given access to a sample $\sigma\sim\mu_{\Phi^{c_0}}$ of the uniform distribution $\mu_{\Phi^{c_0}}$ over all solutions to $\Phi^{c_0}=(V,\+Q,\+C\setminus \{c_0\})$, 
within time $O\left(\left(\frac{1}{\varepsilon}\right)^{\poly(k,D,\log q_{\max})}\right)$ the algorithm updates the $\sigma$ to an assignment $\tau\in\+Q$ satisfying
\[
\dtv(\tau,\mu_{\Phi})\leq \varepsilon.
\]
\end{theorem}

Through a constraint-wise self-reduction that maintains the local lemma regime, 
\Cref{theorem:main-counting} is implied by \Cref{lemma:marginal-estimator} and \Cref{theorem:main-sampling} is implied by \Cref{lemma:marginal-sampler}.


\begin{proof}[Proof of \Cref{theorem:main-counting}]
Let $\+C=\{c_1,c_2,\dots,c_m\}$. Note that we have $m\leq n(D+1)$.  For each $0\leq i\leq m$, we let $\+C_i=\{c_1,c_2,\dots,c_i\}$, $\Phi_i=(V,\+Q,\+C_i)$ and $Z_{\Phi_i}$ denote the number of solutions to $\Phi_i$. Note that $\Phi_m=\Phi$. We can then decompose $Z_{\Phi_m}$ in the following way:
\begin{equation}
    Z_{\Phi_m}=Z_{\Phi_0}\cdot\frac{Z_{\Phi_{m}}}{Z_{\Phi_0}}=Z_{\Phi_0}\cdot \prod\limits_{i=1}^{m}\frac{Z_{\Phi_i}}{Z_{\Phi_{i-1}}}.
\end{equation}
Note that $Z_{\Phi_0}=\prod\limits_{v\in V}|Q_v|$ is easily computable in $\poly(n,q_{\max})$ time. We can then apply \Cref{lemma:marginal-estimator} to approximate each $\frac{Z_{\Phi_i}}{Z_{\Phi_{i-1}}}$up to multiplicative error $\frac{\varepsilon}{4m}$ and multiply them together, by that $\Phi_i$ satisfies \Cref{condition:main-condition} for each $0\leq i\leq m$. In this way we obtain an estimate up to error $\varepsilon$ within time $O\left(\left(\frac{n}{\varepsilon}\right)^{\poly(k,D,\log q_{\max})}\right)$, as desired.
\end{proof}

\begin{proof}[Proof of \Cref{theorem:main-sampling}]
Let $\+C=\{c_1,c_2,\dots,c_m\}$. Note that we have $m\leq n(D+1)$.  For each $0\leq i\leq m$, we let $\+C_i=\{c_1,c_2,\dots,c_i\}$ and $\Phi_i=(V,\+Q,\+C_i)$. We build a sequence of assignments $\sigma_0,\sigma_1,\dots,\sigma_m\in \+Q$ 
as follows: 
\begin{itemize}
    \item Sample $\sigma_0\sim \+P$.
    \item For each $1\leq i\leq m$, use the dynamic sampler in \Cref{lemma:marginal-sampler} to obtain $\sigma_i$ from $\sigma_{i-1}$ with input formula $(V,\+Q,\+C_{i})$, input constraint $c_{i}$,  input parameter $\frac{\varepsilon}{m}$, and input sample $\sigma_{i-1}$. We output an arbitrary final assignment instead if $\sigma_i$ isn't a satisfying solution to $\Phi_i$ for some $0\leq i\leq m$.
\end{itemize}

By \Cref{lemma:marginal-sampler} and induction, it easily follows from maximally coupling $\sigma_{i}$ with a uniform solution of $\Phi_i$ and applying the coupling lemma (\Cref{lemma:coupling-lemma}) that
\[
\forall 0\leq i\leq m, \dtv(\sigma_i,\mu_{\+C_{i}})\leq \frac{i}{m}\cdot \varepsilon.
\]
Also, the overall running time is bounded by $O\left(\left(\frac{n}{\varepsilon}\right)^{\poly(k,D,\log q_{\max})}\right)$. Therefore, the theorem is proved.
\end{proof}

\subsection{Construction of the marginal estimator/sampler}

It remains to prove \Cref{lemma:marginal-estimator,lemma:marginal-sampler}. 
\Cref{lemma:marginal-estimator} is proved  by directly applying the linear program constructed and analyzed in \Cref{sec:algorithms}.

\begin{proof}[Proof of \Cref{lemma:marginal-estimator}]
We set the truncation threshold as.
\[
K=1+\log \varepsilon^{-1}.
\]
Applying a binary search on the parameters $r_-$ and $r_+$, 
one can obtain an estimate of $\frac{|\+Q^{\+C\setminus \{c_0\}}|}{|\+Q^{\+C}|}$ within multiplicative error $\varepsilon$ by \Cref{lemma:linear-program-error-bound}.
The running time bound follows from \Cref{lemma:lp-properties}.
\end{proof}

We then prove \Cref{lemma:marginal-sampler}.  The dynamic sampler claimed in \Cref{lemma:marginal-sampler} is constructed as follows.

\begin{definition}[dynamic sampler]\label{definition:marginal-sampling-algorithm}
The algorithm takes as input a CSP $\Phi=(V,\+Q,\+C)$, a constraint $c_0\in \+C$ and $\varepsilon\in (0,1)$, 
and is given access to an assignment $\sigma^{\text{in}}\in \+Q$ which satisfies $\Phi^{c_0}=(V,\+Q,\+C\setminus \{c_0\})$. 

    The algorithm proceeds as follows to update  $\sigma^{\text{in}}$ to a new assignment $\sigma^{\text{out}}\in \+Q$:
    \begin{itemize}
        \item Set the truncation threshold as
        \[
            K=1+\log \left(\frac{\varepsilon}{4}\right)^{-1}.
        \]
        Construct the LP in \Cref{definition:linear-program} on the $K$-truncated coupling tree $\+T=\+T_K(\Phi,c_0)$.
        Apply a binary search to narrow down the interval $[r_{-},r_{+}]$ so that $r_{-}\geq \frac{4+\varepsilon}{4+2\varepsilon}r_{+}$ and the LP is still feasible with parameters $r_{-}$ and $r_{+}$. 
        By \Cref{lemma:linear-program-error-bound}, it is guaranteed that
        \[
        \left(1-\frac{\varepsilon}{4}\right)r_{-}\leq \frac{|\+Q^{\+C\setminus \{c_0\}}|}{|\+Q^{\+C}|}\leq \left(1+\frac{\varepsilon}{4}\right)r_+.
        \]
        Let $\left\{\hat{p}_{(\+E,\+F,\sigma,\tau,B)}^X,\hat{p}_{(\+E,\+F,\sigma,\tau,B)}^Y\right\}_{(\+E,\+F,\sigma,\tau,B)\in V(\+T)}$ be the corresponding LP feasible solution. 
       
        \item Use the following random process to generate a root-to-leaf path in $\+T$. 
        The process starts from the root node $(\+C\setminus \{c_0\},\+C,\varnothing,\varnothing,\emptyset)$. 
        In each step, at a non-leaf node $(\+E,\+F,\sigma,\tau,B)$, it does:
\begin{enumerate}
        \item if $\+F^{\tau}\not\subseteq \+E^{\sigma}$, denoted by $c$ be the smallest pinned constraint in $\+F^{\tau}\setminus \+E^{\sigma}$, 
        \begin{itemize}
            \item move to $(\+E\cup \{c\},\+F,\sigma,\tau,B)$ if $c$ is satisfied by $\sigma^{\text{in}}_{\vbl(c)}$;
            \item otherwise, move to $(\+E,\+F,\sigma\land \sigma^{\text{in}}_{\vbl(c)},\tau\land \rho,B\Join c)$  for each $\rho\in \+Q_{\vbl(c)}$ w.p. 
            \[
            \frac{\hat{p}^X_{\+E,\+F,\sigma\land \sigma^{\text{in}}_{\vbl(c)},\tau\land \rho,B\Join c}}{\hat{p}_{(\+E,\+F,\sigma,\tau,B)}^X};
            \]
        \end{itemize}
        \item otherwise $\+F^{\tau}\subseteq \+E^{\sigma}$, denoted by $c$ be the smallest pinned constraint in $\+E^{\sigma}\setminus \+F^{\tau}$,
        \begin{itemize}
            \item move to $(\+E,\+F\cup \{c\},\sigma,\tau,B)$ with probability
            \[
            \frac{\hat{p}^X_{(\+E,\+F\cup \{c\},\sigma,\tau,B)}}{\hat{p}_{(\+E,\+F,\sigma,\tau,B)}^X};
            \]
            \item move to $(\+E,\+F,\sigma\land \sigma^{\text{in}}_{\vbl(c)},\tau\land \rho,B\Join c)$ for $\rho=\vio{c}$  with probability   \[
            \frac{\hat{p}^X_{(\+E,\+F,\sigma\land \sigma^{\text{in}}_{\vbl(c)},\tau\land \rho,B\Join c)}}{\hat{p}_{(\+E,\+F,\sigma,\tau,B)}^X}.
            \]
        \end{itemize}

        
       \end{enumerate}
In the end, the random process arrives at a random leaf $(\bm{\+E},\bm{\+F},\bm{\sigma},\bm{\tau},\bm{B})\in \+L$.
Let $P^{\samp}$ denote the root-to-leaf path generated by this process.
          \item  If the sampled leaf node $(\bm{\+E},\bm{\+F},\bm{\sigma},\bm{\tau},\bm{B})\notin \+L_{\coup}$, 
          then let $\sigma^{\text{out}}\in \+Q$ be an arbitrary unsatisfying assignment of $\Phi$; otherwise, update  $\sigma^{\text{in}}$  by changing the values of the assigned variables to the corresponding values as in $\bm{\tau}$, i.e.~let $\sigma^{\text{out}}\gets \bm{\tau}\land \sigma^{\text{in}}_{V\setminus\Lambda(\bm{\tau})}$.
    \end{itemize}
\end{definition}

We then finish the section by proving the following lemma, which directly implies \Cref{lemma:marginal-sampler}.

\begin{lemma}\label{lemma:marginal-sampler-tvd-bound}
Assume \Cref{condition:main-condition} and $\sigma^{\textnormal{in}}\sim \mu_{\+C\setminus \{c_0\}}$. 
Then the $\sigma^{\textnormal{out}}$ produced by the dynamic sampler in \Cref{definition:marginal-sampling-algorithm} satisfies
\[
\dtv(\sigma^{\textnormal{out}},\mu_{\+C})\leq \varepsilon.
\]
\end{lemma}
\begin{proof}
We need the following lemma, whose proof is deferred to \Cref{sec:lp-bounds}.

\begin{lemma}\label{lemma:sampling-alg-path-probability-bound}
Assume \Cref{condition:main-condition}. Let $\sigma^{\textnormal{in}}\sim \mu_{\+C\setminus \{c_0\}}$. For any $(\+E,\+F,\sigma,\tau,B)\in V(\+T)$,  we have
\begin{align*}
\Pr{(\+E,\+F,\sigma,\tau,B)\in P^{\samp}}=&\mu_{\+C\setminus \{c_0\}}(\+E\land \sigma)\cdot \hat{p}_{(\+E,\+F,\sigma,\tau,B)}^X.
\end{align*}
Moreover, conditioning on $(\+E,\+F,\sigma,\tau,B)\in P^{\samp}$, it follows that
\[
\sigma^{\textnormal{in}}\sim \mu_{\+E}^{\sigma},
\]
for each $(\+E,\+F,\sigma,\tau,B)$ such that 
$\Pr{(\+E,\+F,\sigma,\tau,B)\in P^{\samp}}>0$.
\end{lemma}
   Then for each $\bm{y}\in \+Q^{\+C}$, we have
  \begin{align*}
\Pr{\sigma^{\text{out}}=\bm{y}}= &\sum\limits_{\substack{(\+E,\+F,\sigma,\tau,B)\in \+L_{\coup}:\\\bm{y}\in \+Q^{\+F\land \tau}}}\mu_{\+C\setminus \{c_0\}}(\+E\land \sigma)\cdot \hat{p}_{(\+E,\+F,\sigma,\tau,B)}^X\cdot \frac{1}{|\+Q^{\+E\land \sigma}|}\\
(\text{by \Cref{item:linear-program-coupled} of \Cref{definition:linear-program}})\quad\geq & r_{-}\cdot \sum\limits_{\substack{(\+E,\+F,\sigma,\tau,B)\in \+L_{\coup}:\\\bm{y}\in \+Q^{\+F\land \tau}}}\mu_{\+C\setminus \{c_0\}}(\+E\land \sigma)\cdot \hat{p}_{(\+E,\+F,\sigma,\tau,B)}^Y\cdot \frac{1}{|\+Q^{\+E\land \sigma}|}\\
(\star)\quad = &r_{-}\cdot \frac{1}{|\+Q^{\+C\setminus \{c_0\}}|}\cdot\sum\limits_{\substack{(\+E,\+F,\sigma,\tau,B)\in \+L_{\coup}:\\\bm{y}\in \+Q^{\+F\land \tau}}} \hat{p}_{(\+E,\+F,\sigma,\tau,B)}^Y.
\end{align*}
 Here, the first equality is by that $\+E^{\sigma}=\+F^{\tau}$ for each $(\+E,\+F,\sigma,\tau,B)\in \+L_{\coup}$, and hence each solution in $\+Q^{\+F\land \tau}$ is generated with equal probability $\frac{1}{\abs{\+Q^{\+F\land \tau}}}=\frac{1}{\abs{\+Q^{\+E\land \sigma}}}$ by \Cref{lemma:sampling-alg-path-probability-bound}. The $\star$ equality is by the chain rule and that $\+E\land \sigma\implies \+C\setminus \{c_0\}$ for each $(\+E,\+F,\sigma,\tau,B)\in V(\+T)$, following the argument in \Cref{remark:coupling-tree-well-defined}.

 Consider a distribution $\nu$ defined over $\+Q$ such that for each $\bm{y}\in \+Q^{\+C}$, we let
\begin{equation}\label{eq:definition-nu}
 \nu[\bm{y}]=\frac{1}{|\+Q^{\+C}|}\cdot\sum\limits_{\substack{(\+E,\+F,\sigma,\tau,B)\in \+L_{\coup}:\\\bm{y}\in \+Q^{\+F\land \tau}}}\hat{p}_{(\+E,\+F,\sigma,\tau,B)}^Y,
 \end{equation}
 and that distribution of $\nu$ over unsatisfying assignments of $\Phi$ is carefully designed in a way that
\begin{equation}\label{eq:nu-design}
 \dtv\left(\nu,\sigma^{\text{out}}\right)=\abs{\frac{1}{|\+Q^{\+C}|}-r_{-}\cdot \frac{1}{|\+Q^{\+C\setminus \{c_0\}}|}}\sum\limits_{\bm{y}\in \+Q^{\+C}}\sum\limits_{\substack{(\+E,\+F,\sigma,\tau,B)\in \+L_{\coup}:\\\bm{y}\in \+Q^{\+F\land \tau}}}\hat{p}_{(\+E,\+F,\sigma,\tau,B)}^Y.
\end{equation}

By  $\left(1-\frac{\varepsilon}{4}\right)r_{-}\cdot \frac{|\+Q^{\+C\setminus \{c_0\}}|}{|\+Q^{\+C}|}\leq \left(1+\frac{\varepsilon}{4}\right)r_+$ and $r_{-}\geq \frac{4+\varepsilon}{4+2\varepsilon}r_{+}$ ensured by the process, we then have
\begin{equation}\label{eq:dtv-transform}
\left(1-\frac{\varepsilon}{2}\right)\cdot \frac{|\+Q^{\+C\setminus \{c_0\}}|}{|\+Q^{\+C}|}\leq r_{-}\leq \left(1+\frac{\varepsilon}{2}\right)\cdot \frac{|\+Q^{\+C\setminus \{c_0\}}|}{|\+Q^{\+C}|}.
\end{equation}
Combining \eqref{eq:nu-design} and \eqref{eq:dtv-transform}, we obtain
\begin{equation}\label{eq:dtv-1}
 \dtv(\nu,\sigma^{\text{out}})\leq \frac{\varepsilon}{2}\cdot \frac{1}{|\+Q^{\+C}|}\sum\limits_{\bm{y}\in \+Q^{\+C}}\sum\limits_{\substack{(\+E,\+F,\sigma,\tau,B)\in \+L_{\coup}:\\\bm{y}\in \+Q^{\+F\land \tau}}}\hat{p}_{(\+E,\+F,\sigma,\tau,B)}^Y\leq \frac{\varepsilon}{2}.
 \end{equation}

Now that by \eqref{eq:definition-nu} and \Cref{lemma:lp-sum-to-1}, for each $\bm{y} \in \+Q^{\+C}$ we have 
  $ \nu[\bm{y}]\leq \frac{1}{|\+Q^{\+C}|}=\mu_{\+C}(\bm{y})$.
Hence, 
\begin{align*}
\dtv(\mu_{\+C},\nu)=&\sum\limits_{\bm{y}\in \+Q^{\+C}}\left(\frac{1}{|\+Q^{\+C}|}-\frac{1}{|\+Q^{\+C}|}\cdot\sum\limits_{\substack{(\+E,\+F,\sigma,\tau,B)\in \+L_{\coup}:\\\bm{y}\in \+Q^{\+F\land \tau}}} \hat{p}_{(\+E,\+F,\sigma,\tau,B)}^Y\right)\\
(\text{by \Cref{lemma:lp-sum-to-1}})\quad= &\frac{1}{|\+Q^{\+C}|}\sum\limits_{\bm{y}\in \+Q^{\+C}}  \sum\limits_{\substack{(\+E,\+F,\sigma,\tau,B)\in \+L_{\trun}:\\\bm{y}\in \+Q^{\+F\land \tau}}} \hat{p}_{(\+E,\+F,\sigma,\tau,B)}^Y\\
(\text{by \Cref{lemma:linear-program-bad-leaf-loss}})\quad\leq & \frac{\varepsilon}{2}.
\end{align*}
Combining with \eqref{eq:dtv-1} and using the triangle inequality for the total variation distance, we obtain
\[
\dtv(\mu_{\+C},\sigma^{\text{out}})\leq \dtv(\mu_{\+C},\nu)+\dtv(\nu,\sigma^{\text{out}})=\varepsilon,
\]
finishing the proof of the lemma. 
\end{proof}

\section{Conclusion and open problems}\label{sec:conclusions}
In this paper, we present polynomial-time algorithms for approximate counting/almost uniform sampling atomic constraint satisfaction solutions in the regime of $pD^{2+o(1)}\lesssim 1$ (\Cref{condition:main-condition}), where the $o(1)$ factor approaches zero as the minimum domain size grows. This regime almost matches the lower bound $pD^2\lesssim 1$ given by ~\cite{BGG19,galanis2021inapproximability}. Even for the worst-case of Boolean domains, our regime $pD^{4.82}\lesssim 1$ still improves over the previous best regime of $pD^5\lesssim 1$~\cite{he2022counting}.

At the heart of our approach is a novel constraint-wise coupling for CSPs. This coupling, along with its analyses, sharply captures the critical phenomenon for counting/sampling LLL and may be of independent interest, helping pave the way for future improvements.



Here, we outline several future research directions that our findings may serve as a basis for:
\begin{itemize}
    \item 
    An obvious open problem is establishing the optimal counting/sampling LLL for small domain sizes $q$, especially for counting/sampling $k$-CNF, where the domain size $q=2$.
    \item 
Another open direction is to generalize the current approach to general CSPs, possibly with non-atomic constraints.
    \item 
    Last but not the least,
    an open aspect is to devise \emph{fast} algorithms for sampling LLL that runs in polynomial time when $k$ and $D$ are allowed to be unbounded or in near-linear time when  $k$ and $D$ are bounded. Such fast algorithms should avoid relying on exhaustive enumerations of local structures, unlike the deterministic approximate counting algorithms to this day.
    
\end{itemize}

\appendix
\bibliographystyle{alpha}
\bibliography{references} 

\newcommand{\etalchar}[1]{$^{#1}$}
\begin{thebibliography}{GGGHP22}

\bibitem[AI14]{achlioptas2014random}
Dimitris Achlioptas and Fotis Iliopoulos.
\newblock Random walks that find perfect objects and the {L}ov\'{a}sz local
  lemma.
\newblock In {\em FOCS}, pages 494--503. IEEE, 2014.

\bibitem[AIS19]{AIS19}
Dimitris Achlioptas, Fotis Iliopoulos, and Alistair Sinclair.
\newblock Beyond the {L}ov\'{a}sz local lemma: point to set correlations and
  their algorithmic applications.
\newblock In {\em FOCS}, pages 725--744. IEEE, 2019.

\bibitem[Alo91]{Alon91}
Noga Alon.
\newblock A parallel algorithmic version of the local lemma.
\newblock In {\em FOCS}, pages 586--593. IEEE, 1991.

\bibitem[ALO20]{anari2020spectral}
Nima Anari, Kuikui Liu, and Shayan {Oveis Gharan}.
\newblock Spectral independence in high-dimensional expanders and applications
  to the hardcore model.
\newblock In {\em FOCS}, pages 1319--1330. IEEE, 2020.

\bibitem[BD97]{Bubley97pathcoupling:}
Russ Bubley and Martin Dyer.
\newblock Path coupling: A technique for proving rapid mixing in markov chains.
\newblock In {\em FOCS}, page 223. {IEEE}, 1997.

\bibitem[BDK06]{Bordewich06stoppingtimes}
Magnus Bordewich, Martin Dyer, and Marek Karpinski.
\newblock Stopping times, metrics and approximate counting, 2006.

\bibitem[BDK08]{Bordewich08pathcoupling}
Magnus Bordewich, Martin Dyer, and Marek Karpinski.
\newblock Path coupling using stopping times and counting independent sets and
  colorings in hypergraphs.
\newblock {\em ALGORITHMS}, page 2008, 2008.

\bibitem[Bec91]{beck1991algorithmic}
J{\'o}zsef Beck.
\newblock An algorithmic approach to the {L}ov{\'a}sz local lemma.
\newblock {\em Random Struct. Algorithms}, 2(4):343--365, 1991.

\bibitem[BGG{\etalchar{+}}19]{BGG19}
Ivona Bez\'{a}kov\'{a}, Andreas Galanis, Leslie~Ann Goldberg, Heng Guo, and
  Daniel \v{S}tefankovi\v{c}.
\newblock Approximation via correlation decay when strong spatial mixing fails.
\newblock {\em SIAM J. Comput.}, 48(2):279--349, 2019.

\bibitem[BGK{\etalchar{+}}07]{bayati2007simple}
Mohsen Bayati, David Gamarnik, Dimitriy Katz, Chandra Nair, and Prasad Tetali.
\newblock Simple deterministic approximation algorithms for counting matchings.
\newblock In {\em STOC}, pages 122--127, 2007.

\bibitem[CG{\v{S}}V21]{chen2021rapid}
Zongchen Chen, Andreas Galanis, Daniel {\v{S}}tefankovi\v{c}, and Eric Vigoda.
\newblock Rapid mixing for colorings via spectral independence.
\newblock In {\em SODA}, page 1548–1557. SIAM, 2021.

\bibitem[CLV20]{chen2020rapid}
Zongchen Chen, Kuikui Liu, and Eric Vigoda.
\newblock Rapid mixing of glauber dynamics up to uniqueness via contraction.
\newblock In {\em FOCS}, pages 1307--1318. IEEE, 2020.

\bibitem[CMM23]{chen2023from}
Zongchen Chen, Nitya Mani, and Ankur Moitra.
\newblock From algorithms to connectivity and back: Finding a giant component
  in random $k$-{SAT}.
\newblock In {\em SODA}, pages 3437--3470. SIAM, 2023.

\bibitem[CS00]{CS00}
Artur Czumaj and Christian Scheideler.
\newblock Coloring nonuniform hypergraphs: a new algorithmic approach to the
  general {L}ov\'{a}sz local lemma.
\newblock {\em Random Struct. Algorithms}, 17(3-4):213--237, 2000.

\bibitem[Dob70]{dobrushin70prescribing}
R.~L. Dobrushin.
\newblock Prescribing a system of random variables by conditional
  distributions.
\newblock {\em Theory of Probability \& Its Applications}, 15(3):458--486,
  1970.

\bibitem[EL75]{LocalLemma}
Paul Erd\H{o}s and L\'aszl\'o Lov\'asz.
\newblock Problems and results on 3-chromatic hypergraphs and some related
  questions.
\newblock {\em Infinite and finite sets, volume 10 of Colloquia Mathematica
  Societatis J\'anos Bolyai}, pages 609--628, 1975.

\bibitem[FA17]{frieze2017randomly}
Alan Frieze and Michael Anastos.
\newblock Randomly coloring simple hypergraphs with fewer colors.
\newblock {\em Information Processing Letters}, 126:39--42, 2017.

\bibitem[FGW22]{feng2022improved}
Weiming Feng, Heng Guo, and Jiaheng Wang.
\newblock Improved bounds for randomly colouring simple hypergraphs.
\newblock In {\em RANDOM}, pages 25:1--25:17. Schloss Dagstuhl -
  Leibniz-Zentrum f{\"{u}}r Informatik, 2022.

\bibitem[FGW{\etalchar{+}}23]{feng2023towards}
Weiming Feng, Heng Guo, Chunyang Wang, Jiaheng Wang, and Yitong Yin.
\newblock Towards derandomising markov chain monte carlo.
\newblock In {\em FOCS}, pages 1963--1990. IEEE, 2023.

\bibitem[FGYZ21a]{FGYZ20}
Weiming Feng, Heng Guo, Yitong Yin, and Chihao Zhang.
\newblock Fast sampling and counting {$k$}-{SAT} solutions in the local lemma
  regime.
\newblock {\em J. ACM}, 68(6):Art. 40, 42, 2021.

\bibitem[FGYZ21b]{feng2021rapid}
Weiming Feng, Heng Guo, Yitong Yin, and Chihao Zhang.
\newblock Rapid mixing from spectral independence beyond the {B}oolean
  {D}omain.
\newblock In {\em SODA}, pages 1558--1577. SIAM, 2021.

\bibitem[FHY21]{feng2021sampling}
Weiming Feng, Kun He, and Yitong Yin.
\newblock Sampling constraint satisfaction solutions in the local lemma regime.
\newblock In {\em STOC}, pages 1565--1578. ACM, 2021.

\bibitem[FM11]{frieze2011randomly}
Alan Frieze and Páll Melsted.
\newblock Randomly coloring simple hypergraphs.
\newblock {\em Information Processing Letters}, 111(17):848--853, 2011.

\bibitem[GGGHP22]{galanis2022fast}
Andreas Galanis, Leslie~Ann Goldberg, Heng Guo, and Andrés Herrera-Poyatos.
\newblock Fast sampling of satisfying assignments from random $k$-sat.
\newblock {\em arXiv}, abs/{2206.15308}, 2022.

\bibitem[GGGY20]{galanis2019counting}
Andreas Galanis, Leslie~Ann Goldberg, Heng Guo, and Kuan Yang.
\newblock Counting solutions to random {CNF} formulas.
\newblock In {\em ICALP}, volume 168 of {\em LIPIcs}, pages 53:1--53:14, 2020.

\bibitem[GGW22]{galanis2021inapproximability}
Andreas Galanis, Heng Guo, and Jiaheng Wang.
\newblock Inapproximability of counting hypergraph colourings.
\newblock {\em ACM Trans. Comput. Theory}, 2022.

\bibitem[GKM15]{gamarnik2015strong}
David Gamarnik, Dmitriy Katz, and Sidhant Misra.
\newblock Strong spatial mixing of list coloring of graphs.
\newblock {\em Random Structures \& Algorithms}, 46(4):599--613, 2015.

\bibitem[GLLZ19]{guo2019counting}
Heng Guo, Chao Liao, Pinyan Lu, and Chihao Zhang.
\newblock Counting hypergraph colorings in the local lemma regime.
\newblock {\em SIAM J. Comput.}, 48(4):1397--1424, 2019.

\bibitem[GMP05]{goldberg2005strong}
Leslie~Ann Goldberg, Russell Martin, and Mike Paterson.
\newblock Strong spatial mixing with fewer colors for lattice graphs.
\newblock {\em SIAM J. Comput.}, 35(2):486--517, 2005.

\bibitem[Har21]{harris2021oblivious}
David~G. Harris.
\newblock Oblivious resampling oracles and parallel algorithms for the lopsided
  lov\'{a}sz local lemma.
\newblock {\em ACM Trans. Algorithms}, 17(1), 2021.

\bibitem[HS17a]{harris2017algorithmic}
David~G. Harris and Aravind Srinivasan.
\newblock Algorithmic and enumerative aspects of the moser-tardos distribution.
\newblock {\em ACM Trans. Algorithms}, 13(3), 2017.

\bibitem[HS17b]{HS17}
David~G. Harris and Aravind Srinivasan.
\newblock A constructive {L}ov\'{a}sz local lemma for permutations.
\newblock {\em Theory Comput.}, 13:Paper No. 17, 41, 2017.

\bibitem[HS19]{HS19}
David~G. Harris and Aravind Srinivasan.
\newblock The {M}oser-{T}ardos framework with partial resampling.
\newblock {\em J. ACM}, 66(5):Art. 36, 45, 2019.

\bibitem[HSS11]{haeupler2011new}
Bernhard Haeupler, Barna Saha, and Aravind Srinivasan.
\newblock New constructive aspects of the lov{\'{a}}sz local lemma.
\newblock {\em J. {ACM}}, 58(6):28:1--28:28, 2011.
\newblock (Conference version in \emph{FOCS}'10).

\bibitem[HSW21]{HSW21}
Kun He, Xiaoming Sun, and Kewen Wu.
\newblock Perfect sampling for (atomic) {L}ov\'{a}sz local lemma.
\newblock {\em arXiv}, abs/{2107.03932}, 2021.

\bibitem[HSZ19]{HSZ19}
Jonathan Hermon, Allan Sly, and Yumeng Zhang.
\newblock Rapid mixing of hypergraph independent sets.
\newblock {\em Random Struct. Algorithms}, 54(4):730--767, 2019.

\bibitem[HV15]{HV15}
Nicholas J.~A. Harvey and Jan Vondr\'{a}k.
\newblock An algorithmic proof of the {L}ov\'{a}sz local lemma via resampling
  oracles.
\newblock In {\em FOCS}, pages 1327--1345. IEEE, 2015.

\bibitem[HWY22]{he2022sampling}
Kun He, Chunyang Wang, and Yitong Yin.
\newblock Sampling lovász local lemma for general constraint satisfaction
  solutions in near-linear time.
\newblock In {\em FOCS}, pages 147--158. {IEEE}, 2022.

\bibitem[HWY23a]{he2022counting}
Kun He, Chunyang Wang, and Yitong Yin.
\newblock Deterministic counting lovász local lemma beyond linear programming.
\newblock In {\em SODA}, pages 3388--3425. {SIAM}, 2023.

\bibitem[HWY23b]{he2023improved}
Kun He, Kewen Wu, and Kuan Yang.
\newblock Improved bounds for sampling solutions of random {CNF} formulas.
\newblock In {\em SODA}, pages 3330--3361. {SIAM}, 2023.

\bibitem[JPV21a]{Vishesh21sampling}
Vishesh Jain, Huy~Tuan Pham, and Thuy~Duong Vuong.
\newblock On the sampling {L}ov{\'{a}}sz local lemma for atomic constraint
  satisfaction problems.
\newblock {\em arXiv}, abs/2102.08342, 2021.

\bibitem[JPV21b]{Vishesh21towards}
Vishesh Jain, Huy~Tuan Pham, and Thuy~Duong Vuong.
\newblock Towards the sampling lov{\'{a}}sz local lemma.
\newblock In {\em FOCS}, pages 173--183. {IEEE}, 2021.

\bibitem[Kha80]{khachiyan1980polynomial}
L.G. Khachiyan.
\newblock Polynomial algorithms in linear programming.
\newblock {\em USSR Computational Mathematics and Mathematical Physics},
  20(1):53--72, 1980.

\bibitem[Kol16]{kolmogorov2016commutativity}
Vladimir Kolmogorov.
\newblock Commutativity in the algorithmic {L}ov\'{a}sz local lemma.
\newblock In {\em FOCS}, pages 780--787. IEEE, 2016.

\bibitem[LLY13]{li2013correlation}
Liang Li, Pinyan Lu, and Yitong Yin.
\newblock Correlation decay up to uniqueness in spin systems.
\newblock In {\em SODA}, page 67–84. SIAM, 2013.

\bibitem[LP17]{levin2017markov}
David~A. Levin and Yuval Peres.
\newblock {\em Markov chains and mixing times}.
\newblock American Mathematical Soc., 2017.

\bibitem[LSS19]{liu2019correlation}
Jingcheng Liu, Alistair Sinclair, and Piyush Srivastava.
\newblock Correlation decay and partition function zeros: Algorithms and phase
  transitions.
\newblock {\em SIAM J. Comput.}, pages 200--252, 2019.

\bibitem[LY13]{lu2013improved}
Pinyan Lu and Yitong Yin.
\newblock Improved {FPTAS} for multi-spin systems.
\newblock In {\em {RANDOM}}, pages 639--654. Springer, 2013.

\bibitem[Moi19]{Moi19}
Ankur Moitra.
\newblock Approximate counting, the {L}ov{\'{a}}sz local lemma, and inference
  in graphical models.
\newblock {\em J. {ACM}}, 66(2):10:1--10:25, 2019.
\newblock (Conference version in \emph{STOC}'17).

\bibitem[Mos09]{moser2009constructive}
Robin~A. Moser.
\newblock A constructive proof of the {L}ov{\'a}sz local lemma.
\newblock In {\em STOC}, pages 343--350. {ACM}, 2009.

\bibitem[MR99]{MR98}
Michael Molloy and Bruce Reed.
\newblock Further algorithmic aspects of the local lemma.
\newblock In {\em S{TOC} '98 ({D}allas, {TX})}, pages 524--529. ACM, 1999.

\bibitem[MT10]{moser2010constructive}
Robin~A. Moser and G{\'a}bor Tardos.
\newblock A constructive proof of the general {L}ov{\'a}sz local lemma.
\newblock {\em J. {ACM}}, 57(2):11, 2010.

\bibitem[PR17]{patel2017deterministic}
Viresh Patel and Guus Regts.
\newblock Deterministic polynomial-time approximation algorithms for partition
  functions and graph polynomials.
\newblock {\em SIAM J. Comput.}, 46(6):1893--1919, 2017.

\bibitem[QW24]{qiu2024inapproximability}
Guoliang Qiu and Jiaheng Wang.
\newblock Inapproximability of counting independent sets in linear hypergraphs.
\newblock {\em Information Processing Letters}, 184:106448, 2024.

\bibitem[QWZ22]{qiu2022perfect}
Guoliang Qiu, Yanheng Wang, and Chihao Zhang.
\newblock A perfect sampler for hypergraph independent sets.
\newblock In {\em ICALP}, pages 103:1--103:16. Schloss Dagstuhl -
  Leibniz-Zentrum f{\"{u}}r Informatik, 2022.

\bibitem[She85]{She85}
J.~B. Shearer.
\newblock On a problem of {S}pencer.
\newblock {\em Combinatorica}, 5(3):241--245, 1985.

\bibitem[Sri08]{Sri08}
Aravind Srinivasan.
\newblock Improved algorithmic versions of the {L}ov\'{a}sz local lemma.
\newblock In {\em SODA}, pages 611--620. SIAM, 2008.

\bibitem[SST12]{sinclair12approximation}
Alistair Sinclair, Piyush Srivastava, and Marc Thurley.
\newblock Approximation algorithms for two-state anti-ferromagnetic spin
  systems on bounded degree graphs.
\newblock In {\em SODA}, pages 941--953. {SIAM}, 2012.

\bibitem[Wei06]{weitz06counting}
Dror Weitz.
\newblock Counting independent sets up to the tree threshold.
\newblock In {\em {STOC}}, pages 140--149. {ACM}, 2006.

\bibitem[YZ13]{yin2013approximate}
Yitong Yin and Chihao Zhang.
\newblock Approximate counting via correlation decay on planar graphs.
\newblock In {\em SODA}, pages 47--66. {SIAM}, 2013.

\end{thebibliography}
\clearpage

\section{Basic properties of the linear program}\label{sec:lp-properties}

In this section, we will prove two basic properties of the linear program constructed in \Cref{sec:algorithms}, specifically \Cref{lemma:lp-properties} and \Cref{lemma:lp-sum-to-1}. 

Recall that we have fixed a CSP $\Phi=(V,\+Q,\+C)$ and a constraint $c_0\in \+C$. Also, recall the notations in \Cref{definition:imaginary-sampler-quantities}, the $K$-truncated coupling tree $\+T=\+T_K(\Phi,c_0)$ in \Cref{definition:coupling-tree}, and the linear program in \Cref{definition:linear-program}.

\subsection{Proof of \Cref{lemma:lp-properties}}

Before proving \Cref{lemma:lp-properties}, we need to first prove the following lemma, which bounds the depth of the coupling tree $\+T$.

\begin{lemma}\label{lemma:coupling-tree-depth-bound}
$\+T$ has depth at most $KD(D+1)+1$. 
\end{lemma}
\begin{proof}
Suppose to the contrary that there exists a node $(\+E',
\+F',\sigma',\tau',B')\in V(\+T)$ of depth $KD(D+1)+2$. We track the size of $\+E^{\sigma}\triangle\+F^{\tau}$ when going down the path between the root and $(\+E', \+F',\sigma',\tau',B')$ and denote it by $t$. Initially, by \Cref{condition:main-condition}, we have $t=1$. Note that each time we either add some $c$ into $\+E$ or $\+F$, and let $t$ decreases by one; or we assign some values to $\sigma$ and $\tau$ on $\vbl(c)$, and let $t$ increases by at most $D-1$ (note that at most $D$ new elements can be added into $\+E^{\sigma}\triangle\+F^{\tau}$ and that $c$ is removed from $\+E^{\sigma}\triangle\+F^{\tau})$. We denote the number of times the latter operation being executed by $i$. 

By $(\+E', \+F',\sigma',\tau',B')$ is of depth $KD(D+1)+2$, the above step is repeated for $KD(D+1)+2$ times. 
Note that at $(\+E', \+F',\sigma',\tau',B')$ we have $t\geq 0$, then
\[
(D-1)\cdot i+1-( KD(D+1)+2-i) \geq 0,
\]
and hence $i>K(D+1)$. Note that by \Cref{definition:coupling-tree} and \eqref{eq:definition-f}, we have $|B'|>K$, contradicting the truncation condition at \Cref{item:coupling-tree-2-a} of \Cref{definition:coupling-tree}. Therefore, the lemma is proved.
\end{proof}

We are now ready to prove \Cref{lemma:lp-properties}.

\begin{proof}[Proof of \Cref{lemma:lp-properties}]
    Note that the branching number of the coupling tree is at most $(q_{\max})^{2k}$ . Hence by \Cref{lemma:coupling-tree-depth-bound}, the size of the coupling tree is bounded by $\mathrm{exp}(K\cdot \poly(k,\log q_{\max},D))$.

    Therefore, the total number of constraints of Items \ref{item:linear-program-range-constraints} through \ref{item:linear-program-leaf-constraints} in \Cref{definition:linear-program} is also bounded by $\mathrm{exp}(K\cdot \poly(k,\log q,D))$. For \Cref{item:linear-program-overflow-constraints} of \Cref{definition:linear-program}, we only show how to efficiently check the feasibility for the first inequality, and the other one follows analogously. We note that after fixing some $T\in \mathbb{T}_K^{c_0}$, for all leaf nodes $(\+E,\+F,\sigma,\tau,B)\in \+L_{\trun}$ satisfying $B=T$, the following properties hold:
    \begin{itemize}
        \item for any $v\in \Lambda(\sigma)$, there exist $c\in \+C,c'\in T$ such that $v\in \vbl(c)$ and $\vbl(c)\cap \vbl(c')\neq \emptyset$;
        \item for any $c\in \+E\setminus (\+C\setminus \{c_0\})$, we have $\Lambda(\sigma)\cap \vbl(\unpin{c})\neq \emptyset$. 
    \end{itemize}
    Here, the first property can be shown using proof by contradiction: Suppose that such $c\in \+C, c'\in T$ cannot be found. by the time $v$ is assigned in $\sigma/\tau$ through the root-to-leaf path in $\+T$, let $c^*$ be the (pinned) constraint chosen at the time, then $\unpin{(c^*)}$ doesn't share variable with any constraint in $B$ at the time according to the assumption, and thus $\unpin{(c^*)}$ must be added into $B$ by \Cref{definition:linear-program} and \eqref{eq:definition-f}. Then we have a contradiction by letting $c=c'=c^*$. The second property follows from that each (pinned) constraint $c$ chosen by the procedure and added into $\+E$ in \Cref{definition:linear-program} lies in $\+E^{\sigma}\triangle\+F^{\tau}$ at the time, and $c$ can added into  $\+E^{\sigma}\triangle\+F^{\tau}$ only by either
    \begin{itemize}
        \item $c$ is exactly $c_0$;
        \item or $c$ is pinned by $\sigma/\tau$, i.e. $\vbl(\unpin{c})\neq \vbl(c)$.
    \end{itemize}
    
    By the above properties, the feasibility of the first equation can be directly verified by enumerating the assignment of $\bm{x}$ on variables of constraints within distance $2$ of $T$ in $G_{\Phi}$. This says it suffices to only check $(q_{\max})^{Kk(D+1)^2}$ (partial) assignments of $\bm{x}$ to verify the first inequality in \Cref{item:linear-program-overflow-constraints} of \Cref{definition:linear-program}. Combining with the bound on the number of $2$-trees in \Cref{lemma:num-2-tree}, we have the number of constraints in \Cref{item:linear-program-overflow-constraints} of \Cref{definition:linear-program} can be bounded by $\mathrm{exp}(K\cdot \poly(k,\log q_{\max},D))$.
    
     Hence, we have both the number of variables and constraints of the linear program defined in \Cref{definition:linear-program} is at most $\mathrm{exp}(K\cdot \poly(k,\log q_{\max},D))$. Then the lemma follows from the standard running time guarantees for linear programming (e.g., \cite{khachiyan1980polynomial}).
\end{proof}

\subsection{Proof of \Cref{lemma:lp-sum-to-1}}

The proof of \Cref{lemma:lp-sum-to-1} comes from verifying several basic definitions of the linear program.

\begin{proof}[Proof of \Cref{lemma:lp-sum-to-1}]
  We only prove the first equality of \Cref{lemma:lp-sum-to-1}, then the second equality follows analogously. By \Cref{item:linear-program-non-leaf-constraints} of \Cref{definition:linear-program} and an induction on the increasing depth of $\+T$, we have
    \[
   \sum\limits_{\substack{(\+E,\+F,\sigma,\tau,B)\in \+L:\\\bm{x}\in \+Q^{\+E\land \sigma}}}\hat{p}_{(\+E,\+F,\sigma,\tau,B)}^X=1, \quad\text{for all }\bm{x}\in \+Q^{\+C\setminus \{c_0\}}.
    \]

     Note that for any $x\in \+Q^{\+C\setminus \{c_0\}}$, any $(\+E,\+F,\sigma,\tau,B)\in \+L_{\textnormal{invld}}$ such that $\bm{x}\in \+Q^{\+E\land \sigma}$, it must hold that $\tau$ violates $\+F$. Then, the lemma holds immediately by applying \Cref{item:linear-program-invalid} of \Cref{definition:linear-program}. 
\end{proof}

\section{Probabilistic properties of the linear program}\label{sec:lp-bounds}

In this section, we prove several important probabilistic properties related to the linear program. Specifically, we will finish the proof of \Cref{lemma:lp-feasibility} regarding \Cref{item:linear-program-overflow-constraints} of \Cref{definition:linear-program}, also prove \Cref{lemma:linear-program-bad-leaf-loss} and \Cref{lemma:sampling-alg-path-probability-bound}. The key ideas of the proofs will heavily resemble the proofs of the correlation decay property in \Cref{sec:correlation-decay}.

Recall that we have fixed a CSP $\Phi=(V,\+Q,\+C)$ and a constraint $c_0\in \+C$.  Also, recall the notations in \Cref{definition:imaginary-sampler-quantities}, the $K$-truncated coupling tree $\+T=\+T_K(\Phi,c_0)$ in \Cref{definition:coupling-tree}, and the linear program in \Cref{definition:linear-program}.

\subsection{Remaining proof of \Cref{lemma:lp-feasibility}}

Recall that for each
$(\+E,\+F,\sigma,\tau,B)\in V(\+T)$, we let
\begin{equation}\label{eq:recall}
\hat{p}_{(\+E,\+F,\sigma,\tau,B)}^X=\mu_{\+C}(\+F\land \tau), \quad \hat{p}_{(\+E,\+F,\sigma,\tau,B)}^Y=\mu_{\+C\setminus \{c_0\}}(\+E\land \sigma).
\end{equation}

We then finish the proof of \Cref{lemma:lp-feasibility} by showing the above choice satisfies \Cref{item:linear-program-overflow-constraints} of \Cref{definition:linear-program}. We only prove the first inequality, and the second inequality follows analogously.

We need the definition of a random process, which is almost the same process as \Cref{definition:procedure-root-to-leaf-path-1}, except that now $\*X$ follows the distribution of $\+P$ instead of $\mu_{\+C\setminus \{c_0\}}$. Also, the random process now can be interpreted as generating a root-to-leaf path in $\+T$. Here in this random process, it is possible to reach those nodes $(\+E,\+F,\sigma,\tau,B)\in \+L_{\text{invld}}$ due to the different distribution of $\*X$.

\begin{definition}[random process associated with the true marginal probabilities]\label{definition:procedure-root-to-leaf-path-2}
Let  $\*X^{\mg}\sim \+P$ and $\*Y^{\mg}\sim \mu_{\+C}$ be drawn independently beforehand. 
Define the random process $P^{\mg}=P_K^{\mg}=\left\{(\+E_t,\+F_t,\sigma_t,\tau_t,B_t)\right\}_{t\ge 0}$ starting from the initial state $(\+E_0,\+F_0,\sigma_0,\tau_0,B_0)=(\+C\setminus \{c_0\},\+C,\varnothing,\varnothing,\emptyset)$ as follows:
    \begin{enumerate}
        \item If $\sigma$ violates $\+E$ or $\tau$ violates $\+F$ or or $\+E^{\sigma}=\+F^{\tau}$ or $|B|=K$, the process stops and $(\+E_t,\+F_t,\sigma_t,\tau_t,B_t)$ is the outcome of the process.
       \item Otherwise, suppose $\+F_t^{\tau_t}\not\subseteq \+E_t^{\sigma_t}$. Let $c$ be the smallest pinned constraint in $\+F_t^{\tau_t}\setminus \+E_t^{\sigma_t}$.
       \[
    \left(\+E_{t+1},\+F_{t+1},\sigma_{t+1},\tau_{t+1},B_{t+1}\right)\gets
        \begin{cases}
        \left(\+E_{t}\cup \{c\},\+F_{t},\sigma_{t},\tau_{t},B_{t}\right)   & 
         \text{$c$ is satisfied by $\*X^{\mg}_{\vbl(c)}$};\\
             \left(\+E_{t},\+F_{t},\sigma_{t}\land \*X^{\mg}_{\vbl(c)},\tau_{t}\land \*Y^{\mg}_{\vbl(c)},B_{t}\Join c\right)  & \text{otherwise}.
        \end{cases}
        \]
        \item Otherwise $\+F_{t}^{\tau}\subseteq \+E_{t}^{\sigma}$. Let $c$ be the smallest pinned constraint in $\+E_{t}^{\sigma}\setminus \+F_{t}^{\tau}$. 
        \[
       \left(\+E_{t+1},\+F_{t+1},\sigma_{t+1},\tau_{t+1},B_{t+1}\right)\gets
        \begin{cases}
           \left(\+E_{t},\+F_{t}\cup \{c\},\sigma_{t},\tau_{t},B_{t}\right)   & 
         \text{$c$ is satisfied by $\*Y_{\vbl(c)}$};\\
         \left(\+E_{t},\+F_{t},\sigma_{t}\land \*X^{\mg}_{\vbl(c)},\tau_{t}\land \*Y^{\mg}_{\vbl(c)},B_{t}\Join c\right)   & \text{otherwise}.
        \end{cases}
        \]
    \end{enumerate}
Let $\mu^{\mg}$ denote the distribution of the outcome $(\+E_\infty,\+F_\infty,\sigma_\infty,\tau_\infty,B_\infty)$ of this process.
\end{definition}

By comparing with the definition of $\+T=\+T_K(\Phi,c_0)$ in \Cref{definition:coupling-tree}. We have all tuples possibly visited by \Cref{definition:procedure-root-to-leaf-path} are in $V(\+T)$.

The following lemma holds for the process in \Cref{definition:procedure-root-to-leaf-path-2}. 

\begin{lemma}\label{lemma:true-marginal-property}
Assume \Cref{condition:main-condition}. For each $(\+E,\+F,\sigma,\tau,B)\in V(\+T)$,
\[
(\+E,\+F,\sigma,\tau,B)\in P^{\mg}\Longleftrightarrow \*X^{\mg}\in \+Q^{\+E\setminus (\+C\setminus \{c_0\})\land \sigma}\land \*Y^{\mg}\in \+Q^{\+F\land \tau}.
\]
\end{lemma}

The following corollary is then direct by combining \Cref{lemma:true-marginal-property} with the law of conditional probability.

\begin{corollary}\label{corollary:true-marginal-property}
Assume \Cref{condition:main-condition}. For each $\bm{x}\in \+Q$ and each $(\+E,\+F,\sigma,\tau,B)\in V(\+T)$, 
\[
\Pr{(\+E,\+F,\sigma,\tau,B)\in P^{\mg}\mid \*X^{\mg}=\bm{x}}=\begin{cases}
0 & \bm{x}\notin \+Q^{\+E\setminus (\+C\setminus \{c_0\})\land \sigma};\\
\mu_{\+C}(\+F\land \tau) & \bm{x}\in \+Q^{\+E\setminus (\+C\setminus \{c_0\})\land \sigma}.
\end{cases}
\]
\end{corollary}

\begin{proof}
For each $\bm{x}\in \+Q$ and each $(\+E,\+F,\sigma,\tau,B)\in \+L$ we have
\begin{equation}\label{eq:true-marginal-transform}
\begin{aligned}
    \Pr{(\+E,\+F,\sigma,\tau,B)\in P^{\mg}\mid \*X^{\mg}=\bm{x}}=&\frac{\Pr{(\+E,\+F,\sigma,\tau,B)\in P^{\mg}\land \*X^{\mg}=\bm{x}}}{\Pr{\*X^{\mg}=\bm{x}}}\\
    (\text{by \Cref{definition:procedure-root-to-leaf-path-2}})\quad =&|\+Q|\cdot \Pr{(\+E,\+F,\sigma,\tau,B)\in P^{\mg}\land \*X^{\mg}=\bm{x}}.
\end{aligned}
\end{equation}
By \Cref{lemma:true-marginal-property}, if $\bm{x}\notin  \+Q^{\+E\setminus (\+C\setminus \{c_0\})\land \sigma}$, then the above term clearly equals $0$. We then assume $\bm{x}\in  \+Q^{\+E\setminus (\+C\setminus \{c_0\})\land \sigma}$. Note that by \Cref{definition:procedure-root-to-leaf-path-2} and \Cref{lemma:true-marginal-property}, conditioning on the event $(\+E,\+F,\sigma,\tau,B)\in P^{\mg}$, the distribution of $\*X^{\mg}$ is uniform over all assignments in $\+Q^{\+E\setminus (\+C\setminus \{c_0\})\land \sigma}$. Therefore,
\begin{align*}
\Pr{(\+E,\+F,\sigma,\tau,B)\in P^{\mg}\land \*X^{\mg}=\bm{x}}=& \Pr{(\+E,\+F,\sigma,\tau,B)\in P^{\mg}}\cdot \frac{1}{|\+Q^{\+E\setminus (\+C\setminus \{c_0\})\land \sigma}|}\\
(\text{by \Cref{lemma:true-marginal-property}})\quad=& \frac{|\+Q^{\+E\setminus (\+C\setminus \{c_0\})\land \sigma}|}{|\+Q|}\cdot \frac{|\+Q^{\+F\land \tau}|}{|\+Q^{\+C}|}\cdot \frac{1}{|\+Q^{\+E\setminus (\+C\setminus \{c_0\})\land\sigma}|}\\
(\star)\quad=& \frac{1}{|\+Q|}\cdot \mu_{\+C}(\+F\land \tau),
\end{align*}
combining with \eqref{eq:true-marginal-transform} proves the corollary. Here, the $\star$ equality is by that $\+F\land \tau\implies \+C$ for each $(\+E,\+F,\sigma,\tau,B)\in V(\+T)$, following the argument in \Cref{remark:coupling-tree-well-defined}.
\end{proof}

We are now ready to finish the proof of \Cref{lemma:lp-feasibility}.

\begin{proof}[Remaining proof of \Cref{lemma:lp-feasibility}]
Let $\left(\+E^{\mg},\+F^{\mg},\sigma^{\mg},\tau^{\mg},B^{\mg}\right)\sim\mu^{\mg}$ denote the random outcome of the process $P^\mg$ constructed in \Cref{definition:procedure-root-to-leaf-path-2}.

Recall that $\mathbb{T}_K^{c_0}$ is the set of $2$-trees in $G_{\Phi}$ of size $K$ containing $c_0$. By \Cref{corollary:true-marginal-property}, we have for each $\bm{x}\in \+Q$ and each $T\in \mathbb{T}_K^{c_0}$,

\begin{equation}\label{eq:mg-transform}
    \begin{aligned}
    \Pr{B^{\mg}=T\mid \*X^{\mg}=\bm{x}}= &\sum\limits_{\substack{(\+E,\+F,\sigma,\tau,B)\in \+L_{\trun}:\\ B=T}}\Pr{(\+E,\+F,\sigma,\tau,B)\in P^{\mg}\mid \*X^{\mg}=\bm{x}}\\
    (\text{by \Cref{corollary:true-marginal-property} and \eqref{eq:recall}})\quad =&\sum\limits_{\substack{(\+E,\+F,\sigma,\tau,B)\in \+L_{\trun}:\\ B=T\land \bm{x} \in \+Q^{\+E\setminus (\+C\setminus \{c_0\})}}}\hat{p}_{(\+E,\+F,\sigma,\tau,B)}^X,
\end{aligned}
\end{equation}
which is exactly the expression in the left-hand side of the first inequality in 
\Cref{item:linear-program-overflow-constraints} of \Cref{definition:linear-program}.

Fix some $T\in \mathbb{T}_K^{c_0}$. Now it remains to bound $\Pr{B^{\mg}=T\mid \*X^{\mg}=\bm{x}}$. Here, we leverage the analysis in \Cref{lemma:disjoint-prob}.  Following the argument in \Cref{lemma:disjoint-prob}, we have that for each $c\in T$, $c\in B^{\mg}$ implies 
$$\+A^{\mg}_c: \forall v\in \vbl(c), \quad\*X^{\mg}(v)=\vio{c}(v) \lor \*Y^{\mg}(v)=\vio{c}(v).$$ 

Therefore, we have 
\begin{align*}
\Pr{B^{\mg}=T\mid \*X^{\mg}=\bm{x}}\leq &\Pr{\bigwedge\limits_{c\in T}\+A^{\mg}_c\mid \*X^{\mg}=\bm{x}}\\
(\text{by \Cref{definition:procedure-root-to-leaf-path-2}})\quad= &\Pr[\*Y\sim \mu_{\+C}]{\bigwedge\limits_{c\in T}\+A_c\mid \*X=\bm{x}}\\
(\text{by \Cref{HSS}})\quad \leq & (1-\mathrm{e}p)^{K(D+1)}\Pr[\*Y\sim \+P]{\bigwedge\limits_{c\in T}\+A_c\mid \*X=\bm{x}}.
\end{align*}

Combining with \eqref{eq:mg-transform}, \Cref{lemma:lp-feasibility} is proved.
\end{proof}

We finish this subsection by proving \Cref{lemma:true-marginal-property}.

\begin{proof}[Proof of \Cref{lemma:true-marginal-property}]

By $\Phi$ satisfy \Cref{condition:main-condition}, $\*Y^{\mg}$ and the procedure in \Cref{definition:procedure-root-to-leaf-path-2} are well-defined.  We then prove by structural induction in the top-down order that for any $(\+E,\+F,\sigma,\tau,B)$ such that $\Pr{(\+E,\+F,\sigma,\tau,B)\in P^{\mg}}>0$,
\begin{equation}\label{eq:true-marginal-property-1}
(\+E,\+F,\sigma,\tau,B)\in P^{\mg}\Longleftrightarrow \*X^{\mg}\in  \+Q^{\+E\setminus (\+C\setminus \{c_0\})\land \sigma} \land \*Y^{\mg}\in  \+Q^{\+F\setminus \+C\land\tau},
\end{equation}
then \Cref{lemma:true-marginal-property} directly holds.

 The base case is when $(\+E,\+F,\sigma,\tau,B)=(\+C\setminus \{c_0\},\+C,\varnothing,\varnothing,\emptyset)$. In this case, $(\+E,\+F,\sigma,\tau,B)\in P^{\mg}$ and  $\+Q^{\+E\setminus (\+C\setminus \{c_0\})\land \sigma}=\+Q^{\+F\setminus \+C\land\tau}=\+Q$. So \eqref{eq:true-marginal-property-1} holds, proving the base case.

For the induction step, we assume for the current $(\+E,\+F,\sigma,\tau,B)$ that $\+E^{\sigma}\neq \+F^{\tau}$, $|B|<K$, $\sigma$ doesn't violate $\+E$ and $\tau$ doesn't violate $\+F$. We then only prove the case for $\+F^{\tau}\not\subseteq \+E^{\sigma}$. The case when $\+F^{\tau}\subseteq \+E^{\sigma}$ follows analogously.

let $c$ be the smallest constraint in $\+F^{\tau}\setminus \+E^{\sigma}$. By \Cref{definition:procedure-root-to-leaf-path-2}, we have the following two cases: 
\begin{itemize}
    \item The next tuple is $(\+E\cup \{c\},\+F,\sigma,\tau,B)$, then we have
    \begin{align*}
    &(\+E\cup \{c\},\+F,\sigma,\tau,B)\in P^{\mg}\\
    (\text{by \Cref{definition:procedure-root-to-leaf-path-1}})\quad\Longleftrightarrow \quad& (\+E,\+F,\sigma,\tau,B)\in P^{\mg}\land \*X^{\mg}_{\vbl(c_0)} \text{ satisfies }c\\
    (\text{by I.H.})\quad\Longleftrightarrow \quad& \*X^{\mg}\in  \+Q^{\+E\setminus (\+C\setminus \{c_0\})\land \sigma} \land \*Y^{\mg}\in  \+Q^{\+F\land \tau}\land \*X^{\mg}_{\vbl(c_0)} \text{ satisfies }c\\
    \Longleftrightarrow \quad& \*X^{\mg}\in  \+Q^{\+E\setminus (\+C\setminus \{c_0\})\cup \{c\}\land\sigma} \land \*Y^{\mg}\in  \+Q^{\+F\land \tau}\\
    (\star)\quad  \Longleftrightarrow \quad& \*X^{\mg}\in  \+Q^{(\+E\cup \{c\})\setminus (\+C\setminus \{c_0\})\land\sigma} \land \*Y^{\mg}\in  \+Q^{\+F\land \tau}.
    \end{align*}
   Here, the $\star$ implication is by that each $c\in \+F^{\tau}\setminus \+E^{\sigma}$ chosen by the random process in \Cref{definition:procedure-root-to-leaf-path-2} must satisfy either
   \begin{itemize}
       \item is exactly $c_0$;
       \item or is pinned by $\sigma/\tau$, i.e. $\vbl(c)\neq \vbl(\unpin{c})$,
   \end{itemize}
   and hence cannot be in $\+C\setminus \{c_0\}$.
    \item The next tuple is $\left(\+E,\+F,\sigma\land \pi,\tau\land \rho,B\Join c\right)$ for $\pi=\vio{c}$ and some $\rho\in\+Q_{\vbl(c)}$, then we have
     \begin{align*}
    &(\+E,\+F,\sigma\land \pi,\tau\land \rho,B\Join c)\in P^{\mg}\\
    (\text{by \Cref{definition:procedure-root-to-leaf-path-1}})\quad\Longleftrightarrow \quad& (\+E,\+F,\sigma,\tau,B)\in P^{\mg}\land \*X^{\mg}\text{ violates } c\land \*X^{\mg}_{\vbl(c)}=\pi\land \*Y^{\mg}_{\vbl(c)}=\rho\\
     (\text{by $\pi=\vio{c}$})\quad\Longleftrightarrow \quad& (\+E,\+F,\sigma,\tau,B)\in P^{\mg}\land \*X^{\mg}_{\vbl(c)}=\pi\land \*Y^{\mg}_{\vbl(c)}=\rho\\
    (\text{by I.H.})\quad\Longleftrightarrow \quad& \*X^{\mg}\in  \+Q^{\+E\setminus (\+C\setminus \{c_0\})\land \sigma} \land \*Y^{\mg}\in  \+Q^{\+F\land \tau}
    \land \*X^{\mg}_{\vbl(c)}=\pi\land \*Y^{\mg}_{\vbl(c)}=\rho\\
    \Longleftrightarrow \quad& \*X^{\mg}\in  \+Q^{\+E\setminus (\+C\setminus \{c_0\})\land\sigma\land \pi} \land \*Y^{\mg}\in  \+Q^{\+F\land\tau\land \rho},
    \end{align*}
    finishing the induction steps and the proof of the lemma.
\end{itemize}
\end{proof}

\subsection{Proof of \Cref{lemma:linear-program-bad-leaf-loss} and \Cref{lemma:sampling-alg-path-probability-bound}}

In this subsection, we simultaneously prove \Cref{lemma:linear-program-bad-leaf-loss} and \Cref{lemma:sampling-alg-path-probability-bound}. Similar as the introduction of \Cref{definition:procedure-root-to-leaf-path-1} in \Cref{sec:correlation-decay} and \Cref{definition:procedure-root-to-leaf-path-2} in the previous subsection, we need to define the following random process for explicitly identified randomness:

  \begin{definition}[two random processes associated with the linear program]\label{definition:procedure-root-to-leaf-path}
We consider the following way of generating root-to-leaf paths of $\+T$.  Initiate a random assignment $\*X^{\lp}\in \+Q$ distributed as either
\begin{itemize}
    \item $\*X^{\lp}\sim \+P$\; (type $1$ initialization)
    \item or $\*X^{\lp}\sim \mu_{\+C\setminus \{c_0\}}$\; (type $2$ initialization)
\end{itemize}
    The process starts from the root node $(\+C\setminus \{c_0\},\+C,\varnothing,\varnothing,\emptyset)$ of $\+T$. 
        In each step, at a non-leaf node $(\+E,\+F,\sigma,\tau,B)$, it does:
    
\begin{enumerate}
        \item if $\+F^{\tau}\not\subseteq \+E^{\sigma}$, denoted by $c$ be the smallest pinned constraint in $\+F^{\tau}\setminus \+E^{\sigma}$, 
        \begin{itemize}
            \item move to $(\+E\cup \{c\},\+F,\sigma,\tau,B)$ if $c$ is satisfied by $\*X^{\lp}_{\vbl(c)}$;
            \item otherwise, move to $(\+E,\+F,\sigma\land \*X^{\lp}_{\vbl(c)},\tau\land \rho,B\Join c)$  for each $\rho\in \+Q_{\vbl(c)}$ w.p. 
            \[
            \frac{\hat{p}^X_{\+E,\+F,\sigma\land \*X^{\lp}_{\vbl(c)},\tau\land \rho,B\Join c}}{\hat{p}_{(\+E,\+F,\sigma,\tau,B)}^X};
            \]
        \end{itemize}
        \item otherwise $\+F^{\tau}\subseteq \+E^{\sigma}$, denoted by $c$ be the smallest pinned constraint in $\+E^{\sigma}\setminus \+F^{\tau}$,
        \begin{itemize}
            \item move to $(\+E,\+F\cup \{c\},\sigma,\tau,B)$ with probability
            \[
            \frac{\hat{p}^X_{(\+E,\+F\cup \{c\},\sigma,\tau,B)}}{\hat{p}_{(\+E,\+F,\sigma,\tau,B)}^X};
            \]
            \item move to $(\+E,\+F,\sigma\land \*X^{\lp}_{\vbl(c)},\tau\land \rho,B\Join c)$ for $\rho=\vio{c}$  with probability   \[
            \frac{\hat{p}^X_{(\+E,\+F,\sigma\land \*X^{\lp}_{\vbl(c)},\tau\land \rho,B\Join c)}}{\hat{p}_{(\+E,\+F,\sigma,\tau,B)}^X}.
            \]
        \end{itemize}

       \end{enumerate}
       \sloppy
            By \Cref{item:linear-program-non-leaf-constraints} of \Cref{definition:linear-program}, it can be verified that for both types of initialization of $\*X^{\lp}$, the above process generates a probability distribution over root-to-leaf paths of $\+T$. For the two processes with $\*X^{\lp}$ initialized as type $1$/type $2$, we refer to the $\*X^{\lp}$ as $\*X^{\lpone}/\*X^{\lptwo}$, respectively. We let the root-to-leaf path generated be $P^{\lpone}/P^{\lp2}$, and the distribution on the leaf node induced by the process be $\mu^{\lpone}/\mu^{\lptwo}$, respectively.
\end{definition}

We then present probability bounds for \Cref{definition:procedure-root-to-leaf-path} with different initialization, concluded by the two following lemmas. 

\begin{lemma}\label{lemma:lp-procedure-1-probability-bound}
Assume \Cref{condition:main-condition}. For each $(\+E,\+F,\sigma,\tau,B)\in V(\+T)$, it holds that
\[
\Pr{(\+E,\+F,\sigma,\tau,B)\in P^{\lpone}}=\P{(\+E\setminus (\+C\setminus \{c_0\}))\land \sigma}\cdot \hat{p}_{(\+E,\+F,\sigma,\tau,B)}^X.
\]
Moreover, conditioning on $(\+E,\+F,\sigma,\tau,B)\in P^\lpone$, it follows that
\[
\*X^{\lpone}\sim \mu_{\+E\setminus (\+C\setminus \{c_0\})}^{\sigma},
\]
for each $(\+E,\+F,\sigma,\tau,B)$ such that 
$\Pr{(\+E,\+F,\sigma,\tau,B)\in P^{\lpone}}>0$.
\end{lemma}

\begin{lemma}\label{lemma:lp-procedure-2-probability-bound}
Assume \Cref{condition:main-condition}. For each $(\+E,\+F,\sigma,\tau,B)\in V(\+T)$, it holds that
\[
\Pr{(\+E,\+F,\sigma,\tau,B)\in P^{\lptwo}}=\mu_{\+C\setminus \{c_0\}}(\+E\land \sigma)\cdot \hat{p}_{(\+E,\+F,\sigma,\tau,B)}^X.
\]
Moreover, conditioning on $(\+E,\+F,\sigma,\tau,B)\in P^\lptwo$, it follows that
\[
\*X^{\lptwo}\sim \mu_{\+E}^{\sigma},
\]
for each $(\+E,\+F,\sigma,\tau,B)$ such that 
$\Pr{(\+E,\+F,\sigma,\tau,B)\in P^{\lptwo}}>0$.
\end{lemma}

By \Cref{lemma:lp-procedure-2-probability-bound} and comparing \Cref{definition:procedure-root-to-leaf-path} with \Cref{definition:marginal-sampling-algorithm},  \Cref{lemma:sampling-alg-path-probability-bound} is immediately proved. Also, we have the following corollary by combining \Cref{lemma:lp-procedure-1-probability-bound} with the law of conditional probability.

\begin{corollary}\label{corollary:lp-procedure-1-probability-bound}
Assume \Cref{condition:main-condition}. For each $\bm{x}\in \+Q$ and each $(\+E,\+F,\sigma,\tau,B)\in V(\+T)$, 
\[
\Pr{(\+E,\+F,\sigma,\tau,B)\in P^{\lpone}\mid \*X^{\lpone}=\bm{x}}=\begin{cases}
0 & \bm{x}\notin \+Q^{\+E\setminus (\+C\setminus \{c_0\})\land \sigma};\\
\hat{p}_{(\+E,\+F,\sigma,\tau,B)}^X & \bm{x}\in \+Q^{\+E\setminus (\+C\setminus \{c_0\})\land \sigma}.
\end{cases}
\]
\end{corollary}

\begin{proof}
For each $\bm{x}\in \+Q$ and each $(\+E,\+F,\sigma,\tau,B)\in V(\+T)$ we have
\begin{equation}\label{eq:lp-procedure-transform}
\begin{aligned}
    \Pr{(\+E,\+F,\sigma,\tau,B)\in P^\lpone\mid \*X^{\lpone}=\bm{x}}=&\frac{\Pr{(\+E,\+F,\sigma,\tau,B)\in P^\lpone\land \*X^{\lpone}=\bm{x}}}{\Pr{\*X^{\lpone}=\bm{x}}}\\
    (\text{by \Cref{definition:procedure-root-to-leaf-path}})\quad =&|\+Q|\cdot \Pr{(\+E,\+F,\sigma,\tau,B)\in P^\lpone\land \*X^{\lpone}=\bm{x}}.
\end{aligned}
\end{equation}
Note that by \Cref{lemma:lp-procedure-1-probability-bound}, if $\bm{x}\notin  \+Q^{\+E\setminus (\+C\setminus \{c_0\})\land \sigma}$, then the above term clearly equals to $0$. We then assume $\bm{x}\in  \+Q^{\+E\setminus (\+C\setminus \{c_0\})\land \sigma}$. Note that by \Cref{lemma:lp-procedure-1-probability-bound}, conditioning on the event $ (\+E,\+F,\sigma,\tau,B)$, the distribution of $\sigma$ is uniform over all assignments in $\+Q^{\+E\setminus (\+C\setminus \{c_0\})\land \sigma}$. Therefore,
\begin{align*}
\Pr{(\+E,\+F,\sigma,\tau,B)\in P^{\lpone}\land \*X^{\lpone}=\bm{x}}=& \Pr{(\+E,\+F,\sigma,\tau,B)\in P^{\lpone}}\cdot \frac{1}{|\+Q^{\+E\setminus (\+C\setminus \{c_0\})\land \sigma}|}\\
(\text{by \Cref{lemma:lp-procedure-1-probability-bound}})\quad=& \frac{|\+Q^{\+E\setminus (\+C\setminus \{c_0\})\land \sigma}|}{|\+Q|}\cdot \hat{p}_{(\+E,\+F,\sigma,\tau,B)}^X\cdot \frac{1}{|\+Q^{\+E\setminus (\+C\setminus \{c_0\})\land \sigma}|}\\
=& \frac{1}{|\+Q|}\cdot \hat{p}_{(\+E,\+F,\sigma,\tau,B)}^X,
\end{align*}
combining with \eqref{eq:lp-procedure-transform} proves the corollary. 
\end{proof}

The proofs of Lemmas \ref{lemma:lp-procedure-1-probability-bound} and \ref{lemma:lp-procedure-2-probability-bound} follow a direct induction on the coupling tree. We provide the proofs of Lemmas \ref{lemma:lp-procedure-1-probability-bound} and \ref{lemma:lp-procedure-2-probability-bound} at the end of this subsection for completeness.

We are now ready to prove \Cref{lemma:linear-program-bad-leaf-loss}.
\begin{proof}[Proof of \Cref{lemma:linear-program-bad-leaf-loss}]
We only prove the first inequality, and the second inequality follows analogously.
  
 Note that  we have
\begin{equation}\label{eq:linear-program-leaf-loss-1}
\begin{aligned}
&\frac{1}{|\+Q^{\+C\setminus \{c_0\}}|}\sum\limits_{\bm{x}\in \+Q^{\+C\setminus \{c_0\}}}\sum\limits_{\substack{(\+E,\+F,\sigma,\tau,B)\in \+L_{\trun}:\\\bm{x}\in \+Q^{\+E\land \sigma}}}\hat{p}_{(\+E,\+F,\sigma,\tau,B)}^X\\
(\star)\quad=&\sum\limits_{\bm{x}\in \+Q^{\+C\setminus \{c_0\}}}\sum\limits_{\substack{(\+E,\+F,\sigma,\tau,B)\in \+L_{\trun}:\\\bm{x}\in \+Q^{\+E\land \sigma}}}\frac{\Pr{(\+E,\+F,\sigma,\tau,B)\in P^\lptwo}}{|\+Q^{\+E\land \sigma}|}\\
(\blacktriangle)\quad=&\sum\limits_{(\+E,\+F,\sigma,\tau,B)\in \+L_{\trun}}\Pr{(\+E,\+F,\sigma,\tau,B)\in P^\lptwo},
\end{aligned}
\end{equation}
where the $\star$ equality is by \Cref{lemma:lp-procedure-2-probability-bound} and that $\+E\land \sigma\implies \+C\setminus \{c_0\}$ for each $(\+E,\+F,\sigma,\tau,B)\in V(\+T)$, following the argument in \Cref{remark:coupling-tree-well-defined}. The $\blacktriangle$ equality is still by that $\+E\land \sigma\implies \+C\setminus \{c_0\}$ and exchanging the order of summation.

Let $\left(\+E^{\lpone},\+F^{\lpone},\sigma^{\lpone},\tau^{\lpone},B^{\lpone}\right)\sim\mu^{\lpone}$ and $\left(\+E^{\lptwo},\+F^{\lptwo},\sigma^{\lptwo},\tau^{\lptwo},B^{\lptwo}\right)\sim\mu^{\lptwo}$  denote the random leaf of $V(\+T)$ generated by the two processes in \Cref{definition:procedure-root-to-leaf-path}, respectively.

Let $\mathbb{T}_K^{c_0}$ be the set of $2$-trees in $G_{\Phi}$ of size $K$ containing $c_0$. Note that following the proof of \Cref{lemma:bad-constraint-2-tree} we have $B\subseteq \mathbb{T}_K^{c_0}$ for all $(\+E,\+F,\sigma,\tau,B)\in \+L_{\trun}$. We then have

\begin{equation}\label{eq:linear-program-bad-leaf-loss-1}
\begin{aligned}
&\frac{1}{|\+Q^{\+C\setminus \{c_0\}}|}\sum\limits_{\bm{x}\in \+Q^{\+C\setminus \{c_0\}}}\sum\limits_{\substack{(\+E,\+F,\sigma,\tau,B)\in \+L_{\trun}:\\\bm{x}\in \+Q^{\+E\land \sigma}}}\hat{p}_{(\+E,\+F,\sigma,\tau,B)}^X\\
(\text{by \eqref{eq:linear-program-leaf-loss-1}})\quad = & \sum\limits_{(\+E,\+F,\sigma,\tau,B)\in \+L_{\trun}}\Pr{(\+E,\+F,\sigma,\tau,B)\in P^\lptwo}\\
= & \sum\limits_{T\in \mathbb{T}_K^{c_0}}\Pr{ B^{\lptwo}=T} \\
= & \sum\limits_{T\in \mathbb{T}_K^{c_0}}\sum\limits_{\bm{x}\in \+Q^{\+C\setminus \{c_0\}}}\Pr{\*X^{\lptwo}=\bm{x}}\cdot \Pr{B^{\lptwo}=T\mid \*X^{\lptwo}=\bm{x}}\\
(\star)\quad= & \sum\limits_{T\in \mathbb{T}_K^{c_0}}\sum\limits_{\bm{x}\in \+Q^{\+C\setminus \{c_0\}}}\Pr{\*X^{\lptwo}=\bm{x}}\cdot \Pr{B^{\lpone}=T\mid \*X^{\lpone}=\bm{x}}\\
(\text{by \Cref{corollary:lp-procedure-1-probability-bound}})\quad = & \sum\limits_{T\in \mathbb{T}_K^{c_0}}\sum\limits_{\bm{x}\in \+Q^{\+C\setminus \{c_0\}}}\Pr{\*X^{\lptwo}=\bm{x}}\cdot \sum\limits_{\substack{(\+E,\+F,\sigma,\tau,B)\in \+L_{\trun}:\\
\bm{x}\in \+Q^{\+E\setminus (\+C\setminus \{c_0\})\land \sigma}\land B=T}}\hat{p}_{(\+E,\+F,\sigma,\tau,B)}^X.
\end{aligned}
\end{equation}

 Here, the $\star$ equality is by noting that we can perfectly couple the random processes in \Cref{definition:procedure-root-to-leaf-path} with two types of initialization when $\*X^{\lpone}$ and $\*X^{\lptwo}$ are the same value in $\+Q^{\+C\setminus \{c_0\}}$.

Recall $\+A_c$ denotes the event defined in \eqref{eq:definition-ec}, i.e.
       \[
       \+A_c: \forall v\in \vbl(c), \quad\*X(v)=\vio{c}(v) \lor \*Y(v)=\vio{c}(v).
       \]
       We further note that
 \begin{align*}
&\sum\limits_{T\in \mathbb{T}_K^{c_0}}\sum\limits_{\bm{x}\in \+Q^{\+C\setminus \{c_0\}}}\Pr{\*X^{\lptwo}=\bm{x}}\cdot \sum\limits_{\substack{(\+E,\+F,\sigma,\tau,B)\in \+L_{\trun}:\\
\bm{x}\in \+Q^{\+E\setminus (\+C\setminus \{c_0\})\land \sigma}\land B=T}}\hat{p}_{(\+E,\+F,\sigma,\tau,B)}^X\\
(\star)\quad \leq &  (1-\mathrm{e}p)^{(D+1)K}\sum\limits_{T\in \mathbb{T}_K^{c_0}}\sum\limits_{\bm{x}\in \+Q^{\+C\setminus \{c_0\}}} \Pr{\*X^{\lptwo}=\bm{x}}\Pr[\*Y\sim \+P]{\bigwedge\limits_{c\in T}\+A_c\mid \*X=\bm{x}}\\
(\blacktriangle)\quad= &  (1-\mathrm{e}p)^{(D+1)K}\sum\limits_{T\in \mathbb{T}_K^{c_0}}\sum\limits_{\bm{x}\in \+Q_{\vbl(T)}} \Pr{\*X^{\lptwo}_{\vbl(T)}=\bm{x}}\cdot \Pr[\*Y\sim \+P]{\bigwedge\limits_{c\in T}\+A_c\mid \*X_{\vbl(T)}=\bm{x}}\\
(\text{by \Cref{definition:procedure-root-to-leaf-path}})\quad\leq &(1-\mathrm{e}p)^{2(D+1)K}\sum\limits_{T\in \mathbb{T}_K^{c_0}}\sum\limits_{\bm{x}\in \+Q_{\vbl(T)}} \Pr[\*X\sim \+P]{\*X_{\vbl(T)}=\bm{x}}\cdot \Pr[\*Y\sim \+P]{\bigwedge\limits_{c\in T}\+A_c\mid \*X_{\vbl(T)}=\bm{x}}\\\
\quad = &(1-\mathrm{e}p)^{2(D+1)K}\sum\limits_{T\in \mathbb{T}_K^{c_0}} \Pr[\*X,\*Y\sim \+P]{\bigwedge\limits_{c\in T}\+A_c}\\
(\blacksquare)\quad\leq &(1-\mathrm{e}p)^{2(D+1)K}\sum\limits_{T\in \mathbb{T}_K^{c_0}} p^{\frac{2}{2+\zeta}}\\
(\text{by \Cref{lemma:num-2-tree}})\quad\leq &\left(\mathrm{e}D^2\cdot p^\frac{2}{2+\zeta}\cdot (1-\mathrm{e}p)^{-2(D+1)}\right)^{K}\\
(\text{by \Cref{condition:main-condition}})\quad \leq &2^{-K},
 \end{align*}
which combining with \eqref{eq:linear-program-bad-leaf-loss-1} finishes the proof of the lemma. 

Here, the $\star$ inequality is by \Cref{item:linear-program-overflow-constraints} of \Cref{definition:linear-program}. The $\blacktriangle$ equality is by the definition of $\+A_{c}$ that the expression
\[
\Pr[\*Y\sim \+P]{\bigwedge\limits_{c\in T}\+A_c}
\]
only depends on $\*X_{\vbl(T)}$. The $\blacksquare$ inequality is by following the argument in the proof of \Cref{lemma:disjoint-prob}.
\end{proof}

We finish this subsection and the overall section by proving Lemmas \ref{lemma:lp-procedure-1-probability-bound} and \ref{lemma:lp-procedure-2-probability-bound}.

\begin{proof}[Proof of \Cref{lemma:lp-procedure-1-probability-bound}]
    We prove the lemma by a structural induction in a top-down order.
    
     The base case is when at the root node, i.e., $(\+E,\+F,\sigma,\tau,B)=(\+C\setminus \{c_0\},\+C,\varnothing,\varnothing,\emptyset)$. Note that in this case we have $\Pr{(\+E,\+F,\sigma,\tau,B)\in P^{\lpone}}=1$ and
    \[
    \P{(\+E\setminus (\+C\setminus \{c_0\}))\land \sigma}=1,\quad \hat{p}_{(\+E,\+F,\sigma,\tau,B)}^X=1,
    \]
    where the second equality is by \Cref{item:coupling-tree-2} of \Cref{definition:linear-program}. Also, it follows from \Cref{definition:procedure-root-to-leaf-path} that
    \[
        \*X^{\lpone}\sim \+P=\mu_{\+E\setminus (\+C\setminus \{c_0\})}^{\sigma}.
    \]
    The base case is proved.
    
     For the induction step, we only prove the case when $\+F^{\tau}\not\subseteq \+E^{\sigma}$. The case when $\+F^{\tau}\subseteq \+E^{\sigma}$ follows analogously.  Let $c$ be the smallest constraint in $ \+F^{\tau}\setminus \+E^{\sigma}$. From \Cref{definition:procedure-root-to-leaf-path}, we have the following cases:
     \begin{itemize}
        \item $c$ is satisfied by $\*X^{\lpone}$. In this case, we go to $(\+E\cup \{c\},\+F,\sigma,\tau,B)$. By the induction hypothesis, we have $\*X^{\lpone}\sim \mu_{\+E\setminus (\+C\setminus \{c_0\})}^{\sigma}$ and therefore this event ($c$ is satisfied by $\*X^{\lpone}$) happens with probability $\mu_{\+E\setminus (\+C\setminus \{c_0\})}^{\sigma}(c)$. Then
        \begin{align*}
     &\Pr{(\+E\cup \{c\},\+F,\sigma,\tau,B)\in P^{\lpone}}\\
     (\text{by \Cref{definition:procedure-root-to-leaf-path}})\quad=&  \Pr{(\+E,\+F,\sigma,\tau,B)\in P^{\lpone}}\cdot  \mu_{\+E\setminus (\+C\setminus \{c_0\})}^{\sigma}(c)\\
     (\text{by I.H.})\quad=&  \P{(\+E\setminus (\+C\setminus \{c_0\}))\land \sigma}\cdot \hat{p}_{(\+E,\+F,\sigma,\tau,B)}^X\cdot \mu_{\+E\setminus (\+C\setminus \{c_0\})}^{\sigma}(c)\\
     (\text{by the chain rule})\quad=& \P{((\+E\setminus (\+C\setminus \{c_0\}))\cup \{c\})\land \sigma}\cdot \hat{p}_{(\+E,\+F,\sigma,\tau,B)}^X\\
      (\star)\quad =& \P{((\+E\cup \{c\})\setminus (\+C\setminus \{c_0\}))\land \sigma}\cdot \hat{p}_{(\+E,\+F,\sigma,\tau,B)}^X\\
     (\text{by \Cref{item:linear-program-non-leaf-constraints} of \Cref{definition:linear-program}})\quad=& \P{((\+E\cup \{c\})\setminus (\+C\setminus \{c_0\}))\land \sigma}\cdot \hat{p}^X_{(\+E\cup \{c\},\+F,\sigma,\tau,B)}.
         \end{align*}
         Here, the $\star$ equality is by that each $c\in \+F^{\tau}\setminus \+E^{\sigma}$ chosen by the random process in \Cref{definition:procedure-root-to-leaf-path-2} must satisfy either
   \begin{itemize}
       \item is exactly $c_0$; 
       \item or is pinned by $\sigma$, i.e., $\vbl(c)\neq \vbl(\unpin{c})$,
   \end{itemize}
   and hence cannot be in $\+C\setminus \{c_0\}$.
   Also, conditioning on going to $(\+E\cup \{c\},\+F,\sigma,\tau,B)$, it follows that
    \[
        \*X^{\lpone}\sim \mu_{\+E\setminus (\+C\setminus \{c_0\})\cup \{c\}}^{\sigma}=\mu_{(\+E\cup \{c\})\setminus (\+C\setminus \{c_0\})}^{\sigma},        \]
        finishing the proof of this case.
         
         \item $c$ is violated by $\*X^{\lpone}$. Let $\pi=\*X^{\lpone}_{\vbl(c)}$, then we have $\pi=\vio{c}$.  By the induction hypothesis, we have $\*X^{\lpone}\sim \mu_{\+E\setminus (\+C\setminus \{c_0\})}^{\sigma}$ and this event ($c$ is violated by $\*X^{\lpone}$) happens with probability \[
         \mu_{\+E\setminus (\+C\setminus \{c_0\})}^{\sigma}(\neg c) = \mu_{\+E\setminus (\+C\setminus \{c_0\})}^{\sigma}(\pi).
         \]
         In this case, for each $\rho\in \+Q_{\vbl(c)}$ we go to $(\+E,\+F,\sigma\land \pi,\tau\land \rho,B\Join c)$ with probability $\frac{\hat{p}^X_{(\+E,\+F,\sigma\land \pi,\tau\land \rho,B\Join c)}}{\hat{p}_{(\+E,\+F,\sigma,\tau,B)}^X}$.
        Hence for each $\rho\in \+Q_{\vbl(c)}$,
          \begin{align*}
     &\Pr{(\+E,\+F,\sigma\land \pi,\tau\land \rho,B\Join c)\in P^{\lpone}}\\
     (\star)\quad=&  \Pr{(\+E,\+F,\sigma,\tau,B)\in P^{\lpone}}\cdot \mu_{\+E\setminus (\+C\setminus \{c_0\})}^{\sigma}(\pi)\cdot \frac{\hat{p}^X_{(\+E,\+F,\sigma\land \pi,\tau\land \rho,B\Join c)}}{\hat{p}_{(\+E,\+F,\sigma,\tau,B)}^X}\\
     (\text{by I.H.})\quad=&  \P{(\+E\setminus (\+C\setminus \{c_0\}))\land \sigma}\cdot \hat{p}_{(\+E,\+F,\sigma,\tau,B)}^X\cdot \mu_{\+E\setminus (\+C\setminus \{c_0\})}^{\sigma}(\pi)\cdot \frac{\hat{p}^X_{(\+E,\+F,\sigma\land \pi,\tau\land \rho,B\Join c)}}{\hat{p}_{(\+E,\+F,\sigma,\tau,B)}^X}\\
     (\blacktriangle)\quad=& \P{(\+E\setminus (\+C\setminus \{c_0\}))\land (\sigma\land \pi)}\cdot \hat{p}^X_{(\+E,\+F,\sigma\land \pi,\tau\land \rho,B\Join c)},
         \end{align*}

         where the $\star$ equality is by \Cref{definition:procedure-root-to-leaf-path} and the $\blacktriangle$ equality is by the chain rule.
    Also, conditioning on going to $(\+E,\+F,\sigma\land \pi,\tau\land \rho,B\Join c)$, it follows that
    \[
        \*X^{\lpone}\sim  \mu_{\+E\setminus (\+C\setminus \{c_0\})}^{\sigma\land \pi},
        \]
        finishing the proof of this case and the lemma.
     \end{itemize}
    \end{proof}

    \begin{proof}[Proof of \Cref{lemma:lp-procedure-2-probability-bound}]
   We prove the lemma by a structural induction in a top-down order.
    
     The base case is when at the root node, i.e., $(\+E,\+F,\sigma,\tau,B)=(\+C\setminus \{c_0\},\+C,\varnothing,\varnothing,\emptyset)$. Note that in this case we have $\Pr{(\+E,\+F,\sigma,\tau,B)\in P^{\lptwo}}=1$ and
    \[
   \mu_{\+C\setminus \{c_0\}}(\+E\land \sigma)=1,\quad \hat{p}_{(\+E,\+F,\sigma,\tau,B)}^X=1,
    \]
    where the second equality is by \Cref{item:coupling-tree-2} of \Cref{definition:linear-program}. Also, it follows from \Cref{definition:procedure-root-to-leaf-path} that
    \[
         \*X^{\lptwo}\sim \mu_{\+C\setminus \{c_0\}}=\mu_{\+E}^{\sigma}.
    \]
    The base case is proved.
    
     For the induction step, we only prove the case when $\+F^{\tau}\not\subseteq \+E^{\sigma}$. The case when $\+F^{\tau}\subseteq \+E^{\sigma}$ follows analogously.  Let $c$ 
be the smallest constraint in $\+F^{\tau}\setminus \+E^{\sigma}$. From \Cref{definition:procedure-root-to-leaf-path}, we have the following cases:
     \begin{itemize}
        \item $c$ is satisfied by $\*X^{\lptwo}$. In this case, we go to $(\+E\cup \{c\},\+F,\sigma,\tau,B)$. By the induction hypothesis, we have $\*X^{\lptwo}\sim \mu_{\+E}^{\sigma}$ and this event ($c$ is satisfied by $\*X^{\lptwo}$) happens with probability $\mu_{\+E}^{\sigma}(c)$. Then
        \begin{align*}
     &\Pr{(\+E\cup \{c\},\+F,\sigma,\tau,B)\in P^{\lptwo}}\\
     (\text{by \Cref{definition:procedure-root-to-leaf-path}})\quad=&  \Pr{(\+E,\+F,\sigma,\tau,B)\in P^{\lptwo}}\cdot \mu_{\+E}^{\sigma}(c)\\
     (\text{by I.H.})\quad=&  \mu_{\+C\setminus \{c_0\}}(\+E\land \sigma)\cdot \hat{p}_{(\+E,\+F,\sigma,\tau,B)}^X\cdot \mu_{\+E}^{\sigma}(c)\\
     (\star)\quad=& \mu_{\+C\setminus \{c_0\}}((\+E\cup \{c\})\land \sigma)\cdot \hat{p}_{(\+E,\+F,\sigma,\tau,B)}^X\\
     (\text{by \Cref{item:linear-program-non-leaf-constraints} of \Cref{definition:linear-program}})\quad=& \mu_{\+C\setminus \{c_0\}}((\+E\cup \{c\})\land \sigma)\cdot \hat{p}^X_{(\+E\cup \{c\},\+F,\sigma,\tau,B)},
         \end{align*}
     where the $\star$ equality is by the chain rule and that $\+E\land \sigma\implies \+C\setminus \{c_0\}$ for each $(\+E,\+F,\sigma,\tau,B)\in V(\+T)$, following the argument in \Cref{remark:coupling-tree-well-defined}.
   Also, conditioning on going to $(\+E\cup \{c\},\+F,\sigma,\tau,B)$, it follows that
    \[
        \*X^{\lptwo}\sim \mu_{\+E\cup \{ c\}}^{\sigma},
        \]
        finishing the proof of this case.
         
         \item $c$ is violated by $\*X^{\lptwo}$. Let $\pi=\*X^{\lptwo}_{\vbl(c)}$. By the induction hypothesis, we have $\*X^{\lptwo}\sim \mu_{\+E}^{\sigma}$ and this event ($c$ is violated by $\*X^{\lptwo}$) happens with probability \[
         \mu_{\+E}^{\sigma}(\neg c)=\mu_{\+E}^{\sigma}(\pi).
         \]
         In this case, for each $\rho\in \+Q_{\vbl(c)}$ we go to $(\+E,\+F,\sigma\land \pi,\tau\land \rho,B\Join c)$ with probability $\frac{\hat{p}^X_{(\+E,\+F,\sigma\land \pi,\tau\land \rho,B\Join c)}}{\hat{p}_{(\+E,\+F,\sigma,\tau,B)}^X}$.
        Hence, for each $\rho\in \+Q_{\vbl(c)}$,
          \begin{align*}
     &\Pr{(\+E,\+F,\sigma\land \pi,\tau\land \rho,B\Join c)\in P^{\lptwo}}\\
     (\text{by \Cref{definition:procedure-root-to-leaf-path}})\quad=&  \Pr{(\+E,\+F,\sigma,\tau,B)\in P^{\lptwo}}\cdot \mu_{\+E}^{\sigma}(\pi)\cdot \frac{\hat{p}^X_{(\+E,\+F,\sigma\land \pi,\tau\land \rho,B\Join c)}}{\hat{p}_{(\+E,\+F,\sigma,\tau,B)}^X}\\
     (\text{by I.H.})\quad=&  \mu_{\+C\setminus \{c_0\}}(\+E\land \sigma)\cdot \hat{p}_{(\+E,\+F,\sigma,\tau,B)}^X\cdot \mu_{\+E}^{\sigma}(\pi)\cdot \frac{\hat{p}^X_{(\+E,\+F,\sigma\land \pi,\tau\land \rho,B\Join c)}}{\hat{p}_{(\+E,\+F,\sigma,\tau,B)}^X}\\
     (\star)\quad =& \mu_{\+C\setminus \{c_0\}}(\+E\land (\sigma\land \pi))\cdot \hat{p}^X_{(\+E,\+F,\sigma\land \pi,\tau\land \rho,B\Join c)},
         \end{align*}
             where the $\star$ equality is by the chain rule and that $\+E\land \sigma\implies \+C\setminus \{c_0\}$ for each $(\+E,\+F,\sigma,\tau,B)\in V(\+T)$, following the argument in \Cref{remark:coupling-tree-well-defined}.
    Also, conditioning on going to $(\+E,\+F,\sigma\land \pi,\tau\land \rho,B\Join c)$, it follows that
    \[
        \*X^{\lp2}\sim \mu_{\+E}^{\sigma\land \pi},
        \]
        finishing the proof of this case and the lemma.
     \end{itemize}
    \end{proof}

\end{document}